\documentclass[12pt]{article}
\pdfoutput=1

\usepackage[margin=1in]{geometry}
\usepackage{amsthm,amsmath,amsfonts,mathtools}
\usepackage{thmtools}
\usepackage{thm-restate}
\usepackage[pagebackref]{hyperref}
\usepackage{graphicx}
\usepackage{caption}
\usepackage{subcaption}
\usepackage{float}
\restylefloat{table}
\usepackage{siunitx}
\usepackage{bbm}
\usepackage{bm}
\usepackage{comment}
\usepackage{xcolor}
\usepackage[shortlabels]{enumitem}
\usepackage{algorithm}
\usepackage{algorithmic}
\usepackage{makecell}
\usepackage{titletoc}
\usepackage{tikz}
\usepackage{array,colortbl,multirow,multicol,booktabs}
\usepackage{microtype}
\usepackage[nameinlink,capitalize]{cleveref}
\usepackage{wrapfig}
\newcommand{\footremember}[2]{%
    \footnote{#2}
    \newcounter{#1}
    \setcounter{#1}{\value{footnote}}%
}
\newcommand{\footrecall}[1]{%
    \footnotemark[\value{#1}]%
}
\hypersetup{
    pdftitle={Triply efficient shadow tomography}, 
    pdfauthor={Robbie King, David Gosset, Robin Kothari, Ryan Babbush}, 
    colorlinks=true, 
    linkcolor=blue, 
    citecolor=blue, 
    urlcolor=blue 
}

\renewcommand{\backref}[1]{}

\renewcommand{\backrefalt}[4]{%
\ifcase #1 %
\or 
[p.\ #2]%
\else 
[pp.\ #2]%
\fi}

\newtheorem{theorem}{Theorem}
\newtheorem{lemma}[theorem]{Lemma}
\newtheorem{corol}[theorem]{Corollary}

\newtheorem{definition}[theorem]{Definition}
\newtheorem{conjecture}[theorem]{Conjecture}

\newtheorem{claim}[theorem]{Claim}

\DeclareMathOperator{\Tr}{Tr}

\DeclareMathOperator{\poly}{poly}
\DeclareMathOperator{\sign}{sign}

\DeclareMathOperator{\codeg}{codeg}

\renewcommand{\varepsilon}{\epsilon}
\newcommand{\eps}{\epsilon}

\begin{document}
\title{\vspace{-3ex}Triply efficient shadow tomography}

\author{Robbie King\footremember{google}{Google Quantum AI, Venice, CA, USA}\footremember{caltech}{California Institute of Technology, Pasadena, CA, USA}
\and
David Gosset\footrecall{google} \footremember{iqc}{IQC and Department of Combinatorics and Optimization, University of Waterloo, Canada}\footremember{pi}{Perimeter Institute for Theoretical Physics, Waterloo, Canada}
\and
Robin Kothari\footrecall{google}
\and
Ryan Babbush\footrecall{google}
}
\maketitle
\begin{abstract}
Given copies of a quantum state $\rho$, a shadow tomography protocol aims to learn all expectation values from a fixed set of observables, to within a given precision $\epsilon$. We say that a shadow tomography protocol is \textit{triply efficient} if it is sample- and time-efficient, and only employs measurements that entangle a constant number of copies of $\rho$ at a time.  The classical shadows protocol based on random single-copy measurements is triply efficient for the set of local Pauli observables. 
This and other protocols based on random single-copy Clifford measurements can be understood as arising from fractional colorings of a graph $G$ that encodes the commutation structure of the set of observables. Here we describe a framework for two-copy shadow tomography that uses an initial round of Bell measurements to reduce to a fractional coloring problem in an induced subgraph of $G$ with bounded clique number. This coloring problem can be addressed using techniques from graph theory known as \textit{chi-boundedness}.
Using this framework we give the first triply efficient shadow tomography scheme for the set of local fermionic observables, which arise in a broad class of interacting fermionic systems in physics and chemistry. We also give a triply efficient scheme for the set of all $n$-qubit Pauli observables. 
Our protocols for these tasks use two-copy measurements, which is necessary: sample-efficient schemes are provably impossible using only single-copy measurements. Finally, we give a shadow tomography protocol that compresses an $n$-qubit quantum state into a $\poly(n)$-sized classical representation, from which one can extract the expected value of any of the $4^n$ Pauli observables in $\poly(n)$ time, up to a small constant error.
\end{abstract}

\section{Introduction}

In this paper we discuss protocols for shadow tomography, as introduced by Aaronson~\cite{aaronson2018shadow}, specialized to the case of Pauli observables. Let 
\begin{equation}
\mathcal{P}^{(n)}=\{\mathbbm{1},X,Y,Z\}^{\otimes n}    
\end{equation}
be the set of $n$-qubit Pauli operators and consider a subset $S\subseteq \mathcal{P}^{(n)}$. Let $\rho$ be an unknown $n$-qubit quantum state. Given copies of $\rho$, we would like to learn the expectation values $\Tr(P\rho)$ to precision $\varepsilon$ for every $P \in S$.

\begin{definition}[Pauli shadow tomography]
The shadow tomography task for a set $S\subseteq\mathcal{P}^{(n)}$ is as follows. We are given copies of an unknown $n$-qubit state $\rho$, and our goal is to output estimates $y_P$ such that, 
with high probability\footnote{Throughout this paper, we use ``with high probability'' to mean with probability at least $99\%$, say.} we have
$|y_P - \Tr(P\rho)|\leq \eps$ for all $P\in S$.
\end{definition}

One can use a very naive tomography protocol to perform this task. For a given Pauli $P\in S$, if we measure its value $O((\log{|S|})/\eps^2)$ times (using one copy of $\rho$ for each measurement), then we can ensure that the sample mean is within $\eps$ of $\mathrm{Tr}(\rho P)$ with probability at least $1-0.01/|S|$. If we follow this procedure for each of the $|S|$ Paulis in $S$, the union bound guarantees that, with high probability they will all be $\eps$-close to their true values. This algorithm uses $O((|S|\log{|S|})/\eps^{2})$ copies of the unknown state $\rho$ and is computationally efficient. 

Remarkably, Aaronson described a protocol for shadow tomography of any set of bounded observables (such as Pauli observables) that uses exponentially fewer copies of the unknown state than the naive algorithm~\cite{aaronson2018shadow, aaronson2019gentle, brandao2017quantum}. The best known scaling of the number of samples for learning $m$ general observables is $O(n (\log^2{m}) / \varepsilon^4)$ \cite{buadescu2021improved}. However, the general shadow tomography schemes suffer from two major caveats: they are explicitly exponential in computational runtime (even when the number of observables $m$ scales polynomially with $n$), and they require entangled measurements on many copies of the unknown state $\rho$ at a time.

These caveats can be avoided for certain restricted sets of observables such as low-weight Pauli operators.  For $k$-local Paulis with $k=O(1)$, there are simple and computationally efficient protocols  to learn $m$ observables with $O((\log{m}) / \eps^2)$ single-copy measurements \cite{cotler2020quantum,evans2019scalable, bonet2020nearly, jiang2020optimal, huang2020predicting}. The classical shadows framework \cite{huang2020predicting} provides a broader family of learning protocols that can also handle other sets of non-Pauli observables with single-copy measurements, notably including rank-$1$ observables which are relevant to fidelity estimation. However, classical shadows and other single-copy learning strategies become inefficient for higher weight Pauli operators. 

To go beyond low-weight Paulis one can use a shadow tomography protocol developed in \cite{huang2021information} which learns \textit{any} subset $S$ of Pauli operators using  $O((\log{|S|})/\eps^4)$ copies of $\rho$, and $\mathrm{poly}(|S|,n,1/\epsilon)$ runtime. 
The protocol proceeds in two stages. In the first stage---\textit{learning magnitudes}---one computes estimates of the magnitudes $|\mathrm{Tr}(\rho P)|$ to within $\eps/4$ error (say), for all Paulis $P\in S$. Remarkably, this can be achieved efficiently using only $O((\log{|S|})/\eps^{4})$ two-copy measurements using the well-known Bell sampling procedure \cite{montanaro2017learning}. It is based on measuring copies of $\rho\otimes \rho$ in the basis which simultaneously diagonalizes the operators $P\otimes P$ for all $P\in \mathcal{P}^{(n)}$. In the second stage---\textit{learning signs}---one computes the signs of all Paulis $P\in S$ that were estimated to have nonnegligible magnitude in the first stage. The learning signs protocol from Ref.~\cite{huang2021information} proceeds by a sequence of ``gentle measurements" on $O((\log|S|)/\eps^2)$ copies of $\rho$.

The requirement to perform joint entangled measurements on many copies of $\rho$ is a significant drawback. If the state $\rho$ is prepared by a real-world quantum computer, it may be prohibitive to require a quantum memory whose size is many times larger than the quantum system of interest.  It is natural to ask if we can achieve sample- and computational efficiency while using measurements on only a few copies of $\rho$ at a time. We call a shadow tomography protocol with these features \textit{triply efficient}.

\vspace{0.25cm}

\noindent\fbox{\begin{minipage}{\textwidth}
\vspace{0.05cm}
\paragraph{Triply efficient shadow tomography \vspace{0.4cm}}

\begin{enumerate}[]
    \item{\textit{Sample efficiency}: The number of samples scales as $\poly(\log |S|, 1/\eps)$.}
    \item{\textit{Computational efficiency}: The classical and quantum computation is $\poly(|S|,n,1/\eps)$.}
    \item{\textit{Few-copy measurements}: The algorithm uses joint measurements on a constant number of copies of $\rho$ (ideally 1 or 2).\vspace{0.2cm}}
\end{enumerate}
\end{minipage}}

\vspace{0.25cm}

In addition to the above, it would be reasonable to ask that the total quantum memory used by the shadow tomography algorithm is $O(n)$; all of the protocols we propose in this work satisfy this stronger requirement.

Is there a triply efficient shadow tomography protocol? While this question is well-posed for arbitrary subsets of observables $S\subseteq \mathcal{P}^{(n)}$, we restrict our attention to three subsets that are practically motivated and representative of the complexity of the problem:
\begin{enumerate}[]
    \item{$\mathcal{P}^{(n)}_k$: The set of $k$-local Pauli operators on $n$ qubits, where $k=O(1)$.}
    \item{$\mathcal{F}^{(n)}_k$: The set of $k$-body fermionic operators on $n$ fermionic modes, where $k=O(1)$.}
    \item{$\mathcal{P}^{(n)}$: The set of all Pauli operators on $n$ qubits.}
\end{enumerate}

Note that a triply efficient shadow tomography protocol for the set of all Paulis does not directly give one for the set of $k$-local Paulis or $k$-body fermionic operators, even though they are subsets of the set of all Paulis. This is because the complexity requirements are functions of $|S|$. Thus a triply efficient protocol for the set of all Paulis can use more samples and time than we allow for smaller subsets of Paulis.

As we describe below, $k$-local Pauli operators and $k$-body fermionic operators arise in a variety of applications in many-body physics and quantum chemistry and are one of the most common algorithmic applications of shadow tomography. 
The set of all Pauli operators is of exponential size in $n$, and perhaps not as practically relevant, but seems important to study nonetheless since it represents a general Pauli learning task. Surprising tomography algorithms are possible in this case, including an algorithm we describe that, for any constant error $\eps$,  compresses the output (of size $4^n$) into a polynomial-sized description from which an $\eps$-estimate of the expected value of any Pauli observable can be extracted efficiently.

This brings us to the main question that guided this work:
\begin{quotation}
    \noindent\emph{Do there exist triply efficient shadow tomography protocols for these observables?}
\end{quotation}

In fact, the schemes based on random single-copy measurements described in Refs.~\cite{huang2020predicting,cotler2020quantum,bonet2020nearly,jiang2020optimal,evans2019scalable} already achieve triply efficient shadow tomography for the set $\mathcal{P}^{(n)}_k$ of $k$-local Pauli operators with $k=O(1)$. In this paper we  present triply efficient shadow tomography algorithms for the set $\mathcal{F}^{(n)}_k$ of $k$-body fermionic operators for $k=O(1)$, and the set $\mathcal{P}^{(n)}$ of all $n$-qubit Pauli operators. Furthermore, our algorithms only use Clifford measurements on $2$ copies of $\rho$ at a time.

\begin{table}[ht]
\centering
\begin{tabular}{lc}
\toprule
Observables & Triply efficient shadow tomography? \\ \midrule
$k$-local Pauli operators & Yes \cite{huang2020predicting,cotler2020quantum,bonet2020nearly,jiang2020optimal,evans2019scalable} \\
$k$-body fermionic operators & \cellcolor{blue!25}Yes (\Cref{thm:majorana_informal}) \\
All Pauli operators & \cellcolor{blue!25} Yes (\Cref{thm:mmw_informal})\\ \bottomrule
\end{tabular}
\caption{A qualitative summary of triply efficient shadow tomography. Cells in blue are new results in this paper.}
\end{table}

We will see that it is impossible to perform sample-efficient shadow tomography using only single-copy measurements for the set of $k$-body fermionic operators (\Cref{thm:majorana_lower_bd_informal}), as well as for the set of all Paulis \cite{chen2022exponential}. Taken together, our protocols and the single-copy lower bounds demonstrate that two-copy measurements are necessary and sufficient for Pauli and fermionic shadow tomography. Further, we see a striking difference between local Paulis, for which single-copy measurements suffice, and local fermionic operators where entangled measurements are necessary.

\subsection{Local observables}
\label{sec:local}
In order to describe our results, let us now define the sets of local observables that are relevant to qubit and fermionic systems. Let $|P|$ denote the Pauli-weight of an operator $P\in \mathcal{P}^{(n)}$, i.e., the number of qubits on which it acts nontrivially. For example, $|X \otimes \mathbbm{1} \otimes Y \otimes \mathbbm{1} \otimes Z| = 3$.

A broad class of quantum many-body systems that arise in condensed matter physics are described by systems of spins with $k=O(1)$ particle interactions. Such systems are described by a Hamiltonian operator which can be expressed as a sum of operators from the set
\begin{equation}
\mathcal{P}^{(n)}_k=\{P\in \mathcal{P}^{(n)}: |P|=k\} \qquad \qquad \textbf{($k$-local Pauli operators)}
\label{eq:qubitlocality}
\end{equation}
of all weight-$k$ Pauli observables. Note that $|\mathcal{P}^{(n)}_k|=3^k \binom{n}{k}$ and $\log|\mathcal{P}^{(n)}_k|=O(k \log n)$. Learning all Paulis in the set $\mathcal{P}^{(n)}_k$ is quite useful---it allows one to reconstruct all the $k$-qubit reduced density matrices of the state $\rho$ and compute for example the expected value of any $k$-local Hamiltonian operator. 
So shadow tomography with the set $\mathcal{P}^{(n)}_k$ is particularly relevant for characterizing ground states of quantum spin systems with few-body interactions.

A different subset of Pauli operators describes few-body interactions between fermionic particles, such as the electronic structure of molecules. Just as before, there is a fermionic locality parameter $k$ but it is fundamentally different from the one defined by Pauli weight.  To describe it, one fixes any subset of $2n$ anticommuting $n$-qubit Pauli operators: 
\begin{equation}
c_1,c_2,\ldots, c_{2n} \in \mathcal{P}^{(n)} \qquad \forall a,b: c_ac_b+c_bc_a=2\delta_{ab} \mathbbm{1},
\label{eq:majoranas}
\end{equation}
where $\delta_{ab}$ is $1$ if $a=b$ and $0$ otherwise.
Note that since $c_a\in \mathcal{P}^{(n)}$ we also have $c_a^{\dagger}=c_a$ for each $1\leq i\leq 2n$. These are known as the Majorana fermion operators associated with a fermionic system with $n$ modes (a fermionic mode is a state that can either be occupied or unoccupied by a fermionic particle). Note that there is a freedom here---a particular choice of operators in \cref{eq:majoranas} is a ``fermion-to-qubit mapping" that describes how we associate the degrees of freedom of the $n$-mode fermionic system with those of the $n$-qubit Hilbert space. Such mappings have a long history and there are several choices that are used in practice to design algorithms for fermionic systems on a quantum computer (see, e.g., Refs~\cite{jordan1928paulische,bravyi2002fermionic, seeley2012bravyi, jiang2020optimal, derby2021compact}, and \Cref{sec:majoranas}). However, our discussion and the results described below apply to any choice of fermion-to-qubit mapping.

With our Majoranas (\cref{eq:majoranas}) in hand, let us now define the Majorana monomials as
\begin{equation}
\Gamma(x)= i^{|x|\cdot(|x|-1)/2} c_1^{x_1}c_2^{x_2}\ldots c_{2n}^{x_{2n}} \qquad \qquad \forall x\in \{0,1\}^{2n}.
\end{equation}
The overall phase factor $i^{|x|\cdot(|x|-1)/2}$ ensures that $\Gamma(x)$ is Hermitian for all $x\in \{0,1\}^{2n}$. In fact, the $4^n$ operators
\begin{equation}
\{\Gamma(x): x\in \{0,1\}^{2n}\}
\label{eq:monomial}
\end{equation}
coincide with the $4^n$ $n$-qubit Pauli operators in $\mathcal{P}^{(n)}$, up to (efficiently computable) signs. Now let us define the $k$-body fermionic operators
\begin{equation}
\mathcal{F}^{(n)}_k=\{\Gamma(x): |x|=2k\},  \qquad \qquad  \textbf{($k$-body fermionic operators)}
\end{equation}
which should be compared with \cref{eq:qubitlocality}. Note that $|\mathcal{F}^{(n)}_k|=\binom{2n}{2k}$ and  $\log |\mathcal{F}^{(n)}_k| = O(k \log n)$.
A system of $n$ fermionic modes with $k$-particle interactions is described by a Hamiltonian operator that is a sum of terms from $\mathcal{F}^{(n)}_k$. Note that $\mathcal{F}^{(n)}_k$ consists of the Majorana monomials in \cref{eq:monomial} of degree $2k$; typically only these even-degree monomials are relevant to physics and chemistry due to conservation of fermionic parity.

The notion of $k$-locality for fermions is fundamentally more expressive that of $k$-locality for spin systems; the former subsumes the latter in the sense that compact ``qubit-to-fermion mappings" exist which embed the $k$-local $n$-qubit Pauli operators within a subset of the $k$-body fermionic operators on $O(n)$ fermionic modes (see for example Ref.~\cite{bravyi2019approximation}), whereas fermion-to-qubit mappings necessarily represent the Majorana fermion operators \cref{eq:majoranas} using Pauli operators of average weight at least $\Omega(\log(n))$ \cite{jiang2020optimal}.

The expectation values of the fermionic operators $\mathcal{F}^{(n)}_k$ comprise the matrix elements of what physicists and chemists refer to as the $k$-body reduced density matrix, or $k$-RDM.
One can efficiently compute most physically relevant local observables of a fermionic system (e.g., dipole moment, charge density, and importantly---energy) from just the 1- and 2-RDMs as a consequence of fermions being identical particles that interact pairwise. A large body of literature exists on methods for computing the fermionic 2-RDM matrix elements as a means of estimating the energy of chemical systems during the course of a quantum variational algorithm (see e.g., \cite{IzmaylovClique,ChanRDMTomo,bonet2020nearly,ZhaoShadow}).

A number of important methods for post-processing the output of quantum simulations actually \emph{require} measuring $k$-RDMs. For example, subspace expansion techniques for approximating excited states from ground states via linear response typically require the 4-RDM \cite{mcclean2017,GeneralizedSubspace}. 
Perturbation theory \cite{ChanChromium,sharma2017} and multi-reference configuration interaction methods \cite{Takeshita2020} for relaxing ground state calculations in small basis sets towards their continuum (large basis) limit often require the 4-RDM but converge even faster given access to higher order RDMs. Finally, there are popular impurity model schemes for extrapolating finite simulations of condensed phase fermionic systems towards their thermodynamic limits (e.g., density matrix embedding theory \cite{Wouters2016}) and hybrid quantum-classical schemes for quantum Monte Carlo \cite{Huggins2022}, that require the full 1-RDM.

\subsection{Results}

\paragraph{Single-copy measurements.}
It would be very practical if we could achieve triply efficient Pauli shadow tomography using only measurements on one copy of $\rho$ at a time---and for $k$-local Paulis, it can be done. 
The shadow tomography task for $\mathcal{P}^{(n)}_k$ can be performed, using a time-efficient algorithm, using only \textit{single-copy} measurements on 
$O(3^k (\log |\mathcal{P}^{(n)}_k|)/\eps^2) = O(3^k (k \log n)/\eps^2)$ copies of $\rho$ \cite{huang2020predicting,cotler2020quantum,bonet2020nearly,jiang2020optimal,evans2019scalable}.

One might hope that we can similarly achieve triply efficient Pauli shadow tomography for $\mathcal{F}^{(n)}_k$ and $\mathcal{P}^{(n)}$ as well. Unfortunately, this is not possible in either case, even if we only care about sample efficiency and allow unbounded computation time.
It was shown in Ref \cite{chen2022exponential} that sample-efficient shadow tomography with single-copy measurements is impossible for the set of all Paulis.

\begin{theorem}[\cite{chen2022exponential}]
There is no sample-efficient shadow tomography protocol with single-copy measurements for the set $S=\mathcal{P}^{(n)}$ of all Paulis. In particular, any protocol based on single-copy measurements must consume $\Omega(2^n / \varepsilon^2)$ copies of $\rho$.
\label{thm:notallpaulis}
\end{theorem}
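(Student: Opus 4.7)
The plan is to reduce sample-efficient shadow tomography of all Paulis to a two-point hypothesis testing problem, then lower-bound the number of single-copy samples via an adaptive-tree likelihood-ratio argument.

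First, I would set up the testing problem. Let the null hypothesis be $\rho_0=\mathbbm{1}/2^n$, under which every nontrivial Pauli expectation is zero. For the alternative, draw a uniformly random non-identity Pauli $P\in \mathcal{P}^{(n)}\setminus\{\mathbbm{1}\}$ and a uniform sign $s\in\{\pm 1\}$, and set $\rho_{P,s}=(\mathbbm{1}+3\varepsilon s P)/2^n$. Then $\Tr(P\rho_0)=0$ while $\Tr(P\rho_{P,s})=3\varepsilon s$, so any algorithm solving Pauli shadow tomography to error $\varepsilon$ must, with high probability, recover enough information about the hidden $(P,s)$ to distinguish the two hypotheses. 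Thus a sample-complexity lower bound for this testing problem yields the theorem.

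Second, I would represent any adaptive protocol that uses $T$ single-copy measurements as a rooted tree of depth $T$: the node at depth $t$ is labelled by a POVM $\{M_z^{(t)}\}$ (possibly depending on the outcomes seen so far), and the edges are labelled by outcomes $z$. Each leaf $\ell$ carries a probability $q_\rho(\ell)=\prod_{t=1}^T \Tr(M_{z_t}^{(t)}\rho)$ for each hypothesis $\rho$. By Le Cam's two-point method, to solve the testing problem it must be that $\|q_{\rho_0}-\mathbb{E}_{P,s} q_{\rho_{P,s}}\|_{\mathrm{TV}}=\Omega(1)$, which via Pinsker or the chi-squared inequality reduces to showing that the $\chi^2$-divergence between the mixture and $q_{\rho_0}$ is $\Omega(1)$.

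Third, I would bound this divergence using the Pauli-orthogonality identity $\sum_{P\in \mathcal{P}^{(n)}}\Tr(MP)^2 = 2^n \Tr(M^2)$ for any operator $M$. The uniform sign kills the cross-hypothesis linear terms, so the chi-squared reduces to a product over the $T$ measurements of single-step ``information gains'' of the form
\begin{equation}
\mathbb{E}_{P} \sum_{z} \frac{\Tr(M_z^{(t)} P)^2}{\Tr(M_z^{(t)})} \cdot 2^n \leq \frac{2}{2^n}\sum_z \frac{\Tr(M_z^{(t)\,2})}{\Tr(M_z^{(t)})/2^n},
\end{equation}
which, using $\Tr(M_z^{(t)\,2})\le \|M_z^{(t)}\|_{\mathrm{op}}\Tr(M_z^{(t)})\le \Tr(M_z^{(t)})$ and $\sum_z \Tr(M_z^{(t)})=2^n$, yields a per-sample contribution of order $\varepsilon^2/2^n$. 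Multiplying across the $T$ levels gives an overall $\chi^2$ bound of $O(T\varepsilon^2/2^n)$, so distinguishing requires $T=\Omega(2^n/\varepsilon^2)$.

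The main obstacle is the final step: a naive per-sample bound holds only for non-adaptive, projective rank-one POVMs, whereas an adaptive protocol may use outcome-dependent POVMs of high rank at each step, and the product form of $q_\rho$ must be unrolled carefully so that the Pauli-orthogonality average can be applied uniformly along all root-to-leaf paths. Following Chen et al.~\cite{chen2022exponential}, I would handle this by tracking a martingale likelihood ratio along the tree and invoking the two-Pauli ``mean zero after sign'' trick to decouple adaptivity from the worst-case POVM analysis.
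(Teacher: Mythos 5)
First, note that the paper does not prove this statement itself: it is quoted from \cite{chen2022exponential}, and within the paper it also follows immediately from the quoted \Cref{thm:CCHL22} together with the observation that $\Delta(\mathcal{P}^{(n)})=O(2^{-n})$, since $\sum_{P\in\mathcal{P}^{(n)}}\Tr(P\rho)^2=2^n\Tr(\rho^2)\le 2^n$. Your overall scaffolding (hard family $\rho_{P,s}=(\mathbbm{1}+3\eps sP)/2^n$ versus $\mathbbm{1}/2^n$, the adaptive measurement tree, the Pauli second-moment identity, and the final $\Omega(2^n/\eps^2)$ count) is the right one and matches the cited proof in spirit.

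However, your central quantitative step is wrong as stated: the $\chi^2$-divergence between $q_{\rho_0}$ and the mixture $\mathbb{E}_{P,s}\,q_{\rho_{P,s}}$ does \emph{not} factor into per-step gains of order $\eps^2/2^n$, and in fact it cannot be bounded by $O(T\eps^2/2^n)$ even for non-adaptive protocols. The $\chi^2$ of a mixture is a second moment over \emph{pairs} of hypotheses, and on the diagonal pairs $(P,s)=(P',s')$ the random sign does not cancel anything: for computational-basis measurements and $P=Z^{\otimes n}$ the diagonal contribution is $\approx 4^{-n}(1+\eps^2)^T\approx 4^{-n}e^{\eps^2 T}$, which is astronomically large already when $T=2^n/(100\eps^2)$, a regime where the true total variation distance is provably $O(1/100)$. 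So the likelihood ratio has a huge second moment even when TV is small, and no amount of care with adaptivity or POVM rank rescues the $\chi^2$ route; your closing appeal to a ``martingale likelihood ratio'' is gesturing at the correct mechanism but does not supply it. The argument of \cite{chen2022exponential} avoids second moments entirely: after refining each POVM into rank-one elements $w_z\lvert\psi_z\rangle\langle\psi_z\rvert$ (a lossless reduction), one lower-bounds the sign-averaged likelihood ratio pointwise on every leaf via $\tfrac12\bigl[\prod_t(1+\eps x_t)+\prod_t(1-\eps x_t)\bigr]\ge\prod_t(1-\eps^2x_t^2)\ge 1-\eps^2\sum_t x_t^2$ with $x_t=\langle\psi_{z_t}\rvert P\lvert\psi_{z_t}\rangle$, then averages over $P$ using $\mathbb{E}_P\,\langle\psi\rvert P\lvert\psi\rangle^2\le (2^n-1)/(4^n-1)$, which holds for every state on every root-to-leaf path and hence handles adaptivity for free; this gives $\mathrm{TV}\le O(T\eps^2/2^n)$ directly and yields $T=\Omega(2^n/\eps^2)$. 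Replacing your $\chi^2$ step by this one-sided likelihood-ratio bound (or by invoking \Cref{thm:CCHL22} with the commutation-index bound above) closes the gap.
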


For $k$-body fermionic operators, one can also establish a lower bound, see \Cref{sec:lower_bound}.\footnote{Theorem 1 of \cite{bonet2020nearly} contains a lower bound for single-copy non-adaptive Clifford measurements; on the other hand, \Cref{thm:majorana_lower_bd_informal} applies to arbitrary and even adaptive single-copy measurements.}

\begin{theorem} \label{thm:majorana_lower_bd_informal}
There is no sample-efficient single-copy shadow tomography protocol for the set $\mathcal{F}^{(n)}_k$ of $k$-body fermionic operators. In particular, for $k=O(1)$, any protocol based on single-copy measurements must consume $\Omega(n^k / \varepsilon^2)$ copies of $\rho$.
\label{thm:notfermions}
\end{theorem}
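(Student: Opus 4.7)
The plan is to adapt the adaptive single-copy tree lower bound that underlies \Cref{thm:notallpaulis} to the fermionic setting. The hard instance is the family $\rho_S = (I + \varepsilon \Gamma(S))/2^n$ indexed by $S \in \binom{[2n]}{2k}$ together with the null state $\rho_0 = I/2^n$. Any shadow-tomography protocol for $\mathcal{F}_k^{(n)}$ accurate to additive error $\varepsilon/3$ on every observable must, with constant probability, identify which $\Gamma(S)$ is ``marked'' when the state is $\rho_S$, since $\Tr(\Gamma(S)\rho_S)=\varepsilon$ while $\Tr(\Gamma(T)\rho_S)=0$ for $T\neq S$; hence it must statistically distinguish $\rho_0$ from a uniformly drawn $\rho_S$.

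Next I would invoke the tree representation of adaptive single-copy protocols: without loss of generality an $N$-sample protocol is a rooted tree of depth $N$ whose internal nodes $v$ carry rank-1 POVMs $\{\alpha_{v,j}|\psi_{v,j}\rangle\langle\psi_{v,j}|\}$ with $\sum_j \alpha_{v,j}=2^n$. The per-step chi-squared divergence at node $v$ between outcome distributions under $\rho_S$ and $\rho_0$ equals
\begin{equation*}
\chi^2\bigl(p^S_v \,\|\, p^0_v\bigr) = \frac{\varepsilon^2}{2^n}\sum_j \alpha_{v,j}\,\langle\psi_{v,j}|\Gamma(S)|\psi_{v,j}\rangle^2,
\end{equation*}
and a standard chain-rule / likelihood-ratio martingale bound controls the leaf-distribution chi-squared between $\mathbb{E}_S\bigl[\Pr[\cdot\mid\rho_S^{\otimes N}]\bigr]$ and $\Pr[\cdot\mid\rho_0^{\otimes N}]$ by $N$ times the worst-case $S$-averaged per-step value.

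The technical crux is the Frobenius-type bound
\begin{equation*}
\max_{|\psi\rangle}\;\sum_{|S|=2k}\langle\psi|\Gamma(S)|\psi\rangle^2 \;\leq\; O(n^k).
\end{equation*}
Combined with $\sum_j \alpha_{v,j}=2^n$ and $\binom{2n}{2k}=\Theta(n^{2k})$, this yields an averaged per-step chi-squared of $O(\varepsilon^2/n^k)$. For $k=1$ the bound is immediate: the covariance matrix $M_{ab}=-i\langle\psi|c_ac_b|\psi\rangle$ is a real antisymmetric $2n\times 2n$ matrix with singular values at most $1$, so $\|M\|_F^2\leq 2n$. For general $k$ I would interpret the sum as the squared Hilbert-Schmidt norm of the degree-$2k$ projection of $|\psi\rangle\langle\psi|$ in the Clifford algebra; for fermionic Gaussian states Wick's theorem evaluates it as $\sum_{|S|=2k}\mathrm{Pf}(M[S])^2 = e_k(\lambda_1^2,\ldots,\lambda_n^2)\leq\binom{n}{k}$, and the general case should follow from operator-norm bounds on $\sum_{|S|=2k}\Gamma(S)\otimes\Gamma(S)$ after symmetrizing with the matchgate (fermionic Clifford) group, which preserves the degree-$2k$ layer.

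Combining the Frobenius bound with the chain rule and Le Cam's two-point method then yields $N = \Omega(n^k/\varepsilon^2)$. The main obstacle is the degree-$2k$ Frobenius bound for general non-Gaussian pure states: the naive estimate $|\langle\psi|\Gamma(S)|\psi\rangle|\leq 1$ only yields $\binom{2n}{2k}=\Theta(n^{2k})$, which is a factor of $n^k$ too weak, so one must genuinely exploit fermionic structure (either via the matchgate symmetry above or via a direct combinatorial / Pfaffian identity controlling the higher Majorana moments of an arbitrary state).
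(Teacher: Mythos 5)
Your high-level architecture is actually the same as the paper's: the paper reduces everything to the statement that the commutation index satisfies $\Delta(\mathcal{F}^{(n)}_k)=\frac{1}{\binom{2n}{2k}}\max_\rho\sum_{|x|=2k}\Tr(\Gamma(x)\rho)^2=O(n^{-k})$ --- which is exactly your ``Frobenius-type bound'' $\max_{\psi}\sum_{|S|=2k}\langle\psi|\Gamma(S)|\psi\rangle^2\leq O(n^k)$ --- and then invokes the adaptive single-copy lower bound of \cite{chen2022exponential} (Theorem 5.5), which is the black-box version of the tree/likelihood-ratio machinery you sketch. Your $k=1$ argument via the antisymmetric correlation matrix is correct and does give $\Omega(n/\eps^2)$ for one-body observables.

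The genuine gap is exactly the step you flag: the degree-$2k$ bound for \emph{arbitrary} (non-Gaussian) states. This is not something that ``should follow'' from symmetrization --- it was an open problem (Conjecture 4.13 of \cite{hastings2022optimizing}), and the paper obtains it only by combining \Cref{lem:commutation_graph} (commutation index $\leq\vartheta(G(S))/|S|$) with the recent theorem of Linz \cite{linz2024systems} that $\vartheta(G(\mathcal{F}^{(n)}_k))=\Theta(n^k)$ for generalized Johnson graphs. Your proposed mechanism, an operator-norm bound on $\sum_{|S|=2k}\Gamma(S)\otimes\Gamma(S)$ after matchgate symmetrization, cannot work as stated: that operator has norm $\Theta(n^{2k})$, witnessed by fermionic EPR-type states entangled across the two copies, for which Wick's theorem makes every term contribute $\pm1$ with aligned signs; the whole difficulty is the restriction to product test states $|\psi\rangle\otimes|\psi\rangle$, and averaging over the matchgate group does not remove it (nor does it reduce general states to Gaussian ones, since the quantity is quadratic in $|\psi\rangle\langle\psi|$). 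A secondary, fixable issue: the chain rule you assert --- bounding the chi-squared divergence of the $S$-mixture of leaf distributions against the null by $N$ times the per-step $S$-averaged chi-squared --- is not a valid inequality in general (mixture-vs-null chi-squared involves cross terms over pairs $S,S'$); the correct many-vs-one argument is the more delicate martingale analysis inside \cite{chen2022exponential}, which the paper simply cites rather than re-proves.
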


Several efficient shadow tomography algorithms based on single-copy measurements are known which achieve the $\Omega(n^k/\eps^2)$ lower bound, up to a log factor~\cite{bonet2020nearly, jiang2020optimal, wan2022matchgate, huggins2022nearly}.

The lower bound in \Cref{thm:majorana_lower_bd_informal} demonstrates a significant difference between learning local fermionic observables and local Pauli observables; for local Paulis there are sample-efficient single-copy protocols, whereas this is impossible for local fermionic observables.

In order to achieve triply efficient shadow tomography for $\mathcal{F}^{(n)}_k$ and $\mathcal{P}^{(n)}$, we will need to measure 2 or more copies of $\rho$ at a time. Before jumping into 2-copy measurements, we review the 1-copy algorithm for $k$-local Pauli operators and offer a new interpretation in terms of fractional graph colorings that will be used in our algorithms.

The classical shadows single-copy protocol for $k$-local Paulis is very simple: it is based on  measuring each qubit of $\rho$ (in each copy of $\rho$) in a random single-qubit Pauli basis $X,Y$ or $Z$ uniformly at random \cite{huang2020predicting}. 
The postprocessing of the measurement data to compute expected values is equally simple. 

The high-level format of this protocol is as follows: one selects a Clifford basis at random according to some probability distribution $p$, and then measures in that basis. The distribution has the property that each of the Paulis $P$ in the set $S$ of observables of interest (in the above, $S=\mathcal{P}^{(n)}_k$) has a high chance (at least $3^{-k})$ of being diagonal in a basis sampled from $p$. 
Every time we pick a basis in which a Pauli is diagonal, we learn some information about its expected value and hence the sample complexity of the protocol is inversely related to the  probability of being diagonal in a randomly sampled basis. 

We reinterpret this measurement strategy, and other protocols based on random single-copy Clifford measurements, as arising from fractional colorings of the commutation graph  of the observables $S$, defined as follows:
\begin{definition} \label{def:commutation_graph}
The commutation graph $G(S)$ of a set $S \subseteq \mathcal{P}^{(n)}$ of Pauli operators is the graph with vertex set $S$ and an edge between every pair of anticommuting operators.
\end{definition}

An independent set in $G(S)$ corresponds to a set of commuting observables that can be measured simultaneously via a Clifford measurement. Similarly, a coloring of this graph with $\chi$ colors describes a learning strategy with deterministic single-copy Clifford measurements, based on measuring $\chi$ disjoint sets of commuting Pauli observables. Such deterministic graph coloring strategies for learning Pauli observables have been explored previously, see for example Ref.~\cite{jena2019pauli, IzmaylovClique}.  But the protocol for local Pauli observables described above is not based on a coloring of the commutation graph $G(S)$: the measurement bases are associated with \textit{overlapping} sets of commuting Pauli observables, and are chosen randomly rather than deterministically. As we will see, a probabilistic Clifford measurement strategy can be viewed as defining a \textit{fractional coloring} of $G(S)$. A fractional coloring is a well-studied relaxation of the notion of graph coloring, see \Cref{sec:fractional_colorings} for details. We show that the sample complexity of single-copy learning with Clifford measurements is upper bounded by the fractional chromatic number of $G(S)$, which is the size of the smallest fractional coloring of $G(S)$. In \Cref{sec:fractional_colorings}, we prove the following theorem.

\begin{restatable}{theorem}{fractional}
Let $S\subseteq \mathcal{P}^{(n)}$. Suppose the commutation graph $G(S)$ admits a fractional coloring of size $\chi$ that can be sampled by a classical randomized algorithm with runtime $T$. Then there is an algorithm using only single-copy Clifford measurements of $\rho$ which can estimate $\Tr(P \rho)$ within error $\eps$ for all $P \in S$ with high probability using
\begin{equation}
O(\chi (\log{|S|})/\eps^2)
\label{eq:numfrac}
\end{equation}
copies of $\rho$. The runtime of the algorithm is $O((T+n^3)\cdot \chi (\log{|S|})/\epsilon^{2}) $.
\label{thm:fractional}
\end{restatable}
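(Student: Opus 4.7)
The plan is to translate the fractional coloring of $G(S)$ into a randomized Clifford measurement protocol, and then bound its sample and time cost by a two-step concentration argument. Recall that a fractional coloring of $G(S)$ of size $\chi$ can equivalently be described as a probability distribution $\mu$ over independent sets of $G(S)$ such that every vertex $P \in S$ satisfies $\Pr_{I\sim\mu}[P\in I]\geq 1/\chi$. By assumption $\mu$ can be sampled in time $T$. Since an independent set in $G(S)$ is a family of mutually commuting Paulis, for any $I$ in the support of $\mu$ there is a Clifford unitary $C_I$ that simultaneously diagonalizes every $P\in I$, mapping each such $P$ to a tensor product of $Z$'s and $\mathbbm{1}$'s; moreover $C_I$ can be computed and compiled into $O(n^2)$ elementary gates in $O(n^3)$ time by standard stabilizer tableau manipulation.

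The protocol is then as follows. Repeat $N = C \chi (\log|S|)/\eps^2$ times, for a sufficiently large constant $C$: (i) sample $I \sim \mu$; (ii) compute $C_I$ and apply it to a fresh copy of $\rho$; (iii) measure in the computational basis to obtain a string $z$; (iv) for each $P \in I$, record the single-shot estimate $\hat{x}_P^{(t)} \in \{\pm 1\}$ obtained by reading off the product of the $Z$-eigenvalues on the qubits acted on nontrivially by $C_I P C_I^\dagger$. By construction $\mathbb{E}[\hat{x}_P^{(t)} \mid P \in I] = \Tr(P\rho)$. At the end, the estimator $y_P$ is the empirical average of those $\hat{x}_P^{(t)}$ from the rounds in which $P \in I$ (and, say, $y_P = 0$ if $P$ was never sampled).

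The analysis uses two layers of concentration. First, for each fixed $P\in S$, the count $N_P = |\{t : P\in I_t\}|$ is a sum of independent Bernoulli variables with mean at least $N/\chi$. A Chernoff bound gives $N_P \geq N/(2\chi) = \Omega((\log|S|)/\eps^2)$ except with probability at most $0.01/(2|S|)$. Second, conditioned on the set of rounds in which $P$ was measured, the $\hat{x}_P^{(t)}$ are i.i.d.\ $\pm 1$ variables with mean $\Tr(P\rho)$, so by Hoeffding $|y_P - \Tr(P\rho)| \leq \eps$ fails with probability at most $0.01/(2|S|)$ once $N_P = \Omega((\log|S|)/\eps^2)$ with a sufficiently large constant. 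A union bound over both events and over all $P\in S$ delivers the claimed correctness with probability $\geq 0.99$. The runtime bookkeeping is direct: each round costs $T$ to sample, $O(n^3)$ to construct and apply $C_I$, and $O(|I|\cdot n) = O(n^2)$ to read off the single-shot estimates, giving the overall $O((T+n^3)\cdot \chi(\log|S|)/\eps^2)$ bound.

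The main obstacle is conceptual rather than technical: one has to recognize that a probabilistic Clifford measurement strategy is exactly captured by a distribution over independent sets of $G(S)$, and that the only guarantee needed to drive the sample complexity is a uniform lower bound of $1/\chi$ on the per-vertex coverage probability, which is the defining property of a fractional coloring of size $\chi$. Once this is in place, the rest is a routine two-level Chernoff/Hoeffding argument combined with the standard fact that commuting Paulis admit a simultaneously diagonalizing Clifford computable in $O(n^3)$ time.
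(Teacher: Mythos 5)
Your protocol is exactly the paper's: sample an independent set from the fractional coloring, measure a fresh copy of $\rho$ in a Clifford basis that simultaneously diagonalizes it, and aggregate the per-round $\pm 1$ outcomes for each $P$. Where you diverge is in the estimator and the concentration bookkeeping. The paper runs $L=O(\log|S|)$ independent blocks of $N=O(\chi/\eps^2)$ rounds each, uses Chebyshev within a block to get a constant-probability $\eps$-estimate per Pauli, and then boosts via a median-of-means over the $L$ blocks before union bounding; you instead run a single pass of $N=O(\chi(\log|S|)/\eps^2)$ rounds, use a Chernoff bound to guarantee every $P$ is covered $\Omega((\log|S|)/\eps^2)$ times except with probability $0.01/(2|S|)$, and then apply Hoeffding to the plain empirical mean conditioned on the covered rounds. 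Both are standard and both give the stated sample complexity; your route is arguably more direct (no median-of-means machinery), while the paper's block structure only needs second-moment concentration within a block and keeps the per-block analysis independent of $|S|$. Your conditional-independence step is sound: conditioned on the set of rounds with $P\in I_t$, the recorded outcomes for $P$ are independent $\pm1$ variables with mean $\Tr(P\rho)$ regardless of which basis was drawn, so Hoeffding applies. One small slip in your runtime accounting: an independent set $I$ of commuting Paulis can have size up to $\min(|S|,2^n)$, so the per-round readout cost is not $O(n^2)$ in general but rather scales with $|I|$ (times the cost of conjugating each $P\in I$ through $C_I$); the paper absorbs this postprocessing using the convention that samples from the coloring are length-$|S|$ indicator vectors, so $T\geq|S|$, and you should make the same appeal rather than claim $O(|I|\cdot n)=O(n^2)$. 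This does not affect correctness of the sample-complexity claim, which is the heart of the theorem.
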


The single-copy measurement strategies based on fractional colorings from \Cref{thm:fractional} have the special feature that they only use Clifford measurements, which have efficient classical descriptions. Such protocols learn a \textit{compressed classical representation} of $\rho$ that can be used to compute Pauli observables, see \Cref{sec:fractional_colorings} for details.

\paragraph{The power of two copies.}
In light of Theorems \ref{thm:notallpaulis} and \ref{thm:notfermions} we see that there exist sets of Pauli observables for which sample-efficient shadow tomography cannot be achieved with one-copy measurements.  Are two-copy measurements enough? 

Our starting point here is a new algorithm that shows that sample-efficient shadow tomography is indeed possible in the general case with two-copy measurements. The protocol has three steps.

The first step is the learning magnitudes subroutine from Ref.~\cite{huang2021information}, which we now review. In this step we perform Bell sampling to measure copies of $\rho\otimes \rho$ in the Clifford basis that diagonalizes the commuting Pauli observables $P\otimes P$ for all $P\in \mathcal{P}^{(n)}$.  Since these operators commute, we can learn the all observables $\Tr((P \otimes P)(\rho \otimes \rho))=\Tr(\rho P)^2$ for $P \in S$ to error $\delta$ using $O((\log |S|)/\delta^2)$ measurements. By choosing $\delta=\Theta(\eps^2)$, we see that $O((\log|S|)/\eps^{4})$ two-copy measurements suffices to compute estimates $\{u_P\}_{P\in S}$ such that 
\begin{equation}
|u_P-|\mathrm{Tr}(\rho P)||\leq \eps/4 \qquad \text{ for all } \quad P\in S,
\label{eq:epserr}
\end{equation}
with high probability. After we have learned the magnitudes in this way, we may find that some of our estimates are negligible; if our estimate $u_P$ from the first stage is less than $3\eps/4$ then $0$ is an $\eps$-approximation to the expected value $\mathrm{Tr}(\rho P)$ and we can forget about this Pauli $P$ going forward. So in the second stage of the algorithm we are only concerned with observables in the set
\begin{equation}\label{eq:Seps}
S_\eps = \{P\in S: |u_P|\geq 3\eps/4\}.
\end{equation} 
This set $S_\eps$ is a random variable determined by the output of Bell sampling, but the condition in \cref{eq:epserr} implies that with high probability we have
\begin{equation}\label{eq:seps}
    |\Tr(\rho P)| \geq \eps/2 \quad \text{for all} \quad P\in S_{\eps}.
\end{equation}
To complete the learning task it suffices to then compute the sign of $\mathrm{Tr}(\rho P)$ for all Paulis $P\in S_\eps$. To this end, in the second step of the protocol we compute a classical description of a \textit{mimicking state} $\sigma$ such that
\begin{equation}
|\mathrm{Tr}(\sigma P)|\geq \eps/4 \quad \text{for all} \quad P\in S_{\eps}.
\label{eq:mimicking}
\end{equation}
A key observation is that, if  \cref{eq:seps} holds (which occurs with high probability), then (A) there always exists at least one mimicking state $\sigma$ (for instance $\sigma=\rho$ is a valid mimicking state), and (B) given the set $S_{\eps}$, a mimicking state can be found without using any additional copies of $\rho$, by brute-force search. So this second step of the protocol, while computationally inefficient, can be performed without using any additional copies of $\rho$. In the final, third step of the protocol, we now perform Bell sampling on the tensor product $\rho\otimes \sigma$. This will require us to repeatedly prepare the mimicking state $\sigma$ on our quantum computer.

The resulting samples are used to estimate mean values
\begin{equation}
\mathrm{Tr}(P\otimes P (\rho\otimes \sigma))=\mathrm{Tr}(P\rho)\mathrm{Tr}(P\sigma) \quad P\in S_{\eps}.
\label{eq:meansigrho}
\end{equation}
Since $\sigma$ is a mimicking state and satisfies \cref{eq:mimicking}, each of these mean values has magnitude at least $\eps^2/8$ and we can compute all of them up to $\eps^2/16$ error using only $O((\log |S|)/\eps^{4})$ Bell samples. Since $\sigma$ is known to us, we can compute $\mathrm{Tr}(\sigma P)$ exactly (including the sign), and using this knowledge and our estimates of the mean values in \cref{eq:meansigrho} we can infer the sign of each mean value $\mathrm{Tr}(\rho P)$ for $P\in S_{\eps}$. 

In summary, we have described a sample-efficient shadow tomography protocol for any subset of $n$-qubit Pauli observables, that only uses two-copy Clifford measurements.

\begin{theorem}\label{thm:2copy}
There exists a shadow tomography protocol for any subset $S\subseteq \mathcal{P}^{(n)}$ of Pauli observables, that uses only $O((\log{|S|})/\eps^4)$ two-copy Clifford measurements.
\end{theorem}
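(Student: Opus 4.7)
The plan is to formalize the three-stage protocol sketched above: (i) magnitude estimation by Bell sampling on $\rho\otimes\rho$; (ii) classical computation of a mimicking state $\sigma$; (iii) sign recovery by Bell sampling on $\rho\otimes\sigma$. Each Bell measurement on $2n$ qubits is implemented by a tensor product of two-qubit Bell-basis circuits followed by computational-basis readout, hence Clifford, so the whole protocol uses only two-copy Clifford measurements. I would verify the sample complexity of each stage and combine them via a union bound, taking care that the state $\sigma$ used in stage (iii) is a random variable depending on the outcomes of stage (i).

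For stage (i), the key observation is that $\{P\otimes P : P\in\mathcal P^{(n)}\}$ is a mutually commuting family of $\pm 1$-valued observables, simultaneously diagonalized by the $n$-qubit Bell basis, with $\mathrm{Tr}((P\otimes P)(\rho\otimes\rho))=\mathrm{Tr}(\rho P)^2$. A Hoeffding bound together with a union bound over $|S|$ observables shows that $O((\log|S|)/\delta^2)$ Bell samples suffice to estimate every $\mathrm{Tr}(\rho P)^2$ to precision $\delta$. Setting $\delta=\epsilon^2/16$ and outputting $u_P=\sqrt{\max(\hat t_P,0)}$, together with the elementary bound $|\sqrt a-\sqrt b|\leq\sqrt{|a-b|}$, gives $|u_P-|\mathrm{Tr}(\rho P)||\leq\epsilon/4$ for all $P\in S$ simultaneously using $O((\log|S|)/\epsilon^4)$ samples.

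For stage (ii), I would condition on the high-probability event that stage (i) succeeds. Then any $P\notin S_\epsilon$ satisfies $|\mathrm{Tr}(\rho P)|<\epsilon$, so $y_P=0$ is a valid output; while \eqref{eq:seps} guarantees that $\sigma=\rho$ is itself a mimicking state in the sense of \eqref{eq:mimicking}. Hence at least one mimicking state exists, and a brute-force search over a sufficiently fine net of $n$-qubit density matrices produces one without consuming additional copies of $\rho$. We keep an explicit classical description of $\sigma$, so that $\mathrm{Tr}(\sigma P)$ and in particular its sign can be computed exactly.

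For stage (iii), each Bell measurement on $\rho\otimes\sigma$ yields $\pm 1$ outcomes whose expectations equal $\mathrm{Tr}((P\otimes P)(\rho\otimes\sigma))=\mathrm{Tr}(\rho P)\,\mathrm{Tr}(\sigma P)$, and another Hoeffding and union bound produces estimates $\hat m_P$ to precision $\epsilon^2/16$ from $O((\log|S|)/\epsilon^4)$ samples. For $P\in S_\epsilon$ the true product has magnitude at least $(\epsilon/2)(\epsilon/4)=\epsilon^2/8$, so $\hat m_P$ inherits the correct sign; dividing by the known sign of $\mathrm{Tr}(\sigma P)$ recovers $\mathrm{sign}(\mathrm{Tr}(\rho P))$, and then $y_P=\mathrm{sign}(\mathrm{Tr}(\rho P))\,u_P$ has error at most $\epsilon/4$. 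The one subtlety is the adaptive dependence of stage (iii) on stage (i), which I would handle by conditioning on the stage (i) success event before applying concentration to the fresh Bell samples of stage (iii); everything else is a routine combination of Hoeffding's inequality, the square-root approximation, and the triangle inequality, and the total sample count is $O((\log|S|)/\epsilon^4)$ as claimed.
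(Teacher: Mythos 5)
Your proposal is correct and follows essentially the same route as the paper: Bell sampling on $\rho\otimes\rho$ with $\delta=\Theta(\eps^2)$ to get magnitude estimates $u_P$, existence of a mimicking state (e.g.\ $\sigma=\rho$) found by brute force without consuming copies of $\rho$, and Bell sampling on $\rho\otimes\sigma$ to recover signs from products of magnitude at least $\eps^2/8$. The extra care you take with $u_P=\sqrt{\max(\hat t_P,0)}$ and with conditioning on the stage-(i) success event before applying concentration in stage (iii) are exactly the right ways to make the paper's sketch rigorous.
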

Note that this matches the sample complexity of the protocol from Ref.~\cite{huang2021information}, while only using two-copy measurements. Although the above protocol is sample efficient, it is extremely inefficient in terms of runtime as it seems to require a brute-force search over the set of all $n$-qubit quantum states. Here we show that the outrageous runtime can be tamed -- the protocol can be modified so that its runtime is upper bounded as $\mathrm{poly}(2^n)$. In this way we obtain a sample- and time-efficient two-copy protocol for the set of all Paulis.

\begin{theorem} \label{thm:mmw_informal}
There exists a triply efficient shadow tomography protocol for the set $S=\mathcal{P}^{(n)}$ of all $n$-qubit Paulis that uses only two-copy Clifford measurements.
In particular, it has sample complexity $O(n \log(n/\eps)/\eps^4)$ and time complexity $\poly(2^n,1/\eps)$.
\end{theorem}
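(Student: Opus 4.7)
The plan is to follow the three-step framework of \Cref{thm:2copy}, replacing only its brute-force search for a mimicking state by a $\poly(2^n,1/\eps)$-time algorithm based on Matrix Multiplicative Weights (MMW). Steps 1 and 3 of that protocol (Bell sampling on $\rho\otimes\rho$ and on $\rho\otimes\sigma$, together with classical post-processing) already run in $\poly(2^n,1/\eps)$ time, since the required Clifford-basis diagonalisations and Pauli expectations can be computed directly on the $2^n$-dimensional classical descriptions. I would use $O(n/\eps^4)$ samples in Step 1 to obtain magnitudes $u_P$ with $|u_P-|\Tr(P\rho)||\le\eps/4$, form $S_\eps=\{P:u_P\ge 3\eps/4\}$ so that $|\Tr(P\rho)|\ge \eps/2$ for $P\in S_\eps$ by \cref{eq:seps}, and use $O(n\log(n/\eps)/\eps^4)$ samples in Step 3 to recover signs from the products $\Tr(P\rho)\Tr(P\sigma)$, whose magnitudes are at least $\eps^2/8$ once the mimicking property is in hand.

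For Step 2 my plan is as follows. Initialize $\sigma_1=I/2^n$ and, at iteration $t$, compute $v_{t,P}=\Tr(\sigma_t P)$ for all $P\in S_\eps$ in $\poly(2^n)$ classical time. If every $|v_{t,P}|\ge\eps/4$, output $\sigma_t$; otherwise pick a violated $P_t$ together with a sign $b_t\in\{\pm 1\}$ and apply the MMW update
\[
\sigma_{t+1}\propto \exp\!\left(\eta\sum_{s=1}^{t} b_s P_s\right),\qquad \eta=\Theta(\eps).
\]
Each iteration costs $\poly(2^n)$ time, dominated by the matrix exponential on $\mathbb{C}^{2^n\times 2^n}$, and a quantum-relative-entropy potential argument against any fixed valid mimicking reference bounds the iteration count by $T=O(n/\eps^2)$ provided each direction $b_tP_t$ is sufficiently correlated with the current gap between $\sigma_t$ and that reference.

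The hard part will be the non-convexity of the mimicking-state feasibility problem: the constraint $|\Tr(P\sigma)|\ge\eps/4$ is a disjunction, and the ``correct'' sign $\sign(\Tr(P\rho))$ is precisely the information the algorithm is trying to learn. My plan for handling this is to maintain a sign hypothesis $s_{t,P}\in\{\pm 1\}$ alongside $\sigma_t$, take $b_t=s_{t,P_t}$, and flip $s_{t,P_t}$ whenever the candidate update would fail to decrease the relative-entropy potential against the sign-patched reference. Because $\rho$ itself realises one valid sign pattern, each flip either produces real progress towards some valid mimicking state or can be charged to a permanent update of the hypothesis, so the total number of flips is at most $|S_\eps|\le 4^n$ and is absorbed into the $\poly(2^n)$ runtime. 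As a backup I would lift to the two-copy space, where $(\Tr(P\sigma))^2=\Tr((\sigma\otimes\sigma)(P\otimes P))$ becomes linear, solve the convex feasibility problem for a $4^n\times 4^n$ PSD matrix $X$ with $\Tr(X(P\otimes P))\ge(\eps/4)^2$ via MMW, and round $X$ to a single-copy $\sigma$ by sampling a pure state from a purification of $X$ and classically verifying the original constraints; feasibility of the relaxed SDP is guaranteed by $X=\rho\otimes\rho$, and repetition of the rounding gives a valid $\sigma$ within $\poly(2^n)$ total time.
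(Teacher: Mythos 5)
Your high-level plan coincides with the paper's (keep Steps 1 and 3 of \Cref{thm:2copy}, replace the brute-force search for a mimicking state by an MMW computation running in $\poly(2^n,1/\eps)$ time), but the place where your proposal diverges is exactly the step that carries the difficulty, and there it has a genuine gap. You correctly identify that the constraint $|\Tr(P\sigma)|\geq \eps/4$ is a disjunction and that the needed sign is the unknown quantity. Your primary fix---maintaining a sign hypothesis $s_{t,P}$ and flipping it ``whenever the candidate update would fail to decrease the relative-entropy potential against the sign-patched reference''---is not implementable: the reference state (e.g.\ $\rho$, or any true mimicking state) is unknown to the algorithm, so the relative-entropy potential against it cannot be evaluated and cannot serve as a flip criterion. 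Even as a pure analysis device the argument does not close: the MMW regret/potential bound holds against a \emph{fixed} reference, and each sign flip changes the reference, so the iteration bound $T=O(n/\eps^2)$ between flips and the claim that flips are ``permanent'' and hence at most $4^n$ in number are unsubstantiated; a hypothesized sign may have to flip back and forth, and nothing prevents the procedure from cycling. Your backup route also has an unproven step: the lifted SDP with constraints $\Tr\bigl(X(P\otimes P)\bigr)\geq(\eps/4)^2$ is a strict relaxation whose feasible points need not be close to any product $\sigma\otimes\sigma$ (entangled feasible $X$ exist), and the proposed rounding---sampling a pure state from a purification of $X$ and re-checking---comes with no guarantee of success probability or number of repetitions, so ``repetition gives a valid $\sigma$'' is asserted, not proved.

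The paper resolves the sign ambiguity differently, and more simply: in each MMW iteration (\Cref{alg:hypo_state}), when a violated Pauli $P^{(t)}$ with $u_{P^{(t)}}\geq 3\eps/4$ is found, it spends $O((\log T)/\eps^2)$ \emph{additional single-copy measurements of $\rho$} to estimate $r_{P^{(t)}}=\sign(\Tr(\rho P^{(t)}))$ directly (this is reliable because $|\Tr(\rho P^{(t)})|\geq\eps/2$ by \cref{eq:seps}). With the true signs in hand, $\rho$ itself is a fixed reference satisfying every patched constraint with margin, and the standard Arora--Kale bound (\Cref{lem:MMW}) yields a contradiction if more than $T=O(n/\eps^2)$ iterations occur. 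Only the $O(n/\eps^2)$ Paulis actually encountered need their signs measured, so the extra cost is $O(n\log(n/\eps)/\eps^4)$ samples, which is where the stated sample complexity's $\log(n/\eps)$ factor comes from (your accounting attributes it to Step 3 instead). If you want to salvage your writeup, the cleanest repair is to adopt this measure-the-sign-on-demand step; proving that the purely classical sign-flipping scheme or the SDP-plus-rounding scheme works would require substantially new arguments that your proposal does not supply.
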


In \Cref{sec:signs_from_magnitudes} we complete the proof of \Cref{thm:mmw_informal} by showing that a suitable mimicking state $\sigma$ satisfying \cref{eq:mimicking} can be computed using $O(\mathrm{poly}(2^n))$ time and $O(n\log(n/\epsilon)/\eps^4 )$ additional single-copy measurements of $\rho$. The algorithm uses the matrix multiplicative weights technique \cite{arora2007combinatorial, aaronson2018online}. Once a classical description of a mimicking state has been computed, we can create the corresponding quantum state using $O(4^n)$ gates~\cite{SBM06}.

Our idea to use a mimicking state computed via the matrix multiplicative weights algorithm has already found an application: Ref. \cite{king2024exponential} applies this to perform shadow tomography on bosonic displacement operators.

\paragraph{Improved algorithms using graph theory.}

The remainder of our results employ a general framework for two-copy shadow tomography which is based on fractional graph coloring. Our framework can be viewed as an extension of a heuristic learning algorithm proposed in Appendix E.2.d of Ref.~\cite{huang2021information}; in this work we use it to obtain algorithms with rigorous performance guarantees.

 We have already seen that single-copy tomography for any set of observables $S$ reduces to fractional graph coloring for the commutation graph $G(S)$. Likewise, via Bell sampling, two-copy tomography reduces to fractional graph coloring for the commutation graph $G(S_{\eps})$. That is, we propose to use the single-copy algorithm to learn all observables in $S_\eps$, once we have already determined $S_\eps$ using an initial stage of Bell sampling. But how do the two-copy measurements help us?

A key insight is that the Paulis in $S_\eps$ cannot be very anticommuting. Intuition from the Heisenberg uncertainty principle tells us that anticommuting (traceless) observables cannot simultaneously be large on a quantum state, since the quantum state cannot simultaneously be an eigenvector of anticommuting observables. But the high probability event in \cref{eq:seps} implies that the Paulis in $S_\eps$ are simultaneously large on the state $\rho$, and thus cannot anticommute with each other too often. This can be formalized as an upper bound on the clique number of their commutation graph. In \Cref{sec:two_copy_learning} we show the following:
\begin{restatable}{lemma}{largestclique}\label{lem:clique_bound_informal}
The largest clique in the commutation graph $G(S_\eps)$ has size at most $4/\eps^{2}$ with high probability.
\end{restatable}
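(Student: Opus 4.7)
The plan is to translate ``clique in $G(S_\eps)$'' into ``pairwise anticommuting Pauli operators'' and then invoke a standard anticommutation uncertainty relation together with the high-probability lower bound already established in \cref{eq:seps}.

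First, I would condition on the high-probability event that $|\Tr(\rho P)| \geq \eps/2$ for every $P \in S_\eps$. The rest of the argument is deterministic given this event. By \Cref{def:commutation_graph}, a clique in $G(S_\eps)$ is precisely a set $\{P_1,\dots,P_m\}\subseteq S_\eps$ of Paulis that pairwise anticommute. Since each $P_i\in\mathcal{P}^{(n)}$ is Hermitian and squares to the identity, these operators generate a Clifford algebra: $P_i P_j + P_j P_i = 2\delta_{ij}\mathbbm{1}$.

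The key lemma I would then prove is the anticommutation uncertainty relation: for any such family $\{P_1,\dots,P_m\}$ and any state $\rho$,
\begin{equation}
\sum_{i=1}^m \Tr(\rho P_i)^2 \leq 1.
\end{equation}
The proof is a short calculation: for real coefficients $a_1,\dots,a_m$, the operator $A=\sum_i a_i P_i$ satisfies $A^2 = \sum_{i,j}a_i a_j P_i P_j = \bigl(\sum_i a_i^2\bigr)\mathbbm{1}$, because the off-diagonal terms cancel in pairs by anticommutativity. Hence $\|A\|_{\op}=\sqrt{\sum_i a_i^2}$, and $\sum_i a_i \Tr(\rho P_i) = \Tr(\rho A) \leq \|A\|_{\op}$. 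Choosing $a_i = \Tr(\rho P_i)$ and squaring yields the claim.

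Combining the two ingredients, each term in the sum is at least $\eps^2/4$, so $m \cdot \eps^2/4 \leq 1$, i.e.\ $m \leq 4/\eps^2$. This bound holds for every clique, so it bounds the clique number of $G(S_\eps)$, completing the proof.

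I do not expect a genuine obstacle here: the uncertainty relation is standard (it is essentially the statement that anticommuting $\pm 1$ observables behave like orthogonal unit vectors under the Hilbert--Schmidt inner product on a maximally mixed state), and the reduction from cliques to anticommuting families is immediate from the definition of the commutation graph. The only subtlety is remembering to condition on the high-probability event from \cref{eq:seps} rather than the weaker magnitude estimate \cref{eq:epserr}, which is why the constant comes out as $4/\eps^2$ rather than something depending on the magnitude threshold $3\eps/4$ used to define $S_\eps$.
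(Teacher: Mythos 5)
Your proof is correct and follows essentially the same route as the paper: a clique in $G(S_\eps)$ is exactly a pairwise anticommuting family of Paulis, and the uncertainty relation $\sum_i \Tr(\rho P_i)^2 \leq 1$ (the paper's \Cref{lem:anticommuting_bound}), combined with the high-probability event \cref{eq:seps}, forces $\omega \cdot \eps^2/4 \leq 1$. The only cosmetic difference is that you derive the uncertainty relation from $\Tr(\rho A) \leq \norm{A}_{\op}$ together with $A^2 = \bigl(\sum_i a_i^2\bigr)\mathbbm{1}$, whereas the paper uses the equivalent variance inequality $\Tr(Q\rho)^2 \leq \Tr(Q^2\rho)$.
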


Going forward, our aim is to exploit this upper bound on the clique number to compute good (fractional) colorings. 

Unfortunately, it is well known that the chromatic (or fractional chromatic) number is not upper bounded by any function of the clique number in general \footnote{For example, there exists a family of triangle-free graphs with chromatic number $\Omega(\sqrt{m/\log m})$ where $m$ is the number of vertices~\cite{kim1995ramsey}}.  However, such upper bounds can be established for certain families of graphs, a research direction pioneered by Gy\'arf\'as \cite{gyarfas1987problems}. A family of graphs for which this is possible is called \textit{chi-bounded} and the upper bound on chromatic number is said to be expressed in terms of a chi-binding function (see Refs.~\cite{schiermeyer2019polynomial, scott2020survey} for recent surveys). Our shadow tomography learning task for a set of observables $S$ thus reduces to establishing a suitable chi-binding function for the family of induced subgraphs of the commutation graph $G(S)$, see \Cref{sec:two_copy_learning} for details.

We show that the family of induced subgraphs of the commutation graph of $k$-body fermionic observables admits a polynomial chi-binding function (that does not depend on $n$).
\begin{restatable}{lemma}{chibindfermion} \label{lem:chibindfermion_intro}
Let $k\geq 1$, and let $G'$ be any induced subgraph of the commutation graph $G(\mathcal{F}^{(n)}_k)$ of $k$-body fermionic observables, and let $\omega$ be the size of the largest clique in $G'$. Then the fractional chromatic number of $G'$ satisfies
\begin{equation}
\chi_f(G')\leq p_k(\omega).
\label{eq:chifermi_}
\end{equation}
where $p_k$ is a polynomial. Moreover, for any $k=O(1)$ we can sample from a fractional coloring of $G'$ with size $p_k(\omega)$ using a classical algorithm with runtime $\mathrm{poly}(n)$. The polynomials for $k=1,2$ are $p_1(\omega)=\omega+1$ and $p_2(\omega)=O(\omega^8)$.
\end{restatable}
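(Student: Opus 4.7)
\textit{Proof plan.} First I translate the problem into combinatorics on set systems. Labelling each vertex of $G(\mathcal{F}^{(n)}_k)$ by the $2k$-subset $x\subseteq[2n]$ of Majorana indices appearing in $\Gamma(x)$, a direct commutation calculation using $c_a^2=\mathbbm{1}$ and pairwise anticommutation of distinct $c_a$'s gives $\Gamma(x)\Gamma(y)=(-1)^{|x\cap y|}\Gamma(y)\Gamma(x)$ for $|x|=|y|=2k$. Hence edges in $G'$ correspond exactly to pairs of $2k$-subsets with odd intersection, cliques in $G'$ are families pairwise odd-intersecting, and independent sets are families pairwise even-intersecting. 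In the latter case, regarded as indicator vectors in $\mathbb{F}_2^{2n}$, the family spans an isotropic subspace under the standard symmetric bilinear form (the diagonal $|x|=2k$ is automatically even). The goal is thus to construct, for every induced subgraph $G'$ of clique number $\omega$, a distribution over such isotropic families covering each vertex with weight $1/p_k(\omega)$, sampleable in $\poly(n)$ time.

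\textit{Base case $k=1$.} Vertices are $2$-element subsets and edges are pairs with intersection exactly $1$, so $G'$ embeds as an induced subgraph of the line graph $L(K_{2n})$: equivalently, $V(G')$ is the edge set of some simple graph $H$ on $[2n]$. Cliques in $G'$ arise from stars or triangles in $H$, so the maximum degree satisfies $\Delta(H)\leq\omega$. Vizing's theorem and its fractional counterpart therefore give $\chi_f(G')\leq\chi'_f(H)\leq\Delta(H)+1\leq\omega+1$, and a polynomial-time fractional edge-coloring LP (or a constructive matching-based procedure) produces the required distribution, proving $p_1(\omega)=\omega+1$.

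\textit{Inductive step for $k\geq 2$.} I plan to partition $V(G')$ by incidence of two carefully chosen Majorana indices $a,b\in[2n]$ into four blocks $S_{ab},S_{a\bar b},S_{\bar a b},S_{\bar a\bar b}$. The diagonal blocks $S_{ab}$ and $S_{\bar a\bar b}$ correspond, after removing $\{a,b\}$, to induced subgraphs of the $(k-1)$-body fermionic commutation graph on $[2n]\setminus\{a,b\}$, since removing two common (or two absent) elements preserves intersection parity; their clique numbers are bounded by $\omega$, so by induction each admits a fractional coloring of size $p_{k-1}(\omega)$. The off-diagonal blocks $S_{a\bar b}$ and $S_{\bar a b}$ correspond, after removing $a$ (resp.\ $b$), to induced subgraphs of the commutation graph on \emph{odd}-degree Majorana monomials, where anticommutation corresponds to \emph{even} intersection; these I handle by a twin recursion that partitions each off-diagonal block on a further pair of indices, returning after one level to an even-degree instance that the induction hypothesis covers. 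Summing fractional colorings of the four blocks with disjoint color sets gives a proper fractional coloring of $G'$, and a careful choice of $a,b$ (e.g.\ greedily maximising total incidence) keeps the recursion balanced. Unrolling yields a recurrence of the form $p_k(\omega)\leq C\cdot p_{k-1}(\omega)\cdot r(\omega)$ for an explicit polynomial $r$, which specializes to $p_2(\omega)=O(\omega^8)$.

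\textit{Main obstacle.} The key difficulty is the off-diagonal blocks: their inverted parity rule is not directly covered by the even-degree induction hypothesis, so an auxiliary chi-binding statement for odd-degree Majorana monomials is required. The natural route is to repeat the pair-partitioning trick one level deeper, which doubles the number of coordinates fixed and returns to an even-degree sub-instance, at the cost of the polynomial factor $r(\omega)$ in the recurrence. A secondary challenge is guaranteeing $\poly(n)$-time sampleability end-to-end: this requires that the base-case LP, the greedy selection of $a,b$ at each recursive level, and the reweighting of colors across blocks are all carried out constructively, which is possible but needs bookkeeping to verify.
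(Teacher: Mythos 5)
Your base case $k=1$ is essentially the paper's (line graph of a multigraph on the $2n$ Majorana indices, edge coloring with $\Delta+1\leq\omega+1$ colors), and your translation of commutation into intersection parity is correct. The inductive step, however, has a genuine gap. The block $S_{\bar a\bar b}$ of monomials containing neither $a$ nor $b$ does \emph{not} reduce to a $(k-1)$-body instance: deleting absent indices changes nothing, so this block is still a $k$-body commutation graph, merely on $2n-2$ modes. Your recursion therefore only shrinks $n$ on that branch, and since you assign disjoint color sets at every level, the total number of colors grows with the recursion depth, which can be $\Omega(n)$; this destroys the claimed $n$-independent bound $p_k(\omega)$. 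No greedy choice of $a,b$ fixes this in the even-degree case, because disjoint even-degree monomials commute, so there is no small set of indices that every vertex must touch. The same ``leftover block'' problem recurs inside your treatment of the off-diagonal (odd-degree) blocks: partitioning on a further pair of indices again leaves an odd-degree block avoiding both, and you have no mechanism to terminate that recursion after boundedly many (in $\omega,k$) steps.

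The paper closes exactly these two holes with two different ideas, neither of which appears in your plan. For \emph{odd} degree $r$, disjoint odd monomials anticommute, so the support $I$ of a \emph{maximal pairwise-anticommuting set} (a clique, hence of size $\leq\omega$, so $|I|\leq r\omega$) must intersect every vertex; partitioning by the first index of $I$ contained in each monomial and stripping that common index (which preserves the commutation graph) gives at most $r\omega$ blocks of degree $r-1$, whence $f_r(\omega)\leq r\omega f_{r-1}(\omega)$. For \emph{even} degree $r$, the paper does not partition at all: for every index $i\in[2n]$ it takes a fractional coloring of the stripped block $W_i'$ (degree $r-1$), samples one independent set $I_i$ from each, and keeps exactly those monomials $\Gamma(x)$ lying in $I_j$ for all $j$ in their support. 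This intersection of $r$ events covers each vertex with probability $\geq f_{r-1}(\omega)^{-r}$ and is independent (disjoint even monomials commute; overlapping ones lie in a common $I_j$), giving $f_r(\omega)\leq f_{r-1}(\omega)^r$ with no $n$-dependence. Without the maximal-clique covering argument for odd degree and the product (rather than partition) construction for even degree, your recurrence $p_k\leq C\,p_{k-1}\,r(\omega)$ is not justified, and the claimed $p_2(\omega)=O(\omega^8)$ (which in the paper arises as $((2k-1)\omega\,p_{k-1}(\omega))^{2k}$ at $k=2$) does not follow from your construction.
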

The proof of \Cref{lem:chibindfermion_intro} is provided in \Cref{sec:kbody}.
As discussed above, the commutation graph $G(S_{\eps})$ is an induced subgraph of $G(S)$ with clique number at most $O(1/\eps^{2})$. For $k$-body fermionic observables $S=\mathcal{F}^{(n)}_k$, \Cref{lem:chibindfermion_intro} tells us there is an efficiently computable fractional coloring of $G(S_{\eps})$ with at most $\mathrm{poly}(1/\eps^{2})$ colors. We can then use the single-copy learning protocol from \Cref{thm:fractional} to learn all observables in $S_\eps$. This reduction, which describes how to convert \Cref{lem:chibindfermion_intro} into a two-copy learning protocol, is formalized in \Cref{lem:twocopycolor}. Putting it all together gives the following theorem.

\begin{theorem} \label{thm:majorana_informal}
Let $k=O(1)$. There exists a triply efficient shadow tomography protocol for the set $S=\mathcal{F}^{(n)}_k$ of $k$-body fermionic observables that uses only two-copy Clifford measurements. 
\end{theorem}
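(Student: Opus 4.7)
The plan is to assemble Theorem~\ref{thm:majorana_informal} from the two-copy framework already outlined in the excerpt, using the chi-bounding lemma for fermionic commutation graphs as the technical input. Set $S=\mathcal{F}^{(n)}_k$. Since $|S|=\binom{2n}{2k}$, we have $\log|S|=O(k\log n)=O(\log n)$ under the assumption $k=O(1)$, so ``sample efficient'' will mean $\poly(\log n,1/\eps)$.

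First I would run the learning magnitudes subroutine of Ref.~\cite{huang2021information}, i.e., Bell sampling on $\rho\otimes\rho$, with $O((\log|S|)/\eps^4)$ two-copy Clifford measurements to obtain estimates $u_P$ satisfying \cref{eq:epserr} with high probability. Define $S_\eps$ as in \cref{eq:Seps}; then with high probability every $P\in S_\eps$ satisfies $|\Tr(\rho P)|\geq \eps/2$, and any $P\in S\setminus S_\eps$ can safely be reported as having expectation zero to within $\eps$. Conditioning on this high-probability event, \Cref{lem:clique_bound_informal} guarantees that the commutation graph $G(S_\eps)$ has clique number $\omega\leq 4/\eps^2$.

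Next I would apply \Cref{lem:chibindfermion_intro} to the induced subgraph $G(S_\eps)$ of $G(\mathcal{F}^{(n)}_k)$. This yields a fractional coloring of $G(S_\eps)$ of size at most $p_k(4/\eps^2)=\poly(1/\eps)$ (with the degree of the polynomial depending on $k$), and this coloring can be sampled classically in time $\poly(n)$. Now feed this fractional coloring into \Cref{thm:fractional}: with $\chi=p_k(4/\eps^2)$, $|S_\eps|\leq|S|$, and classical sampling time $T=\poly(n)$, we learn all expectations $\Tr(P\rho)$ for $P\in S_\eps$ to accuracy $\eps$ using
\begin{equation}
O\!\left(\chi\,(\log|S|)/\eps^2\right)=\poly(\log n,1/\eps)
\end{equation}
single-copy Clifford measurements and runtime $\poly(n,1/\eps)$. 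These single-copy measurements are safe to apply \emph{only} because we have already carved out $S_\eps$ using two-copy Bell sampling; doing single-copy measurements for all of $S$ upfront is forbidden by \Cref{thm:majorana_lower_bd_informal}.

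Combining the two phases gives a protocol that uses only one- and two-copy Clifford measurements, with total sample complexity $\poly(\log n,1/\eps)$ and total time $\poly(n,1/\eps)$, establishing triple efficiency. The conceptual heart of the argument is already packaged in \Cref{lem:chibindfermion_intro}; granting that lemma and the clique bound from \Cref{lem:clique_bound_informal}, the proof of \Cref{thm:majorana_informal} is essentially a two-stage composition with a union bound over the failure events of magnitude learning, clique bounding, and single-copy sign learning. The main obstacle to worry about is making the reduction from chi-binding to a two-copy protocol fully rigorous (the reduction the excerpt calls \Cref{lem:twocopycolor}), in particular tracking the random subgraph $S_\eps$ through the fractional-coloring analysis so that the bounds on $\chi_f(G(S_\eps))$ hold on the same high-probability event as \cref{eq:seps} and the clique bound; this is a union-bound bookkeeping step rather than a new idea.
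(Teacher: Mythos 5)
Your proposal is correct and follows essentially the same route as the paper: Bell sampling to determine $S_\eps$, the clique bound of \Cref{lem:clique_bound_informal}, the chi-binding fractional coloring of \Cref{lem:chibindfermion_intro}, and the single-copy fractional-coloring protocol of \Cref{thm:fractional}, composed exactly as formalized in \Cref{lem:twocopycolor}. The resulting sample complexity $O\bigl((\log|S|)/\eps^4 + p_k(4/\eps^2)(\log|S|)/\eps^2\bigr)$ and $\poly(n,1/\eps)$ runtime match the paper's accounting.
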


This triply efficient protocol has sample complexity 
\begin{equation}
O\biggl(\frac{\log |\mathcal{F}^{(n)}_k|}{\eps^4} + \frac{p_k(4/\eps^2)\log |\mathcal{F}^{(n)}_k|}{\eps^2} \biggr) =  O((k \log n) p_k(4/\epsilon^2)/\eps^2),    
\end{equation}
where $p_k$ is the polynomial from \Cref{lem:chibindfermion_intro} that depends on the locality $k$, and we also used the fact that $p_k(\omega)=\Omega(\omega)$ for all $k\geq 1$.  For each $k\geq 1$ we obtain an exponential improvement over single-copy learning protocols  in terms of the sample complexity as a function of system size $n$. For $k=1$ our learning algorithm has sample complexity $O((\log{n})/\eps^4)$, and the measurements and postprocesssing are simple to implement. We anticipate that this learning algorithm could find applications in quantum simulations of chemistry and fermionic physics. With our current analysis, the degree of the polynomial $p_k$ increases very rapidly as a function of $k$ rendering the scheme less practical for $k\geq 2$. We hope this could be improved in future work. An upper bound on the $\eps$-dependence of the sample complexity is $\sim \eps^{-O((2k)^{k+1})}$. For $k=2,3$ the sample complexity is $\sim \eps^{-18}$ and $\sim \eps^{-110}$ respectively.

It is natural to ask how far we can push this two-copy framework based on Bell sampling and fractional coloring.  Below, we show that it provides a nontrivial shadow tomography protocol for \textit{any} subset of Pauli observables $S\subseteq \mathcal{P}^{(n)}$. This gives hope that our framework could lead to triply efficient shadow tomography in the general case.

We shall exploit the fact that the longest induced path in the commutation graph $G(\mathcal{P}^{(n)})$ contains at most $2n+1$ vertices, see \Cref{sec:coloring} for details. The following upper bound on chromatic number then follows from a seminal result in chi-boundedness due to Gy\'arf\'as \cite{gyarfas1987problems}; see \Cref{sec:coloring}.
\begin{restatable}{lemma}{chibind}\label{lem:coloring_intro}
Let $G'$ be any induced subgraph of the commutation graph $G(\mathcal{P}^{(n)})$, and let $\omega$ be the size of the largest clique in $G'$. The chromatic number of $G'$ is upper bounded as
\begin{equation}
\chi(G')\leq (2n+1)^{\omega-1}.
\label{eq:chign}
\end{equation}
Moreover, a coloring with this many colors can be computed by a classical algorithm with runtime $\poly(|G'|, n^{\omega})$.
\end{restatable}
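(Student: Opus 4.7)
My plan is to reduce both the combinatorial bound and the algorithmic statement to the classical chi-boundedness theorem of Gy\'arf\'as \cite{gyarfas1987problems}, which asserts that every $P_t$-free graph $H$ (i.e., one with no induced path on $t$ vertices) satisfies $\chi(H)\le(t-1)^{\omega(H)-1}$, and whose proof is constructive. The structural input, to be established in \Cref{sec:coloring}, is that the longest induced path in $G(\mathcal{P}^{(n)})$ has at most $2n+1$ vertices. Since induced paths of an induced subgraph $G'$ of $G(\mathcal{P}^{(n)})$ are also induced paths of $G(\mathcal{P}^{(n)})$, the subgraph $G'$ is itself $P_{2n+2}$-free, so Gy\'arf\'as's theorem applied with $t=2n+2$ immediately yields $\chi(G')\le(2n+1)^{\omega-1}$.

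To extract a coloring algorithm, I would implement the standard inductive proof of Gy\'arf\'as's bound, by induction on $\omega$. The base case $\omega=1$ requires a single color. For $\omega\ge 2$, assume $G'$ is connected (otherwise color each component separately and reuse palettes). Pick an arbitrary vertex $v\in V(G')$ and compute the BFS layers $L_0=\{v\},L_1,\ldots,L_d$; because $G'$ is connected and $P_{2n+2}$-free, every shortest $v$-to-$u$ path is an induced path on at most $2n+1$ vertices, so $d\le 2n$ and there are at most $2n+1$ nonempty layers. Assigning each layer a disjoint palette reduces the task to coloring each $G'[L_i]$ with $(2n+1)^{\omega-2}$ colors, which is done by further decomposing each layer according to a BFS parent function and invoking the inductive hypothesis on subgraphs of strictly smaller clique number, exactly as in Gy\'arf\'as's original argument. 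Multiplying across the at most $2n+1$ layer palettes gives the desired $(2n+1)^{\omega-1}$ total.

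Each recursive call performs a BFS and some combinatorial bookkeeping in $\poly(|G'|)$ time, and the recursion has depth at most $\omega$. The extra $n^{\omega}$ factor in the stated runtime appears because at each level one must locate or certify cliques of size up to $\omega$ in the subgraphs encountered by the recursion: since vertices of $G'$ are Paulis and anticommutation is controlled by the symplectic bilinear form on $\mathbb{F}_2^{2n}$, cliques in $G'$ correspond to anticommuting sets, and the relevant searches reduce to linear algebra in a $2n$-dimensional binary space and can be carried out in $n^{O(\omega)}$ time rather than the $|G'|^{O(\omega)}$ time a generic graph algorithm would require. The main delicate point I expect in writing out the proof is ensuring that the Gy\'arf\'as decomposition interacts cleanly with the Pauli structure at every level of the recursion, so that the $n^\omega$ clique-enumeration bound applies throughout; the combinatorial statement itself is essentially immediate once the induced-path fact is in hand.
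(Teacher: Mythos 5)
Your high-level reduction is the same as the paper's: bound the longest induced path in $G(\mathcal{P}^{(n)})$ by $2n+1$ vertices and then invoke an algorithmic form of Gy\'arf\'as's chi-boundedness theorem, recursing on the clique number. However, the algorithmic core as you describe it has a genuine gap. BFS layers of a connected $P_{2n+2}$-free graph need \emph{not} have clique number at most $\omega-1$: a clique lying inside a layer $L_i$ need not have a common neighbour in $L_{i-1}$, since each of its vertices may have a different parent. Your proposed repair, ``further decomposing each layer according to a BFS parent function,'' does not close this: in a BFS, two adjacent vertices of the same layer can have different parents (take the $5$-cycle on $v,a,b,c,d$ with edges $va,ab,bc,cd,dv$ rooted at $v$: the layer-$2$ vertices $b$ and $c$ are adjacent but have parents $a$ and $d$), so parent-classes within a layer cannot share a palette, and giving each parent-class its own palette multiplies the color count by $|L_{i-1}|$ rather than by a factor bounded by $2n+1$. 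The paper avoids precisely this by replacing BFS with a modified traversal, ``neighbour-first search'' (\Cref{alg:neighbour_first_search}), which adopts all not-yet-visited neighbours of the current vertex as children before recursing. Its two key properties are: (i) every root-to-leaf path of the resulting spanning tree is an induced path, so the number of levels is at most $2n+1$; and (ii) children of two distinct same-level vertices are never adjacent, so any clique inside a single level has a common parent and hence each level has clique number at most $\omega-1$. These two facts are exactly what make the recursion on $\omega$ produce $(2n+1)^{\omega-1}$ colors in time $O(|G'|^2\omega)$ (\Cref{lem:chi_binding}); your plan never establishes the per-level clique reduction, and with plain BFS it is false.

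Separately, you take the $2n+1$ bound on induced paths in $G(\mathcal{P}^{(n)})$ as external input ``to be established in \Cref{sec:coloring},'' but that bound is part of what this lemma's proof must supply, and it is the only quantum-specific step. The paper proves it by noting that for an induced path $P_1,\dots,P_s$ the partial products $Q_r=P_1P_2\cdots P_r$ pairwise anticommute (consecutive $P_i$ anticommute, non-consecutive ones commute), and an $n$-qubit system admits at most $2n+1$ pairwise anticommuting Pauli operators, whence $s\le 2n+1$. A minor further point: your $n^{\omega}$ runtime term for ``locating or certifying cliques'' via symplectic linear algebra is unnecessary---the clique number enters only the analysis, not the algorithm, since the recursion simply bottoms out when the levels become edgeless; this is why the paper's coloring procedure runs in time $O(|G'|^2\omega)$ with no clique search at all.
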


To get a shadow tomography algorithm for any set of Pauli observables $S\subseteq P^{(n)}$, we follow the strategy outlined above and formalized in \Cref{lem:twocopycolor}. That is,  we apply \Cref{lem:coloring_intro} to the subgraph $G'=G(S_{\epsilon})$ induced by the set $S_{\epsilon}$ computed via Bell sampling. From \Cref{lem:clique_bound_informal} we have that with high probability the largest clique in $G'$ has size $\omega =O(1/\epsilon^{2})$. So we get an  coloring of $G(S_\eps)$ with at most ${n^{O(1/\eps^2)}}$ colors, that can be computed with runtime $\mathrm{poly}(|S|, n^{1/\epsilon^{2}})$. When $\epsilon=\Omega(1)$ is a small constant, this protocol is time-efficient, has sample complexity $\mathrm{poly}(n)$, and only uses two-copy measurements, for any subset of Pauli observables $S$. This gives a sample-efficient protocol only when $|S|$ is exponentially large as a function of $n$. At a technical level this is a consequence of the factor of $n$ appearing in \cref{eq:chign}, and we do not know if this can be avoided.

This leaves open the question of triply efficient shadow tomography for arbitrary subsets of Pauli observables. 
However, we will see that it provides insight into a related question concerning compressed classical representations of quantum states.

\paragraph{Rapid-retrieval Pauli compression.}

Can we compress an $n$-qubit quantum state into a small amount of classical information, so that the compressed classical description is sufficient to extract the expectation values of any bounded observable to within a small constant error? This question has been studied using tools from communication complexity and it is known that an exponential classical description size is necessary if one wishes to recover  bounded observables in the general case; a representation size $\tilde{\Theta}(\sqrt{2^n})$ is necessary and sufficient for $n$-qubit pure states~\cite{raz1999exponential,gavinsky2007exponential,gosset2018compressed}.

On the other hand, if we restrict our attention to the set of $n$-qubit Pauli observables (or other sets of observables with only singly exponential size), a classical description of size $\mathrm{poly}(n)$ exists and can be computed using the matrix multiplicative weights algorithm  \cite{aaronson2004limitations, aaronson2018online}. However, a significant drawback of known methods for this task is that they require exponential classical runtime to extract the expected value of a given Pauli observable from the compressed classical representation.

A consequence of \Cref{lem:coloring_intro} is that this exponential cost can be avoided, at least for any small constant precision $\eps=\Omega(1)$. That is, one can compress an $n$-qubit state $\rho$ into $\mathrm{poly}(n)$ classical bits. Given this classical data and an $n$-qubit Pauli $P$, there is an \textit{efficient} classical algorithm to estimate $\mathrm{Tr}(\rho P)$ to within $\eps$-error. Moreover, such a representation can be learned from $\mathrm{poly}(n)$ samples of $\rho$. 

\begin{corol}[Rapid-retrieval Pauli compression] \label{cor:compression_intro}
Let $\rho$ be an $n$-qubit quantum state. Let $\eps\in (0,1)$ be a constant independent of $n$. Using two-copy Clifford measurements on $\mathrm{poly}(n)$ copies of $\rho$, along with $2^{O(n)}$ runtime, we can (with high probability) learn a compressed classical representation of $\rho$, call it $D(\rho, \eps)$, that consists of $\mathrm{poly}(n)$ bits. An $\eps$-approximation to the expected value $\mathrm{Tr}(\rho P)$ of any Pauli observable $P\in \mathcal{P}^{(n)}$ can be extracted from $D(\rho, \eps)$ using a classical algorithm with $\mathrm{poly}(n)$ runtime.
\end{corol}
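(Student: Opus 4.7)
The plan is to assemble the promised representation $D(\rho,\epsilon)$ in three phases, invoking \Cref{lem:coloring_intro} on the commutation graph induced by the Paulis with non-negligible magnitude. First I would run the Bell-sampling magnitude stage of \Cref{thm:2copy} with $S=\mathcal{P}^{(n)}$: each Bell sample is a single bitstring providing an unbiased $\pm 1$ estimator of $|\Tr(\rho P)|^{2}$ simultaneously for every Pauli $P$, so Hoeffding plus a union bound over the $4^{n}$ Paulis shows that $O(n/\epsilon^{4})$ samples produce estimates $\{u_{P}\}$ with $|u_{P}-|\Tr(\rho P)||\le\epsilon/4$ for all $P$ with high probability. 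Setting $S_{\epsilon}=\{P:u_{P}\ge 3\epsilon/4\}$, every $P\in S_{\epsilon}$ satisfies $|\Tr(\rho P)|\ge\epsilon/2$, and every $P\notin S_{\epsilon}$ satisfies $|\Tr(\rho P)|\le\epsilon$.

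Next, by \Cref{lem:clique_bound_informal} the commutation graph $G(S_{\epsilon})$ has clique number $\omega\le 4/\epsilon^{2}=O(1)$, so \Cref{lem:coloring_intro} produces a proper coloring of $G(S_{\epsilon})$ with $\chi\le(2n+1)^{\omega-1}=\poly(n)$ colors, constructible in time $\poly(|S_{\epsilon}|,n^{\omega})=2^{O(n)}$. Each color class is an independent set in $G(S_{\epsilon})$, hence a set of mutually commuting Paulis, and I fix a stabilizer Clifford basis $B_{i}$ ($i=1,\dots,\chi$) in which all Paulis of class $i$ are diagonal; each $B_{i}$ is described by $O(n^{2})$ classical bits. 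For each $i$ I would then perform $m=O(n/\epsilon^{2})$ single-copy measurements of $\rho$ in $B_{i}$ and store the $n$-bit outcome strings verbatim. The representation $D(\rho,\epsilon)$ is the list $\{B_{i}\}$ together with its outcome strings; its total size is $\chi\cdot(O(n^{2})+mn)=\poly(n)$ bits.

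To extract $\Tr(\rho P)$ for a query $P\in\mathcal{P}^{(n)}$, I scan the $\chi$ bases and test in $\poly(n)$ time whether $P$ commutes with every stabilizer generator of $B_{i}$; if some $B_{i}$ does diagonalize $P$, I return the empirical mean of $P$'s eigenvalues on the stored outcomes (each eigenvalue being efficiently computable from the Clifford description of $B_{i}$ and $P$), otherwise I return $0$. The total query time is $\chi\cdot m\cdot\poly(n)=\poly(n)$. Correctness breaks into two cases: if $P\in S_{\epsilon}$ then $P$ is covered by some $B_{i}$ and $m=O((\log 4^{n})/\epsilon^{2})$ samples suffice, via Hoeffding and a union bound over $\mathcal{P}^{(n)}$, to estimate $\Tr(\rho P)$ within $\epsilon$; if $P\notin S_{\epsilon}$ then $|\Tr(\rho P)|\le\epsilon$, so both possible answers ($0$ or an empirical mean) are within $\epsilon$ of the truth. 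The step requiring the most care is keeping $|D(\rho,\epsilon)|$ at $\poly(n)$ even though $|S_{\epsilon}|$ can itself be exponential in $n$: the resolution is never to maintain an explicit per-Pauli table but only to store the $\chi=\poly(n)$ bases together with their raw outcome strings, and to recompute $\Tr(\rho P)$ on demand.
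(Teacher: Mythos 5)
Your proposal is correct and follows essentially the same route as the paper: Bell sampling to identify $S_\eps$, the clique bound of \Cref{lem:clique_bound_informal}, the coloring of \Cref{lem:coloring_intro}, single-copy Clifford measurements for each color class, and rapid retrieval via the stabilizer formalism as in \Cref{lem:rapidretrieval}. Your only (harmless) deviation is the retrieval rule: the paper stores the Bell samples and outputs $0$ precisely when $P\notin S_\eps$, then uses the median-of-means estimator of \Cref{thm:fractional}, whereas you decide based on whether some stored basis diagonalizes $P$ and handle the extra case (a Pauli outside $S_\eps$ that happens to be diagonalized) with a Hoeffding bound and union bound over all of $\mathcal{P}^{(n)}$ --- both are valid.
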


The classical description $D(\rho, \epsilon)$ consists of a list of all the Clifford measurement bases and measurement outcomes used in the two-copy learning algorithm discussed above, see \Cref{sec:commutation} for details. In particular, \Cref{cor:compression_intro} is obtained by combining \Cref{lem:coloring_intro} and \Cref{lem:rapidretrieval}.

\subsection{Discussion and open questions}
In this paper we have provided the first triply efficient shadow tomography protocols for the set of $k$-body fermionic observables and the set of all Pauli operators. We have also provided a route to strengthening and generalizing our results via a connection between two-copy tomography and graph theory techniques related to chi-boundedness.  In \Cref{table} we provide a comparison of our results with other known protocols for shadow tomography.

\begin{table}[htbp]
\small
\renewcommand{\arraystretch}{1.5}
\begin{tabular}{ccccc}
\toprule
Observables & \makecell{Copies per \\ measurement} & \makecell{Algorithm or \\ lower bound}& \makecell{Sample \\ complexity}  & \makecell{Time \\ efficient?} \\ \midrule

\multirow{7}{*}{\makecell{$k$-local \\ Pauli \\ operators}} & \multirow{4}{*}{1} & Naive & $O(3^k \binom{n}{k} k(\log{n})/\eps^2)$ & \checkmark\\
& & \makecell{Various methods \\ \cite{cotler2020quantum,evans2019scalable, bonet2020nearly} \\ \cite{jiang2020optimal, huang2020predicting}} & $O(3^k k (\log n)/\eps^2)$ & \checkmark \\
& & \makecell{Lower bound (\Cref{thm:local_pauli_lowerbd}) \\ for $k \leq \log_3 (2n+1)$.} & \cellcolor{blue!25} $\Omega(3^k/\eps^2)$ & --- \\ \cmidrule{2-5}
& 2 & \Cref{thm:2copy} & \cellcolor{blue!25} $O(k(\log n)/\eps^4)$ & \text{\sffamily X} \\ \cmidrule{2-5}
& unrestricted & \makecell{Bell sampling and gentle\\ measurements \cite{huang2021information}}& $O(k(\log n)/\eps^4)$ & \checkmark\\ 
\midrule

\multirow{10}{*}{\makecell{$k$-body \\ fermionic \\ operators}} & \multirow{3}{*}{1} & Naive & $O\bigl(\binom{2n}{2k} k(\log{n})/\eps^2\bigr)$ & \checkmark\\
& & \makecell{Various methods \\ \cite{bonet2020nearly,jiang2020optimal,wan2022matchgate}} & $O_k(n^k (\log n)/\eps^2)$* & \checkmark \\
& & \makecell{Lower bound (\Cref{thm:local_majorana_lowerbd})} & \cellcolor{blue!25} $\Omega_k(n^k/\eps^2)$*  & --- \\
\cmidrule{2-5}
& \multirow{3}{*}{2} & \Cref{thm:2copy} & \cellcolor{blue!25} $O(k (\log n)/\eps^4)$ & \text{\sffamily X} \\
& & \Cref{thm:majorana_informal} for $k=1$ & \cellcolor{blue!25} $O((\log n)/\eps^4)$ & \checkmark\\
& & \Cref{thm:majorana_informal} & \cellcolor{blue!25} $\log{n} \cdot \poly_k(1/\eps)$$^\dag$ & \checkmark \\
\cmidrule{2-5}
& unrestricted & \makecell{Bell sampling and gentle\\ measurements \cite{huang2021information}}& $O(k (\log n)/\eps^4)$ & \checkmark \\
\midrule

\multirow{6}{*}{\makecell{All Pauli \\ operators}} & \multirow{3}{*}{1} & \makecell{Naive } & $O(4^n n/\eps^2)$ & \checkmark\\
& & \makecell{Random Clifford$^{\ddagger}$} & $O(2^n n/\eps^2)$ & \checkmark\\
& & Lower bound \cite{chen2022exponential} & $\Omega(2^n/\eps^2)$ & --- \\ \cmidrule{2-5}
& \multirow{2}{*}{2} & \Cref{thm:mmw_informal} & \cellcolor{blue!25} $O(n\log(n/\epsilon)/\eps^4)$ & \checkmark\\
&  & \Cref{cor:compression_intro} & \cellcolor{blue!25} $n^{O(1/\eps^2)}$ & \checkmark\\ \cmidrule{2-5}
& unrestricted & \makecell{Bell sampling and gentle\\ measurements \cite{huang2021information}}& $O(n/\eps^4)$ & \checkmark\\
\bottomrule

\end{tabular}
\caption{A summary of Pauli shadow tomography algorithms for the sets of observables we study. Cells in blue are new results in this paper. \\ *The constant depends on $k$. \\ $^\dag$The degree of the polynomial depends on $k$.\\
$^\ddagger$Based on single-copy measurements in uniformly random Clifford bases.}
\label{table}
\end{table}

There are many questions left open by our work. Is it possible to improve the upper bounds from \cref{eq:chifermi_} and \cref{eq:chign}---e.g., can we establish better chi-binding functions for the (families of) commutation graphs of interest? Is rapid-retrieval compression possible for smaller error parameters, e.g., $\eps=1/\mathrm{poly}(n)$? Can we devise triply efficient learning algorithms for \text{any subset} of Pauli observables?

One route towards resolving these questions would be via improved algorithms for coloring the commutation graph $G(S_\eps)$. The following conjecture asserts an efficient fractional coloring of the commutation graph of any subset of Paulis that has simultaneously large expected values in a quantum state.

\begin{conjecture} \label{conj:coloring_intro}
Let $\rho$ be an $n$-qubit state, $\delta\in (0,1)$, and let $B\subseteq \mathcal{P}^{(n)}$ be the set of all Paulis $P$ such that $|\Tr(\rho P)| \geq \delta$. There is a fractional coloring of the commutation graph $G(B)$ of size $O(1/\delta^{2})$. 
\end{conjecture}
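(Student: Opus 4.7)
The natural attack goes through LP duality. Writing $\chi_f(G(B)) = \max_{y \geq 0}\bigl\{\sum_{P \in B} y_P : \sum_{P \in I} y_P \leq 1 \text{ for every commuting } I \subseteq B\bigr\}$, I would take an optimizer $y^\star$ and form the Hermitian operator $M := \sum_{P \in B} y_P^\star\, \sign(\Tr(\rho P))\, P$. By construction $\Tr(\rho M) = \sum_P y_P^\star |\Tr(\rho P)| \geq \delta\, \chi_f(G(B))$, and hence $\|M\|_\infty \geq \delta\, \chi_f(G(B))$. The whole problem then reduces to proving the operator-norm upper bound $\|M\|_\infty = O(1/\delta)$ for every $y$ feasible for the LP, which would immediately yield $\chi_f(G(B)) = O(1/\delta^2)$.

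To bound $\|M\|_\infty$ I would pick a state $|\phi\rangle$ with $\langle\phi|M|\phi\rangle$ close to $\|M\|_\infty$ and expand it as $\sum_P y_P^\star\,\sign(\Tr(\rho P))\,\langle\phi|P|\phi\rangle$. A pigeonhole step shows that the ``aligned'' subset $B_\phi^+ := \{P \in B : \sign(\Tr(\rho P))\langle\phi|P|\phi\rangle \geq \delta/2\}$ carries an $\Omega(\delta)$ fraction of the total $y^\star$-mass, and the uncertainty-principle argument proving \Cref{lem:clique_bound_informal}, now applied to $|\phi\rangle$, forces $\omega(G(B_\phi^+)) = O(1/\delta^2)$. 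The remaining step, and the crux of the conjecture, is to upgrade this clique bound on $G(B_\phi^+)$ into a matching $O(1/\delta^2)$ bound on its fractional chromatic number. This is itself an instance of the conjecture, now posed for the pure state $|\phi\rangle$ and threshold $\delta/2$, so a direct induction is circular and loses a factor of $1/\delta$ per round.

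A complementary angle I would pursue in parallel is to piggy-back on the matrix multiplicative weights subroutine of \Cref{thm:mmw_informal}. That routine builds a mimicking Gibbs state $\sigma \propto \exp(\sum_t \alpha_t P_t)$ from only $O((n \log n)/\delta^2)$ Pauli terms, with $\sign(\Tr(\sigma P)) = \sign(\Tr(\rho P))$ and $|\Tr(\sigma P)| = \Omega(\delta)$ for every $P \in B$. If one could decompose $\sigma$ as a convex combination of states supported on commuting families, each piece would certify one color class of $G(B)$, and the mixture weights would directly supply an $O(1/\delta^2)$ fractional coloring. The main obstacle, and where I expect a genuinely new idea is required, is precisely controlling this coloring-complexity of a Gibbs state of possibly anticommuting Paulis: the Gy\'arf\'as-style bound of \Cref{lem:coloring_intro} gives only $n^{\omega-1}$, and removing the unwanted factor of $n$ appears to require a chi-binding function that exploits the symplectic geometry of $\mathbb{F}_2^{2n}$ together with the state-compatibility constraint, rather than the purely combinatorial structure of $G(B)$.
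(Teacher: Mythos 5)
There is no ``paper proof'' to compare against here: the statement you are addressing is \Cref{conj:coloring_intro}, which the paper explicitly leaves \emph{open}. The paper's discussion even sketches the same first move you make---by LP duality the fractional chromatic number of $G(B)$ equals its fractional clique number, so it would suffice to bound the fractional clique number by $O(1/\delta^2)$---but it stops there, offering this only as a possible route. Your proposal follows that route and then stalls at exactly the point where a new idea is needed, as you yourself acknowledge. Concretely: your reduction to the operator-norm bound $\norm{M}_\infty = O(1/\delta)$ is unproven (and it is not clear it even holds for every LP-feasible $y$; the only leverage you have is that $B$ consists of Paulis with $|\Tr(\rho P)|\geq\delta$ for the \emph{specific} state $\rho$, and your argument never exploits this beyond the lower bound $\Tr(\rho M)\geq \delta\,\chi_f$). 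Your pigeonhole step correctly shows the aligned set $B_\phi^+$ carries $\Omega(\delta)$ of the $y^\star$-mass and has clique number $O(1/\delta^2)$ by \Cref{lem:pairwise}, but converting that clique bound into a fractional-coloring bound for $B_\phi^+$ is again an instance of the conjecture, so the induction is circular, with a lost factor of $1/\delta$ per round as you note. The same applies to your second angle: decomposing the mimicking Gibbs state into commuting-family pieces is not known to be possible, and the only available chi-binding function for induced subgraphs of $G(\mathcal{P}^{(n)})$ is the Gy\'arf\'as-type bound of \Cref{lem:coloring_intro}, which carries the unwanted factor of $n$.

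So the honest assessment is: your write-up is a reasonable research plan, and its opening reduction coincides with the paper's own suggested strategy (bound the fractional clique number, then invoke LP duality), but it is not a proof, and the gaps you flag---the norm bound on $M$, or equivalently a chi-binding function for these commutation graphs that is polynomial in $\omega$ alone and independent of $n$---are precisely the open content of the conjecture. If you want to make partial progress that is genuinely new relative to the paper, the most promising concrete target in your sketch is the fractional-clique formulation: try to show directly that any LP-feasible weighting $y$ on $B$ satisfies $\sum_P y_P = O(1/\delta^2)$ by testing against states built from $\rho$ itself (e.g.\ via the operator $\sum_{P\in B} y_P \Tr(\rho P) P$ and a second-moment/uncertainty argument in the spirit of \Cref{lem:commutation_graph}), rather than against an adversarially chosen eigenvector $|\phi\rangle$ of $M$, which is where the circularity enters.
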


If this conjecture holds, and in addition the fractional coloring is suitably efficient,\footnote{In particular, we require that a fractional coloring of size $O(1/\delta^{2})$ for any subset $R\subseteq B$ can be sampled in time $\mathrm{poly}(|R|,n,1/\delta)$} then we would obtain a triply efficient Pauli shadow tomography algorithm for \textit{any} subset $S$ of Pauli observables. Moreover, the learning algorithm would also output a rapid-retrieval Pauli compression of all observables in $S$, of size $O(n^2(\log |S|)/\eps^4)$, see \Cref{sec:two_copy_learning}.

To address \Cref{conj:coloring_intro}, it is natural to ask if  \Cref{lem:clique_bound_informal} can be strengthened by showing that $O(1/\epsilon^{2})$ is in fact an upper bound on the \textit{fractional clique number}---a well-known linear programming relaxation of the clique number~\cite{scheinerman2011fractional}. The existence of a suitable fractional coloring stated in \Cref{conj:coloring_intro} would then follow from linear programming duality, which asserts that the fractional clique number of any graph equals its fractional chromatic number.

The remainder of the paper is organized as follows. In \Cref{sec:commutation} we describe the commutation graph and its properties, as well as our frameworks for one- and two-copy Clifford learning. In \Cref{sec:learnallpaulis} we describe techniques that go into our algorithms for learning all Paulis: in \Cref{sec:signs_from_magnitudes} we describe the algorithm to compute a mimicking state, which completes the proof of \Cref{thm:mmw_informal}, and in \Cref{sec:coloring} we give the proof of \Cref{lem:coloring_intro}. Finally in \Cref{sec:majoranas} we establish our results concerning shadow tomography for $k$-body fermionic operators. In \Cref{sec:lower_bound} we establish the lower bound \Cref{thm:notfermions} for single-copy learning. In \Cref{sec:1-RDM} we describe the simple triply efficient learning algorithm for the special case $k=1$, and then in \Cref{sec:kbody} we consider the case $k\geq 2$ and prove \Cref{lem:chibindfermion_intro}.

\paragraph{Independent work.} While writing this paper, we learned of related, independent work by Chen, Gong, and Ye~\cite{Chen2024} that studies the sample complexity of Pauli shadow tomography with limited quantum memory. We have coordinated our arXiv submissions so that both papers appear on the same day.

\section*{Acknowledgments}
We thank Bill Huggins for valuable comments on this manuscript. 
DG thanks Jim Geelen for a helpful discussion about chi-boundedness, and Sophie Spirkl for explaining how the result of Gy\'arf\'as \cite{gyarfas1987problems} can be turned into an algorithm. RB thanks Nicholas Rubin and Joonho Lee for discussions about using fermionic RDMs in embedding and perturbation theory contexts. DG is a CIFAR fellow in the quantum information science program.

\section{Commutation, learning, and coloring} \label{sec:commutation}

In this section we describe properties of the commutation graph $G(S)$ of a set of Pauli observables $S \subseteq \mathcal{P}^{(n)}$, and the connection between these properties and shadow tomography algorithms. In \Cref{sec:commutation_index} we describe the tension between learning and anticommutativity of a set of observables. This is quantified by a \textit{commutation index} which was shown in Ref.~\cite{chen2022exponential} to lower bound the sample complexity of single-copy learning. Then in Sections \ref{sec:fractional_colorings} and \ref{sec:two_copy_learning} we describe our frameworks for one- and two-copy learning with probabilistic Clifford measurement strategies.

\subsection{Commutation index}\label{sec:commutation_index}

The following result represents a kind of uncertainty principle, generalizing the familiar Bloch sphere constraint $\langle X\rangle^2 + \langle Y\rangle^2 + \langle Z\rangle^2 \leq 1$.

\begin{lemma} \label{lem:anticommuting_bound}\label{lem:pairwise}
If Pauli operators $P_1, \dots, P_m$ pairwise anticommute, then for any state $\rho$
\begin{equation}
\sum_j \Tr\left(P_j \rho\right)^2 \leq 1.
\end{equation}
\end{lemma}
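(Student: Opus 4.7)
The plan is to consider the Hermitian operator $M := \sum_j \Tr(P_j \rho)\, P_j$ and compute $M^2$ using the anticommutation relations. Since each $P_j$ is a Pauli and therefore satisfies $P_j^2 = \mathbbm{1}$, and since $P_j P_k + P_k P_j = 0$ for all $j \ne k$ by hypothesis, expanding the square and pairing the $(j,k)$ and $(k,j)$ terms in the double sum gives cancellation of all off-diagonal contributions. The diagonal terms contribute $\Tr(P_j\rho)^2 \mathbbm{1}$. Setting $c := \sum_j \Tr(P_j \rho)^2$, this yields the clean identity $M^2 = c\, \mathbbm{1}$.

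From this identity I conclude that the eigenvalues of $M$ lie in $\{+\sqrt{c}, -\sqrt{c}\}$, so the operator norm satisfies $\|M\|_{\op} = \sqrt{c}$. The proof is then completed by comparing two evaluations of $\Tr(M\rho)$. On one side, linearity of the trace together with the definition of $M$ gives $\Tr(M\rho) = \sum_j \Tr(P_j\rho)^2 = c$. On the other side, because $M$ is Hermitian and $\rho$ is a density matrix, the standard inequality $\Tr(M\rho) \le \lambda_{\max}(M) \le \|M\|_{\op}$ yields $\Tr(M\rho) \le \sqrt{c}$. Combining the two gives $c \le \sqrt{c}$, hence $c \le 1$, which is the claimed uncertainty bound.

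There is no serious obstacle here; the only step requiring attention is the bookkeeping in the calculation of $M^2$, in particular making sure to separate the diagonal contribution $\sum_j \Tr(P_j\rho)^2 P_j^2$ from the off-diagonal part and using the symmetric pairing $\Tr(P_j\rho)\Tr(P_k\rho)(P_j P_k + P_k P_j) = 0$ for $j \ne k$. The argument is a clean manifestation of the Clifford-algebra structure generated by any collection of pairwise anticommuting order-two involutions, combined with the elementary fact that the expectation of a Hermitian operator in a quantum state is bounded by its operator norm.
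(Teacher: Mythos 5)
Your proof is correct and takes essentially the same route as the paper's: both form the operator $M=\sum_j \Tr(P_j\rho)\,P_j$ and use pairwise anticommutation together with $P_j^2=\mathbbm{1}$ to get $M^2 = c\,\mathbbm{1}$ with $c=\sum_j \Tr(P_j\rho)^2$. The only (cosmetic) difference is the finishing step: you bound $\Tr(M\rho)=c$ by the operator norm $\sqrt{c}$ via the spectrum of $M$, whereas the paper uses nonnegativity of the variance, $\Tr(M\rho)^2\leq \Tr(M^2\rho)$; both immediately give $c\leq 1$.
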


Versions of \Cref{lem:anticommuting_bound} appear in various papers, for example \cite[Theorem 1]{asadian2016heisenberg}. Since the proof is simple, we reproduce it here.

\begin{proof}
Given $\rho$, let $a_j = \Tr\left(P_j \rho\right)$. We aim to show $\sum_j a_j^2 \leq 1$.
Consider the observable
\begin{equation}
Q = \sum_j a_j P_j.
\end{equation}
We will use the inequality $\Tr(Q^2\rho) - \Tr(Q\rho)^2 = \text{Var}_\rho(Q) \geq 0$. Formally, this holds since $\rho$ is positive semi-definite so $\Tr(O\rho) \geq 0$ for any positive semi-definite operator $O$ and
\begin{equation}
\Tr\left(\big(Q - \Tr\left(Q \rho\right)  \mathbbm{1}\big)^2 \rho\right) \geq 0 \
\implies \ \Tr\left(Q \rho\right)^2 \leq \Tr\left(Q^2 \rho\right).
\end{equation}
Due to anticommutativity of $P_1, \dots, P_m$, we have
\begin{equation}
Q^2 = \sum_j a_j^2 P_j^2 + \sum_{j \neq l} a_j a_l P_j P_l
= \sum_j a_j^2\cdot  \mathbbm{1} + \frac{1}{2} \sum_{j \neq l} a_j a_l \{P_j,P_l\}
= \sum_j a_j^2\cdot  \mathbbm{1}.
\end{equation}
Note $P_j^2 =  \mathbbm{1}$ since they are Hermitian unitaries. Thus $\Tr(Q^2\rho) = \sum_j a_j^2$. On the other hand $\Tr(Q\rho) = \sum_j a_j^2$, thus
\begin{equation}
\Tr\left(Q \rho\right)^2 \leq \Tr\left(Q^2 \rho\right) \ \implies \ \Big(\sum_j a_j^2\Big)^2 \leq \sum_j a_j^2 \ \implies \ \sum_j a_j^2 \leq 1.\qedhere
\end{equation}
\end{proof}

In \Cref{lem:anticommuting_bound}, we saw that pairwise anticommuting Pauli operators cannot all have large expected values in a quantum state, as the sum of their squared expected values cannot exceed $1$. More generally, the average of squares of expected values of Paulis in a set $S$ is defined to be its \textit{commutation index}.

\begin{definition} \label{def:range}
For a set $S$ of Pauli operators, define their commutation index by
\begin{equation}
\Delta(S) = \frac{1}{|S|}\max_{\rho} \sum_{P \in S} \Tr\left(P \rho\right)^2.
\end{equation}
\end{definition}
For example, \Cref{lem:anticommuting_bound} shows that if all the Paulis in $S$ anticommute, then $\Delta(S) \leq 1/|S|$.

Ref.~\cite{chen2022exponential} shows that the inverse of the commutation index is a lower bound on the sample complexity of single-copy shadow tomography for $S$. It can be interpreted as describing a tension between anticommutation and learnability. (Maximization over states as stated in \cite[Equation 79]{chen2022exponential} can be replaced by maximization over density matrices via convexity.)

\begin{theorem}[Theorem 5.5, \cite{chen2022exponential}] \label{thm:CCHL22}
Shadow tomography to precision $\eps$ for a set $S$ of Pauli observables with single-copy measurements requires at least
\begin{equation}
\Omega\left(\frac{1}{\eps^2 \Delta(S)}\right)
\end{equation}
copies of $\rho$. This holds even for adaptive measurement strategies.\footnote{The lower bound holds even if the shadow tomography scheme is only able to output the absolute values of the expectation values to precision $\eps$.}
\end{theorem}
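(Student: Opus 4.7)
The plan is to prove the lower bound by a standard hypothesis-testing (Le Cam) argument, with the per-sample information bound coming directly from the definition of $\Delta(S)$. The intuition is that a single-copy POVM can transfer at most $\eps^2 \Delta(S)$ of squared-information about an unknown signed perturbation in the Pauli direction, so $N$ copies transfer $O(N\eps^2 \Delta(S))$ and we need this to be $\Omega(1)$ to distinguish a signed perturbation of magnitude $2\eps$ from zero.

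First I would set up the two hypotheses. Let $\rho_0 = \mathbbm{1}/2^n$, and for each $P \in S$ and sign $b \in \{\pm 1\}$, let $\rho_{P,b} = (\mathbbm{1} + 2\eps b P)/2^n$, which is a valid density matrix for $\eps \leq 1/2$. Note $\Tr(Q\rho_0)=0$ for every traceless Pauli $Q$, whereas $\Tr(P\rho_{P,b}) = 2\eps b$, so a protocol that $\eps$-estimates every Pauli in $S$ with probability $\geq 0.99$ yields, by thresholding the output $y_P$, a test that distinguishes $\rho_0$ from the ensemble $\{\rho_{P,b}\}_{P,b}$ with constant bias. By Le Cam's inequality, this forces the measurement distributions to have total variation $\Omega(1)$, and hence KL divergence $\Omega(1)$.

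The core step is the per-sample chi-squared bound. Fix any POVM $M=\{M_z\}$ applied to a single copy. With $p_z^{(P,b)} = \Tr(M_z \rho_{P,b})$ and $q_z = \Tr(M_z \rho_0) = \Tr(M_z)/2^n$, a direct calculation gives $p_z^{(P,b)}-q_z = (2\eps b/2^n)\,\Tr(M_z P)$, so
\begin{equation}
\frac{(p_z^{(P,b)}-q_z)^2}{q_z} \;=\; \frac{4\eps^2}{2^n\,\Tr(M_z)}\,\Tr(M_z P)^2.
\end{equation}
Averaging over uniformly random $P\in S$ and applying \Cref{def:range} to the density matrix $M_z/\Tr(M_z)$, we get $\tfrac{1}{|S|}\sum_{P\in S}\Tr(M_z P)^2 \leq \Delta(S)\,\Tr(M_z)^2$, hence
\begin{equation}
\mathbb{E}_P\!\left[\,\sum_z \frac{(p_z^{(P,b)}-q_z)^2}{q_z}\right] \;\leq\; 4\eps^2\,\Delta(S)\sum_z q_z \;=\; 4\eps^2\,\Delta(S).
\end{equation}
In particular, $\mathbb{E}_P[\chi^2(M(\rho_{P,b})\,\|\,M(\rho_0))] \leq 4\eps^2\Delta(S)$, and since $\mathrm{KL}\leq \chi^2$, the average per-sample KL is at most $4\eps^2 \Delta(S)$.

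Finally I would handle adaptivity via the KL chain rule. For an adaptive $N$-round protocol, round $t$ uses a POVM $M_t$ that is a measurable function of the history $y_{<t}$, applied to an independent copy of the state. The chain rule gives that the KL between the joint transcript distributions under $\rho_{P,b}^{\otimes N}$ and $\rho_0^{\otimes N}$ equals $\sum_t \mathbb{E}_{y_{<t}}[\mathrm{KL}(M_t(\rho_{P,b})\,\|\,M_t(\rho_0))]$, and the per-round bound above holds uniformly in $M_t$, hence uniformly in $y_{<t}$. Averaging over $P$ and exchanging expectations gives a total KL bound of $4N\eps^2\Delta(S)$ for the mixture over $P$. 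Combined with the Le Cam requirement $\mathrm{KL}=\Omega(1)$, this forces $N = \Omega(1/(\eps^2 \Delta(S)))$. The main subtlety, as in all such lower bounds, is precisely the adaptive step: one must be careful that the state-independent bound $\frac{1}{|S|}\sum_P \Tr(M_z P)^2 \leq \Delta(S)\Tr(M_z)^2$ holds uniformly over all history-dependent POVMs, which is automatic because $\Delta(S)$ is defined as a maximum over all states and therefore gives a POVM-independent constant.
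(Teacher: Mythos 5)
The paper does not actually prove this statement; it imports it from \cite{chen2022exponential}, whose argument runs a many-versus-one distinguishing task ($\rho_0=\mathbbm{1}/2^n$ versus $(\mathbbm{1}\pm O(\eps)P)/2^n$ for random $P\in S$) through a tree representation of adaptive single-copy algorithms and controls likelihood ratios with the history weighted by the \emph{null} distribution. Your overall strategy is the same in spirit, and your per-round computation is correct: $p_z^{(P,b)}-q_z=2\eps b\Tr(M_zP)/2^n$, and applying \Cref{def:range} to the state $M_z/\Tr(M_z)$ gives $\tfrac{1}{|S|}\sum_{P\in S}\Tr(M_zP)^2\leq\Delta(S)\Tr(M_z)^2$, hence an averaged per-measurement $\chi^2$ (and KL) of $O(\eps^2\Delta(S))$. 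This is exactly the mechanism by which $\Delta(S)$ enters the bound.

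However, your handling of adaptivity has a genuine gap. In the chain rule for $\mathrm{KL}(\mathbb{P}^{(P,b)}\,\Vert\,\mathbb{Q})$ the history $y_{<t}$ is drawn from the alternative $\mathbb{P}^{(P,b)}$, which is correlated with $P$. Your per-round bound $\mathbb{E}_{P,b}[\mathrm{KL}_t]\leq 4\eps^2\Delta(S)$ holds for each \emph{fixed} history (i.e.\ fixed POVM), but ``exchanging expectations'' replaces $\mathbb{E}_{P,b}\,\mathbb{E}_{y_{<t}\sim\mathbb{P}^{(P,b)}}$ by $\mathbb{E}_{y_{<t}}\,\mathbb{E}_{P,b}$, which is precisely what the correlation between transcript and $P$ forbids: an adaptive algorithm that had partially identified $P$ from its history would next measure in a basis aligned with $P$, and on such histories the conditional per-round KL is $\Theta(\eps^2)$ rather than $\Theta(\eps^2\Delta(S))$; ruling this out is the very statement being proven, so the step is circular as written. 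The standard repair---and morally what \cite{chen2022exponential} does---is to make the conditioning history null-distributed: bound $\mathbb{E}_{P,b}[\mathrm{KL}(\mathbb{Q}\,\Vert\,\mathbb{P}^{(P,b)})]$ instead, whose chain rule weights histories by $\mathbb{Q}$, independent of $(P,b)$, so the exchange is legitimate; since $p_z^{(P,b)}\geq(1-2\eps)q_z$, the reversed per-round $\chi^2$ is still $O(\eps^2\Delta(S))$ on average over $P$. Then Pinsker, Jensen, and convexity of total variation give $\mathrm{TV}\bigl(\mathbb{Q},\mathbb{E}_{P,b}[\mathbb{P}^{(P,b)}]\bigr)=O\bigl(\sqrt{N\eps^2\Delta(S)}\bigr)$, and your thresholding reduction (better taken with perturbation $3\eps b P$ so that $|y_P|\geq 2\eps$ under the alternative versus $|y_P|\leq\eps$ under the null, avoiding the tie at $|y_P|=\eps$) forces this to be $\Omega(1)$, yielding $N=\Omega\bigl(1/(\eps^2\Delta(S))\bigr)$ as required.
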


The commutation index $\Delta(S)$ can be upper bounded in terms of the \emph{Lovasz $\vartheta$-function} \cite{de2023uncertainty, xu2023bounding, hastings2022optimizing}.

\begin{definition}
Let $G$ be a graph on $m$ vertices. The \emph{Lovasz $\vartheta$-function} $\vartheta(G)$ is defined by the following semidefinite program of dimension $m$: 
\begin{equation}
\max \ \{ \Tr\left(J X\right), \ X \in \mathbb{R}^{m \times m} \ \text{s.t.} \ X \succeq 0 , \ \Tr{X} = 1 , \ X_{jl} = 0 \ \forall (j,l) \in E  \}, \label{eq:theta_primal}
\end{equation}
where $E$ denotes the edge set of the graph $G$ and $J$ the all-ones matrix. 
It has dual
\begin{equation}
\min \ \{ \lambda \in \mathbb{R} \ \text{s.t.} \ \exists \ A \in \mathbb{R}^{m \times m} , \ A_{jj} = 1 \ \forall j , \ A_{jl} = 0 \ \forall (j,l) \notin E , \ \lambda A \succeq J  \}. \label{eq:theta_dual}
\end{equation}
\end{definition}

The following result is a generalization of \Cref{lem:anticommuting_bound}, with proof in \Cref{app:commutation_graph}.

\begin{restatable}{lemma}{commutationgraph}\label{lem:commutation_graph}
\emph{(\cite{de2023uncertainty, xu2023bounding, hastings2022optimizing})}
Let $S$ be a set of Pauli operators. Then
\begin{equation}
\Delta(S) \leq \frac{\vartheta(G(S))}{|S|}.
\end{equation}
\end{restatable}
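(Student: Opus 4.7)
The plan is to construct, for any state $\rho$, an explicit feasible point of the primal SDP \eqref{eq:theta_primal} for $\vartheta(G(S))$ whose objective value is at least $\sum_{P\in S}\Tr(P\rho)^2$. Maximizing over $\rho$ and dividing by $|S|$ then yields the bound on $\Delta(S)$.

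First I would purify $\rho$ as $|\psi\rangle$ on an enlarged system and define the vectors $|\phi_P\rangle := (P\otimes \mathbbm{1})|\psi\rangle$ for $P\in S$. Their Gram matrix $G_{PQ} := \langle \phi_P|\phi_Q\rangle = \Tr(PQ\rho)$ is Hermitian PSD with unit diagonal. A one-line computation using $(PQ)^\dagger = \pm QP$ shows $G_{PQ}$ is real when $P,Q$ commute and purely imaginary when they anticommute (as in the proof of \Cref{lem:anticommuting_bound}). Therefore the real part $M := \re(G)$ is a real symmetric PSD matrix with $M_{PP}=1$ and $M_{PQ}=0$ for every edge of $G(S)$.

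Next, writing $a_P := \Tr(P\rho)\in\mathbb{R}$, I would apply Cauchy--Schwarz to the pairing
\begin{equation}
\Bigl\langle \sum_P a_P\phi_P \,\Big|\, \psi \Bigr\rangle \,=\, \sum_P a_P\,\overline{a_P} \,=\, \sum_P a_P^2,
\end{equation}
which gives $\bigl(\sum_P a_P^2\bigr)^2 \leq \bigl\|\sum_P a_P \phi_P\bigr\|^2 = a^T G a = a^T M a$, the last equality using that $a$ is real so imaginary parts cancel. Finally, I would take
\begin{equation}
X \,:=\, \frac{1}{\sum_R a_R^2}\,(aa^T)\odot M,
\end{equation}
the Hadamard product of the rank-one PSD matrix $aa^T$ with $M$. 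By the Schur product theorem $X\succeq 0$; its diagonal is $a_P^2/\sum_R a_R^2$ so $\Tr(X)=1$; and the zeros of $M$ on edges force $X_{PQ}=0$ on edges. Hence $X$ is feasible for the primal SDP, and its objective is $\Tr(JX) = a^T M a/\sum_R a_R^2 \geq \sum_P a_P^2$ by the Cauchy--Schwarz inequality above, so $\vartheta(G(S))\geq \sum_P a_P^2$ as required.

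All the individual steps are elementary; the only piece of insight is spotting that Hadamard-multiplying the rank-one matrix $aa^T$ by the Gram-matrix-real-part $M$ simultaneously (i) preserves positive semidefiniteness, (ii) kills the entries forbidden by the SDP, and (iii) lower-bounds the objective by $\sum_P a_P^2$ via the Cauchy--Schwarz identity. I anticipate no real obstacles beyond bookkeeping the reality of the Gram-matrix entries on commuting pairs; the alternative route of working with the dual SDP looks less clean because one has less direct control over the off-diagonal entries of a feasible $A$.
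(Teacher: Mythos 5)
Your proof is correct, but it reaches the Lov\'asz bound by a genuinely different route than the paper. Both arguments hinge on the same matrix: your $M=\re(G)$, with $G_{PQ}=\Tr(PQ\rho)$, is exactly the paper's $B_{jl}=\tfrac12\Tr(\{P_j,P_l\}\rho)$, and your Cauchy--Schwarz step $\bigl(\sum_P a_P^2\bigr)^2\le a^TMa$ is the same inequality as the paper's variance bound $\Tr(Q\rho)^2\le\Tr(Q^2\rho)$ for $Q=\sum_j a_jP_j$. The divergence is in how this is converted into a bound by $\vartheta(G(S))$: the paper bounds $a^TBa\le\lambda_{\max}(B)\sum_j a_j^2$, packages the feasible $B$'s into an auxiliary program $\tilde\vartheta$, and then proves $\tilde\vartheta(G)\le\vartheta(G)$ (\Cref{lem:theta_tilde}) via the dual SDP \eqref{eq:theta_dual} with a Knuth-style orthonormal-vectors argument; you instead exhibit an explicit primal-feasible point $X=(aa^T)\odot M/\sum_R a_R^2$ for \eqref{eq:theta_primal}, using the Schur product theorem for positivity, so $\vartheta(G(S))\ge\Tr(JX)\ge\sum_P a_P^2$ follows in one step with no appeal to duality or to the alternative characterization of $\vartheta$. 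Your version is shorter and self-contained (the purification is even optional, since $\langle A,B\rangle_\rho=\Tr(A^\dagger B\rho)$ already gives positivity of $G$), whereas the paper's detour establishes the reusable fact $\tilde\vartheta\le\vartheta$. The only bookkeeping you should add is the degenerate case $\sum_R a_R^2=0$, where your $X$ is undefined but the claimed inequality is trivial.
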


\subsection{Fractional coloring and single-copy Clifford learning} \label{sec:fractional_colorings}

Here we describe the connection between shadow tomography algorithms that learn Pauli observables $S\subseteq \mathcal{P}^{(n)}$ using  probabilistic Clifford measurements, and fractional colorings of the commutation graph $G(S)$. A fractional coloring is a relaxation of the usual notion of graph coloring \cite{scheinerman2011fractional}.

\begin{definition} \label{def:fractional_coloring}
Let $G=(V,E)$ be a graph. A fractional coloring of $G$ of size $\chi$ is a probability distribution $q$ over independent sets $I \subseteq V$ with the property that
\begin{equation}
\forall  v \in V: \quad \mathrm{Pr}_{I \sim q}(v \in I) \geq 1 / \chi.
\end{equation}
The fractional chromatic number $\chi_f(G)$ of $G$ is the size of the smallest fractional coloring of $G$.
\end{definition}
Note that the size $\chi$ of a fractional coloring need not be an integer. Also note that a (standard, non-fractional) coloring of $G$ with $\chi$ colors can be regarded as a fractional coloring of size $\chi$, corresponding to a uniform distribution over color classes. So the fractional chromatic number of a graph is upper bounded by its chromatic number.

\Cref{thm:fractional}, restated below, asserts that if we have a fractional coloring of the commutation graph $G(S)$ of small size, then we can learn the expectation values of all observables in $S$ with few single-copy Clifford measurements. In particular, the sample complexity of the algorithm scales linearly with the size of the fractional coloring. 

In the following, samples from the fractional coloring are represented as binary vectors of length $|S|$ whose support is an independent set in $G(S)$. In this setting, the runtime to produce a single sample from a fractional coloring always satisfies $T\geq |S|$.

\fractional*

\begin{proof}
An independent set $I$ in the commutation graph $G(S)$ consists of a set of mutually commuting Pauli operators, which can be simultaneously measured by applying a Clifford circuit and measuring in the computational basis. Such a Clifford circuit can be computed using a classical algorithm with $O(n^3)$ runtime, via standard techniques in the stabilizer formalism \cite{aaronson2004improved}.

If we draw an independent set $I$ from a fractional coloring $q$ with size $\chi$ and then measure $\rho$ in the corresponding Clifford basis, the result gives us a measurement of Pauli $P\in S$ whenever $P\in I$. Note that it is also possible that we get more useful measurements than this---the Clifford unitary may diagonalize some Paulis $P$ that are not in $I$. 

Suppose we repeat this process independently $N$ times, sampling independent sets $I_1,I_2,\ldots, I_N$ and measuring $N$ independent identical copies of $\rho$ in the corresponding Clifford bases $C_1,C_2,\ldots, C_N$. For each $P\in S$, let $x_P^{j}\in \{-1,0,1\}$ be the random variable that is equal to the measured outcome of $P$ if it is diagonalized by $C_j$, and zero otherwise. 

We have
\begin{equation}
\mathrm{Pr}[x_P^{j}\in \{-1,1\}]\geq \mathrm{Pr}[P\in I_j]\geq  \frac{1}{\chi} \qquad j\in [N
] \qquad P\in S,\label{eq:probdiagonal}
\end{equation}
where we used the fact that $q$ is a fractional coloring of size $\chi$.

For each Pauli $P\in S$, let
\begin{equation}
R_P=\{j: x_P^j\in \{-1,1\}\} \quad \text{and} \quad 
N_P=|R_P|.
\end{equation}
Consider the sample mean
\begin{equation}
\tilde{P}= \frac{1}{N_P}\sum_{j\in R_P} x_P^{j}.
\end{equation}
Conditioned on a fixed value $N_P\geq 1$, this sample mean $\tilde{P}$ is an average of $N_P$ independent $\pm 1$-valued random variables. It satisfies 
\begin{equation}
\mathbb{E}(\tilde{P})=\mathrm{Tr}(\rho P) \qquad \text{and} \qquad \mathrm{Var}(\tilde{P})\leq \frac{1}{N_P}.
\end{equation}
By Chebyshev's inequality we have
\begin{equation}
\mathrm{Pr}\left[|\tilde{P}-\mathrm{Tr}(\rho P) |\geq \eps \; \; \big| \; \;  N_P\geq 100/\epsilon^2\right]\leq 0.01. 
\end{equation}
From Eq.~\eqref{eq:probdiagonal} we see that by taking $N=O(\chi / \eps^{2})$ we can ensure that, for a given Pauli $P$, we have $N_P\geq 100/\epsilon^2$ with probability at least $0.99$ (say). Therefore, 
\begin{align}
\mathrm{Pr}\left[|\tilde{P}-\mathrm{Tr}(\rho P) |\leq \eps\right]&\geq \mathrm{Pr}\left[N_P\geq 100/\epsilon^2\right]\cdot \mathrm{Pr}\left[|\tilde{P}-\mathrm{Tr}(\rho P) |\leq \eps \; \; \big| \; \;  N_P\geq 100/\epsilon^2\right]\\
&\geq 0.99^2
\end{align}
for each Pauli $P\in S$.

Now repeat the above process $L$ times, generating sample means $\tilde{P}_1,\ldots, \tilde{P}_L$ for each $P\in S$, and consider the median-of-means estimator
\begin{equation}
\lambda_P= \mathrm{median}(\tilde{P}_1,\tilde{P}_2,\ldots, \tilde{P}_L).
\label{eq:medmeans}
\end{equation}
By choosing $L=O(\log |S|)$ we can ensure that, for each $P\in S$ we have $|\lambda_P-\mathrm{Tr}(\rho P)|\leq \eps$ with probability at least $0.01/|S|$. By a union bound we get all the expected values in $S$ to within $\eps$ with probability at least $0.99$. The total number of samples of $\rho$ and the total number of samples from the fractional coloring $q$ used in the algorithm, are both at most $LN=O(\chi (\log|S|)/\eps^{2})$.

Now consider the runtime of the protocol. The independent sets $I$ in the fractional coloring are specified explicitly as subsets of $S$, so the median-of-means estimator $\lambda_P$ can be computed for all $P\in S$ with a runtime $O(LN|S|)$ once we have already obtained all the measurement data $\{x_P^j\}$. The total runtime is therefore upper bounded as $O((T+n^3+|S|)LN)$, where the first term is the cost of sampling the fractional colorings, the second term is the cost of computing a Clifford circuit for each sample (and applying this circuit to measure the state), and the third term is the cost of postprocessing. Since $T\geq |S|$  the runtime simplifies to $O((T+n^3)LN)$.
\end{proof}

Finally, let us show that single-copy measurement strategies  from \Cref{thm:fractional} learn a \textit{compressed classical representation} of $\rho$
that encodes the expected values of all Pauli observables from $S$ (to within error $\epsilon$).
 Indeed, each Clifford measurement basis has an efficient classical description consisting of $O(n^2)$ bits. We can imagine a version of the learning algorithm described in \Cref{thm:fractional} where, after the measurements are performed using \cref{eq:numfrac} copies of $\rho$, the resulting measurement outcomes and measurement bases are packaged up into a classical description $D(\rho, S, \eps)$ of size
\begin{equation}
O(n^2\chi(\log{|S|})/\eps^2).
\end{equation}
Here there is a factor of $n^2$ for each measurement basis (each measurement outcome only requires $n$ bits to describe and so describing the outcomes requires asymptotically fewer bits than describing the bases). The efficient protocol for extracting expected values $\mathrm{Tr}(\rho P)$ with $P\in S$ (up to $\eps$ error) can be performed using only the compressed classical description $D(\rho, S, \eps)$.

\subsection{Cliques and two-copy Clifford learning} \label{sec:two_copy_learning}

As discussed in the Introduction, we propose a framework for two-copy learning that uses an initial stage of Bell sampling (as in Ref.~\cite{huang2021information}) to determine a set $S_{\eps}\subseteq S$ that with high probability satisfies \cref{eq:seps}, which we restate:
\begin{equation}\label{eq:eps_support}
    |\Tr(\rho P)| \geq \eps/2 \quad \text{for all} \quad P\in S_{\eps}.
\end{equation}
This step uses $O((\log|S|)/\eps^{4})$ Clifford measurements on copies of $\rho\otimes \rho$. Then we aim to learn all observables in $S_{\eps}$ using the fractional coloring approach described in the previous section. In particular, we aim to find an (efficiently sampleable) fractional coloring of the commutation graph $G(S_{\eps})$. 

\begin{lemma}[Template for two-copy Clifford shadow tomography]
\label{lem:twocopycolor}
Suppose that $G(S_\epsilon)$ admits a fractional coloring of size $\chi$ that can be sampled by a randomized algorithm with runtime $T$. Then there is an algorithm which performs shadow tomography for $S$ using 
\begin{equation}
O((\log|S|)/\eps^{4}+\chi(\log|S|)/\eps^{2})
\end{equation}
two-copy Clifford measurements and runtime $O(|S|(\log|S|)/\eps^{4}+(T+n^3)\chi(\log|S|)/\eps^{2})$. 
\end{lemma}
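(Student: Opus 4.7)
The plan is to combine the learning magnitudes subroutine based on Bell sampling with the single-copy fractional-coloring protocol of \Cref{thm:fractional} applied to the random induced subgraph $G(S_\eps)$. These two stages contribute the two additive terms in the sample complexity bound, and the composition of their runtimes gives the claimed runtime bound.

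First, I would perform the learning magnitudes stage exactly as in the description preceding \cref{eq:Seps}: measure $O((\log|S|)/\eps^4)$ pairs $\rho\otimes\rho$ in the joint Bell basis that simultaneously diagonalizes all $P\otimes P$. This gives estimates $u_P$ satisfying $|u_P-|\Tr(P\rho)||\le\eps/4$ for every $P\in S$ with probability at least $0.995$. I then define
\begin{equation}
S_\eps=\{P\in S:u_P\ge 3\eps/4\}
\end{equation}
and output the trivial estimate $y_P=0$ for every $P\notin S_\eps$; this is within $\eps$ of $\Tr(P\rho)$ because on the good event $|\Tr(P\rho)|\le u_P+\eps/4\le\eps$. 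On the same good event, which I condition on from now on, \cref{eq:eps_support} holds, so $|\Tr(P\rho)|\ge\eps/2$ for every $P\in S_\eps$. Runtime for this stage is the $O((\log|S|)/\eps^4)$ Bell measurements (each costing $O(n)$ quantum time) plus $O(|S|(\log|S|)/\eps^4)$ classical postprocessing to form the $u_P$ for every $P\in S$.

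Next, to produce signed estimates $y_P$ for $P\in S_\eps$, I would invoke \Cref{thm:fractional} on the set $S_\eps$ using the hypothesized fractional coloring of $G(S_\eps)$ of size $\chi$, sampleable in time $T$. This stage uses only single-copy Clifford measurements, which qualify as two-copy Clifford measurements. \Cref{thm:fractional} yields estimates $y_P$ with $|y_P-\Tr(P\rho)|\le\eps$ for all $P\in S_\eps$ with probability at least $0.995$, consuming $O(\chi(\log|S_\eps|)/\eps^2)=O(\chi(\log|S|)/\eps^2)$ copies of $\rho$ and runtime $O((T+n^3)\chi(\log|S|)/\eps^2)$. A union bound over the two failure events (the magnitude estimates being too inaccurate, and the single-copy stage failing on $S_\eps$) keeps the overall failure probability below $0.01$.

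Summing the two stages gives the claimed sample complexity $O((\log|S|)/\eps^4+\chi(\log|S|)/\eps^2)$ of two-copy Clifford measurements, and total runtime
\begin{equation}
O\!\left(\tfrac{|S|\log|S|}{\eps^4}+\tfrac{(T+n^3)\chi\log|S|}{\eps^2}\right),
\end{equation}
matching the lemma statement. No step is genuinely hard: the only subtlety to watch is that $S_\eps$ is a random set determined by the first stage, so one must verify that both the clique/magnitude condition \cref{eq:eps_support} and the fractional-coloring guarantee of \Cref{thm:fractional} apply to this random subgraph; this is handled by conditioning on the good event from Bell sampling before invoking the coloring subroutine and then taking a union bound.
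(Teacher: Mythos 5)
Your proposal is correct and follows essentially the same two-stage argument as the paper: an initial Bell-sampling stage to determine $S_\eps$ (outputting $0$ for Paulis outside it), followed by invoking \Cref{thm:fractional} with the hypothesized fractional coloring of $G(S_\eps)$, with the two stages contributing the two terms in the sample and time bounds. The extra care you take about $S_\eps$ being a random set and the union bound over the two failure events is consistent with (and slightly more explicit than) the paper's proof.
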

\begin{proof}
The first step uses $O((\log{|S|})/\epsilon^{4})$ two-copy Bell measurements and classical runtime $O(|S|(\log{|S|})/\epsilon^{4})$  to compute the set $S_{\epsilon}$. We output zero as our estimate for the expected value of any Pauli  in $S\setminus S_{\epsilon}$. Then we use the single-copy learning protocol from \Cref{thm:fractional} to compute estimates of $\mathrm{Tr}(\rho P)$ for all $P\in S_{\epsilon}$. This second step uses runtime $O((T+n^3)  \chi (\log{|S|})/\epsilon^{2} )$.  
\end{proof}

The protocol from \Cref{lem:twocopycolor} is based on Clifford measurements, which have an efficient classical description. Because of this, and the fact that the classical postprocessing is simple, we obtain the following classical compressed representation of $\rho$.

\begin{lemma}[Rapid-retrieval compression]
\label{lem:rapidretrieval}
The shadow tomography protocol described in \Cref{lem:twocopycolor} learns a compressed classical representation of $\rho$ consisting of $B$ bits, where
\begin{equation}
B=O(n(\log|S|)/\eps^{4}+n^2\chi(\log|S|)/\eps^{2}).
\label{eq:bits}
\end{equation}
With high probability, this compressed representation has the following rapid-retrieval property. There is a classical algorithm which, given this classical data and any Pauli $P\in S$, outputs an estimate of $\mathrm{Tr}(\rho P)$ to within $\epsilon$ error. The runtime of the algorithm is $O(B)$.
\end{lemma}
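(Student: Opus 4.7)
The plan is to take the measurement record produced by the protocol of \Cref{lem:twocopycolor}, observe that it already has a natural classical encoding of the claimed size $B$, and then exhibit an $O(B)$-time deterministic retrieval procedure that recovers an $\eps$-accurate estimate of $\Tr(P\rho)$ for any input Pauli $P \in S$.

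The two parts of the encoding mirror the two stages of the learning algorithm. The first stage performs $O((\log|S|)/\eps^4)$ Bell measurements on $\rho\otimes\rho$; since the Bell basis simultaneously diagonalizes $Q\otimes Q$ for every $Q\in\mathcal{P}^{(n)}$, each outcome is specified by a single vector $y\in\{0,1\}^{2n}$ labelling a Bell basis vector, and the measured $\pm 1$ eigenvalue of $Q\otimes Q$ is a symplectic inner product of $Q$ with $y$ that is computable in $O(n)$ time. Storing all Bell outcomes thus costs $O(n(\log|S|)/\eps^4)$ bits. The second stage uses the algorithm of \Cref{thm:fractional}: for each of its $O(\chi(\log|S|)/\eps^2)$ samples we store the classical description of the Clifford basis via a stabilizer tableau of $O(n^2)$ bits \cite{aaronson2004improved}, together with the $n$-bit computational-basis outcome. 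Summing these contributions gives the bound $B = O(n(\log|S|)/\eps^4 + n^2\chi(\log|S|)/\eps^2)$ from \cref{eq:bits}.

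The retrieval algorithm on input $P$ then just replays, for this single $P$, the same estimator that the learning algorithm would have computed. First, it averages the induced $\pm 1$ values extracted from the Bell records to form the estimate $u_P^2$ of $\Tr(P\rho)^2$ in $O(n(\log|S|)/\eps^4)$ time; if $u_P < 3\eps/4$ it outputs $0$, which is $\eps$-close to $\Tr(P\rho)$ on the high-probability event of \cref{eq:epserr}. Otherwise $P \in S_\eps$, and the algorithm iterates over the stored Clifford records $(C_j, o_j)$, using the tableau of $C_j$ to decide in $O(n^2)$ time whether $C_j$ diagonalizes $P$ and, if so, to read off the induced outcome $x_P^j$; it then forms the median-of-means estimator $\lambda_P$ exactly as in \cref{eq:medmeans}. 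By the correctness analysis inside \Cref{thm:fractional}, which already union-bounds over all of $S$, we have $|\lambda_P - \Tr(P\rho)| \leq \eps$ with high probability, simultaneously for every $P$. The Clifford phase of retrieval runs in $O(n^2\chi(\log|S|)/\eps^2)$ time, so the total retrieval cost matches $O(B)$.

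The main technical point to be careful about, modest as it is, is that the high-probability guarantee of \Cref{lem:twocopycolor} already holds uniformly over all $P \in S$, so the same stored measurement record supports correct retrieval for \emph{every} Pauli query with no additional concentration argument. The only remaining work is to verify that both extractions are per-record linear in the per-record storage cost — $O(n)$ for Bell records and $O(n^2)$ for Clifford records — so that the retrieval runtime is genuinely linear in $B$.
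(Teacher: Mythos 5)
Your proposal is correct and follows essentially the same route as the paper: store the $2n$-bit Bell outcomes plus the $O(n^2)$-bit Clifford basis descriptions and $n$-bit outcomes, then on query $P$ recompute the Bell-sample mean to test membership in $S_\eps$ and, if needed, replay the median-of-means estimator by conjugating $P$ through each stored Clifford in $O(n^2)$ time (the paper represents each Clifford as a circuit of $O(n^2)$ gates rather than a tableau, an immaterial difference). The correctness guarantee is inherited, exactly as you note, from the union bound over all of $S$ already present in the analysis of \Cref{thm:fractional}.
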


\begin{proof}
The compressed representation consists of the $O((\log|S|)/\epsilon^{4})$ Bell samples (each one is $2n$ bits) as well as a classical description of each Clifford measurement basis and measurement outcome used in the second stage of the learning protocol. That is,  $N=O(\chi (\log|S|)/\epsilon^{2})$ Clifford measurement bases $C_1,C_2,\ldots, C_N$ (each described by a circuit with $O(n^2)$ one- and two-qubit Clifford gates) and corresponding measurement outcomes $z^1,z^2,\ldots, z^N \in \{0,1\}^{n}$. Given a Pauli $P\in S$ and this classical data, we can compute an $\epsilon$-error estimate of $\mathrm{Tr}(\rho P)$ in the following way. First, using the Bell samples, we determine if $P\in S_{\epsilon}$. This step requires us to compute a sample mean over the Bell samples, using runtime $O(n(\log|S|)/\epsilon^{4})$. If $P\notin S_{\epsilon}$, we output $0$ as our estimate. If $P\in S_{\epsilon}$ then we compute the median-of-means estimator from \cref{eq:medmeans}. To do this we have to compute indicator functions $x^j_P\in \{1,0,-1\}$ that describe the measured outcome of Pauli $P$ for each Clifford measurement basis $j$, which is given by
\begin{equation}
x_P^{j}=\langle z^{j}|C_j^{\dagger}P C_j|z^j\rangle.
\end{equation}
The RHS is computed using the stabilizer formalism: we update the Pauli $P$ by conjugating each gate in the circuit $C_j$ one-by-one, and this process takes a total runtime $O(n^2)$ since there are $O(n^2)$ one- and two-qubit Clifford gates in the circuit. The total runtime to extract the estimate of $\mathrm{Tr}(\rho P)$ is therefore
\begin{equation}
O(n(\log|S|)/\eps^{4}+n^2 \chi (\log|S|)/\eps^{2}).\qedhere
\end{equation}
\end{proof}

\Cref{lem:twocopycolor} forms the basis of several of the two-copy protocols that we present in this work. To use this framework one needs to find an efficiently sampleable fractional coloring of $G(S_{\epsilon})$. 

A challenge here is that the set $S_{\epsilon}$ and its commutation graph depend in a potentially complicated way on the unknown state $\rho$.  Ideally, we would like to understand any structural properties of this graph that can be leveraged to compute good fractional colorings.   In this paper we we will only exploit two simple properties:  (A) $G(S_{\epsilon})$ is an induced subgraph of $G(S)$ and (B) with high probability, $G(S_{\epsilon})$ does not have large cliques, as described in \Cref{lem:clique_bound_informal}, which we restate and prove below. 

\largestclique*
\begin{proof}
Recall that $S_{\epsilon}$ satisfies \cref{eq:eps_support} with high probability. We show that in this case the largest clique in  $G(S_{\epsilon})$ has size at most $4/\epsilon^2$.

Suppose there is a clique in $G(S_{\epsilon})$ of size $\omega$. The vertices of the clique are a set of pairwise anticommuting Pauli operators $P_1, P_2,\ldots, P_\omega$. Applying \Cref{lem:pairwise} with these operators and the state $\rho$ gives
\begin{equation}
\sum_{j=1}^{\omega} \mathrm{Tr}(P_j \rho)^2\leq 1.
\end{equation}
On the other hand from \cref{eq:eps_support} we have $\mathrm{Tr}(P_j \rho)^2\geq \epsilon^2/4$ for each $1\leq j\leq \omega$. Plugging into the above gives $\omega \eps^2/4\leq 1$, and therefore the size of the maximal clique is upper bounded as $\omega \leq 4/\epsilon^{2}$.
\end{proof}
To use our framework to learn Pauli observables $S\subseteq \mathcal{P}^{(n)}$, it suffices to establish a so-called \textit{chi-binding function} for the family of induced subgraphs of $G(S)$. That is, we seek a function $g(\omega)$ such that:

\vspace{0.2cm}
\textit{For any induced subgraph $G'$ of $G(S)$ with largest clique of size $\omega$, there is a fractional coloring of $G'$ with size $\chi\leq g(\omega)$.}
\vspace{0.2cm}

A statement of this form implies---via \Cref{lem:twocopycolor} and \Cref{lem:clique_bound_informal}---a shadow tomography algorithm that uses two-copy Clifford measurements and has sample complexity 
\begin{equation}
O((\log|S|)/\eps^{4}+g(4/\epsilon^2)(\log|S|)/\eps^{2}).
\end{equation}
\Cref{lem:twocopycolor} also gives an upper bound on the runtime of the protocol in terms of the time required to sample from the fractional coloring.

\section{Learning all Pauli observables} 
\label{sec:learnallpaulis}
In \Cref{sec:signs_from_magnitudes} we provide further details of the shadow tomography algorithm for all Paulis stated in \Cref{thm:mmw_informal}. Then, in \Cref{sec:coloring}, we give the proof of \Cref{lem:coloring_intro} which is used to establish the rapid retrieval Pauli compression stated in \Cref{cor:compression_intro}.

\subsection{Computing a mimicking state}
\label{sec:signs_from_magnitudes}

Here we complete the proof of \Cref{thm:mmw_informal} by showing that a mimicking state can be computed using $O(n\log(n/\epsilon)/\epsilon^{4})$ samples of $\rho$ and runtime $\mathrm{poly}(2^n, 1/\epsilon)$.

\begin{algorithm}[H] \caption{Compute a mimicking state} \label{alg:hypo_state}

\textbf{Input:} A precision parameter $\eps\in (0,1)$, $O(n \log(n/\eps) / \eps^4)$ copies of an unknown $n$-qubit state $\rho$, and estimates $\{u_P\}_{P\in \mathcal{P}^{(n)}}$, such that 
\begin{equation}
u_P \geq 0 \quad \text{and} \quad |u_P - |\Tr(P \rho)|| \leq \eps/4  \qquad P \in \mathcal{P}^{(n)}.
\end{equation}

\textbf{Output:} A classical description of a density matrix $\sigma$ satisfying the following mimicking state condition with high probability:
\begin{equation}
\label{eq:outputcondition}
|\Tr(P \sigma)| \geq \eps/4 \quad \text{for all} \ P \in \mathcal{P}^{(n)} \ \text{with} \ u_P \geq 3\eps/4.
\end{equation}

\textbf{Algorithm:}
\begin{enumerate}
    \item Set $T = \lceil 64 n / \eps^2\rceil$+1 and $\beta = \sqrt{n / T}$.
    \item Initialize $\omega^{(0)} =  \mathbbm{1} / 2^n$, the maximally mixed state.
    \item For $t = 0,\dots,T-1$,
    \begin{enumerate}
        \item Search for Pauli $P\in \mathcal{P}^{(n)}$ such that $u_P \geq 3\eps/4$ \emph{and} $|\Tr(P \omega^{(t)}) - u_P| > \eps/2$ \emph{and} $|\Tr(P \omega^{(t)}) + u_P| > \eps/2$.
        \item If there is no such Pauli $P$ then we are done, since we are guaranteed $|\Tr(P \omega^{(t)})| \geq \eps/4$ whenever $u_P \geq 3\eps/4$. Output $\sigma = \omega^{(t)}$.
        \item Else, do:
        \begin{enumerate}
            \item Use $O((\log T)/\eps^2)$ copies of $\rho$ to compute an estimate $r_P \in \{+1,-1\}$ such that $r_P=\sign(\mathrm{Tr}(\rho P))$ with probability at least $1-0.01/|T|$.
            \item Set
            \begin{equation}
            M^{(t)} = \sign\{\Tr(P \omega^{(t)}) - r_P u_P\} \cdot P.
            \end{equation}
            \item Set
            \begin{equation} \label{eq:mmw_gibbs}
            \omega^{(t+1)} = \frac{\exp\left(- \beta \sum_{\tau=1}^t M^{(\tau)}\right)}{\Tr\left(\exp\left(- \beta \sum_{\tau=1}^t M^{(\tau)}\right)\right)}.
            \end{equation}
        \end{enumerate}
    \end{enumerate}
    \item Output $\sigma = \omega^{(T)}$.
\end{enumerate}
\end{algorithm}

Before showing correctness of the algorithm, let us discuss its runtime and sample complexity. During the course of the algorithm we store a classical representation of the state $\omega^{(t)}$ as a matrix of size $2^n\times 2^n$, and in each of the $T=O(n/\eps^2)$ steps we need to exhaustively search over the set of Paulis $\mathcal{P}^{(n)}$, compute Pauli expected values in the state $\omega^{(t)}$, and compute matrix exponentials \cref{eq:mmw_gibbs}. The runtime of the algorithm is polynomial in the Hilbert space dimension $2^n$ and $1/\eps$. The sample complexity is $O(T (\log T)/\eps^2)$.

\Cref{alg:hypo_state} is a variant of the \emph{matrix multiplicative weights} algorithm, which has been used previously in the context of quantum learning \cite{brandao2017quantum,aaronson2018online}. To show correctness we follow the standard analysis, with some modifications.

\begin{lemma} \label{lem:MMW}
\emph{(\cite{arora2007combinatorial} Theorem 3.1)}
Suppose \Cref{alg:hypo_state} reaches step 4. Then the states \cref{eq:mmw_gibbs} satisfy
\begin{equation} \label{eq:regret_bound}
\sum_{t=1}^T \Tr(M^{(t)} \omega^{(t)}) - \lambda_{\min}\left(\sum_{t=1}^T M^{(t)}\right) \leq 2 \sqrt{nT}.
\end{equation}
\end{lemma}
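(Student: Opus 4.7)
The plan is to mirror the standard matrix multiplicative weights analysis by tracking the log-partition potential of the accumulated loss matrix and squeezing the claimed regret inequality from two-sided bounds on this potential.

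Concretely, introduce the potential
\begin{equation}
\Phi^{(t)} = \Tr\!\left(\exp\!\left(-\beta \sum_{\tau=1}^{t} M^{(\tau)}\right)\right),
\end{equation}
so that $\Phi^{(0)}=2^n$, and by \cref{eq:mmw_gibbs} we have $\omega^{(t+1)} = \exp(-\beta \sum_{\tau=1}^{t} M^{(\tau)})/\Phi^{(t)}$. (With the convention $\omega^{(1)}=\omega^{(0)}=\mathbbm{1}/2^n$ and $\sum_{\tau=1}^{0}=0$ this is consistent.) The plan is to bound $\Phi^{(T)}$ from above by following the updates step-by-step, and from below in terms of $\lambda_{\min}(\sum_t M^{(t)})$, then to take logarithms and rearrange.

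For the upper bound, I would apply the Golden--Thompson inequality $\Tr(e^{A+B})\leq \Tr(e^A e^B)$ at each step to get
\begin{equation}
\Phi^{(t)} \;\leq\; \Phi^{(t-1)}\cdot \Tr\!\left(\omega^{(t)}\, e^{-\beta M^{(t)}}\right).
\end{equation}
The key structural fact is that each $M^{(t)}$ is a signed Pauli, so $(M^{(t)})^2=\mathbbm{1}$ and its eigenvalues are $\pm 1$. For $\beta \in (0,1]$ (which holds since $T\geq 64n/\epsilon^2\geq n$, giving $\beta=\sqrt{n/T}\leq 1$), the scalar inequality $e^{-\beta s}\leq 1-\beta s+\beta^2$ for $s\in\{-1,+1\}$ lifts to the operator inequality $e^{-\beta M^{(t)}}\preceq (1+\beta^2)\mathbbm{1}-\beta M^{(t)}$. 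Combining this with $1+x\leq e^x$ yields
\begin{equation}
\Tr\!\left(\omega^{(t)} e^{-\beta M^{(t)}}\right) \;\leq\; \exp\!\left(\beta^2 - \beta\, \Tr(\omega^{(t)} M^{(t)})\right),
\end{equation}
and iterating gives $\Phi^{(T)} \leq 2^n \exp\!\bigl(\beta^2 T - \beta \sum_{t=1}^{T}\Tr(\omega^{(t)} M^{(t)})\bigr)$.

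For the lower bound, I would simply use $\Phi^{(T)} \geq \lambda_{\max}\bigl(\exp(-\beta\sum_t M^{(t)})\bigr) = \exp\!\bigl(-\beta\,\lambda_{\min}(\sum_t M^{(t)})\bigr)$. Taking logs of both bounds and dividing by $\beta$ gives
\begin{equation}
\sum_{t=1}^{T}\Tr(M^{(t)}\omega^{(t)}) \;-\; \lambda_{\min}\!\left(\sum_{t=1}^{T} M^{(t)}\right) \;\leq\; \frac{n\ln 2}{\beta} + \beta T,
\end{equation}
and plugging in $\beta=\sqrt{n/T}$ yields $(1+\ln 2)\sqrt{nT} < 2\sqrt{nT}$, as desired. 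The only mildly delicate step is the Golden--Thompson application combined with the second-order operator inequality; verifying that $\beta\leq 1$ under the algorithm's parameter choices is what makes the quadratic remainder term controllable and is the one place care is needed.
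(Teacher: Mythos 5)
Your proof is correct: the potential $\Phi^{(t)}=\Tr\exp\bigl(-\beta\sum_{\tau\le t}M^{(\tau)}\bigr)$, Golden--Thompson, the spectral bound $e^{-\beta M^{(t)}}\preceq(1+\beta^2)\mathbbm{1}-\beta M^{(t)}$ (valid since each $M^{(t)}$ is a signed Pauli with eigenvalues $\pm1$ and $\beta=\sqrt{n/T}\le 1$ because $T\ge 64n/\eps^2\ge n$), and the lower bound $\Phi^{(T)}\ge e^{-\beta\lambda_{\min}(\sum_t M^{(t)})}$ combine to give regret at most $n\ln 2/\beta+\beta T=(1+\ln 2)\sqrt{nT}\le 2\sqrt{nT}$, and your indexing convention matches the one implicit in the lemma. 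Note the paper does not prove this statement itself but imports it as Theorem 3.1 of Arora--Kale \cite{arora2007combinatorial}; your argument is precisely the standard potential-function proof of that cited result (mildly simplified by $(M^{(t)})^2=\mathbbm{1}$), so it is a self-contained write-up of the same underlying analysis rather than a different route.
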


Here $\lambda_{\min}(\cdot )$ denotes the smallest eigenvalue.
\begin{lemma}
With high probability the output of \Cref{alg:hypo_state} satisfies \cref{eq:outputcondition}.
\end{lemma}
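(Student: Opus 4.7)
The plan is to argue by contradiction via the MMW regret bound of Lemma~\ref{lem:MMW}: if \Cref{alg:hypo_state} terminates via step 3(b) at some $t<T$, then the output $\sigma=\omega^{(t)}$ satisfies \eqref{eq:outputcondition} by construction, so it suffices to rule out reaching step 4. Indeed, when step 3(b) fires, every Pauli $P$ with $u_P \geq 3\eps/4$ satisfies $|\Tr(P\omega^{(t)}) - u_P| \leq \eps/2$ or $|\Tr(P\omega^{(t)}) + u_P| \leq \eps/2$, and in either case $|\Tr(P\omega^{(t)})| \geq u_P - \eps/2 \geq \eps/4$, which is \eqref{eq:outputcondition}.

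Next I would condition on the ``good event'' that every sign estimate $r_{P_t}$ computed in step 3(c)(i) equals $\sign(\Tr(P_t\rho))$ for all iterations $t=0,\ldots,T-1$. Since any Pauli $P$ selected in step 3(a) has $u_P \geq 3\eps/4$, it satisfies $|\Tr(P\rho)| \geq u_P - \eps/4 \geq \eps/2$, so $O((\log T)/\eps^2)$ single-copy measurements suffice to recover the sign correctly with failure probability at most $0.01/T$; a union bound over the $T$ iterations gives the good event probability at least $0.99$.

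The core of the argument is to derive a contradiction assuming step 4 is reached. Write $M^{(t)} = s_t P_t$ with $s_t = \sign\{\Tr(P_t\omega^{(t)}) - r_{P_t} u_{P_t}\}$. Step 3(a) guarantees that both $|\Tr(P_t\omega^{(t)}) - u_{P_t}|$ and $|\Tr(P_t\omega^{(t)}) + u_{P_t}|$ exceed $\eps/2$, so regardless of the value of $r_{P_t}\in\{\pm 1\}$ we have $|\Tr(P_t\omega^{(t)}) - r_{P_t} u_{P_t}| > \eps/2$, and by the choice of $s_t$ this unsigns to $\Tr(M^{(t)}\omega^{(t)}) - s_t r_{P_t} u_{P_t} > \eps/2$. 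On the good event, $|r_{P_t} u_{P_t} - \Tr(P_t\rho)| \leq \eps/4$ (combining that $u_{P_t}$ is within $\eps/4$ of $|\Tr(P_t\rho)|$ and that $r_{P_t}$ is the correct sign), hence $|s_t r_{P_t} u_{P_t} - \Tr(M^{(t)}\rho)| \leq \eps/4$, and therefore
\begin{equation}
\Tr(M^{(t)}\omega^{(t)}) - \Tr(M^{(t)}\rho) \;>\; \eps/2 - \eps/4 \;=\; \eps/4.
\end{equation}
Summing over $t$ and using $\sum_t \Tr(M^{(t)}\rho) = \Tr\bigl((\sum_t M^{(t)})\rho\bigr) \geq \lambda_{\min}\bigl(\sum_t M^{(t)}\bigr)$, together with Lemma~\ref{lem:MMW}, yields $T\eps/4 < 2\sqrt{nT}$, i.e., $T < 64n/\eps^2$, contradicting the choice $T=\lceil 64n/\eps^2\rceil + 1$. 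Hence on the good event step 4 is never reached, and \eqref{eq:outputcondition} holds with probability at least $0.99$.

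The only subtle point is the ``$\eps/4$ budget'' bookkeeping: the per-iteration progress $\Tr(M^{(t)}\omega^{(t)}) - \Tr(M^{(t)}\rho)$ must remain positive after absorbing both the $\eps/4$ error in the magnitude estimate $u_{P_t}$ and the possibility of an incorrect sign $r_{P_t}$. The gap $\eps/2$ enforced in step 3(a) is exactly what makes these two $\eps/4$ error sources cancel while still leaving a positive $\eps/4$ gap per step, which is what drives the $T = \Theta(n/\eps^2)$ termination bound.
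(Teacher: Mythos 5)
Your proof is correct and follows essentially the same route as the paper's: condition on the good event that all sign estimates are correct, invoke the MMW regret bound, show a per-iteration gap of $\eps/4$ between $\Tr(M^{(t)}\omega^{(t)})$ and $\Tr(M^{(t)}\rho)$ using the step 3(a) threshold and the $\eps/4$ magnitude-estimate accuracy, and derive the contradiction $T\eps/4 < 2\sqrt{nT}$ with the choice $T=\lceil 64n/\eps^2\rceil+1$. The only cosmetic difference is that you bound $\Tr(M^{(t)}\omega^{(t)})-\Tr(M^{(t)}\rho)$ directly rather than passing through the explicit sign identity, which is algebraically the same content.
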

\begin{proof}

Let $P^{(t)}$ be the Pauli found at step $t$, and define
\begin{equation}
y^{(t)}= \Tr(P^{(t)} \omega^{(t)}) \ , \ r^{(t)} = r_{P^{(t)}} \ , \ u^{(t)}= u_{P^{(t)}}.
\end{equation}

By a union bound, with probability at least $0.99$, all estimates $r^{(t)}$ computed during the course of the algorithm satisfy $r^{(t)}=\sign(\mathrm{Tr}(\rho P^{(t)}))$. In the following we show that in this case the output of the algorithm is guaranteed to satisfy \cref{eq:outputcondition}.

Note that if the algorithm outputs a state $\sigma=\omega^{(t)}$ for some $t<T$ then the algorithm is correct, as guaranteed by the output condition. To complete the proof, below we show that this is the case---the algorithm never reaches step 4. 

Toward a contradiction, suppose the algorithm does reach step 4. By \Cref{lem:MMW}, we have
\begin{align}
\sum_{t=1}^T \sign\{y^{(t)} - r^{(t)} u^{(t)}\} \cdot y^{(t)} &\leq \lambda_{\min}\left(\sum_{t=1}^T \sign\{y^{(t)} - r^{(t)} u^{(t)}\} \cdot P^{(t)}\right) + 2 \sqrt{nT} \\
&\leq \sum_{t=1}^T \sign\{y^{(t)} - r^{(t)} u^{(t)}\} \cdot \Tr(P^{(t)} \rho) + 2 \sqrt{nT}.
\end{align}
In the last equation, we substituted our state $\rho$. Equivalently, we have
\begin{equation} \label{eq:regret_implication}
\sum_{t=1}^T \sign\{y^{(t)} - r^{(t)} u^{(t)}\} \cdot \Big(y^{(t)} - \Tr(P^{(t)} \rho)\Big) \leq 2 \sqrt{nT}.
\end{equation}

For each $t$ we have $|y^{(t)} - r^{(t)} u^{(t)}| > \eps/2$. But also $|r^{(t)} u^{(t)} - \Tr(P^{(t)} \rho)| \leq \eps/4$. Together these conditions imply
\begin{equation}
\label{eq:signs}
\sign\{y^{(t)} - r^{(t)} u^{(t)}\} = \sign\{y^{(t)} - \Tr(P^{(t)} \rho)\}
\end{equation}
and
\begin{equation}
\left|y^{(t)} - \Tr(P^{(t)} \rho)\right| > \frac{\eps}{4} \label{eq:online2}.
\end{equation}

From \cref{eq:signs} we get
\begin{equation}
\sign\{y^{(t)} - r^{(t)} u^{(t)}\} \cdot \Big(y^{(t)} - \Tr(P^{(t)} \rho)\Big) = \Big|y^{(t)} - \Tr(P^{(t)} \rho)\Big| \label{eq:online1}
\end{equation}
and combining this with \cref{eq:regret_implication} we get
\begin{equation}
\sum_{t=1}^T \left|y^{(t)} - \Tr(P^{(t)} \rho)\right| \leq 2 \sqrt{nT}.
\end{equation}
Plugging in \cref{eq:online2}  gives
\begin{equation}
T \cdot \frac{\eps}{4} \leq 2 \sqrt{nT}, 
\end{equation}
and therefore $T\leq 64n/\eps^{2}$. This is a contradiction, since  $T=\lceil 64n/\eps^{2}\rceil+1$ is larger than this. So the algorithm never reaches step 4, as claimed.
\end{proof}

\subsection{Coloring commutation graphs with bounded clique number}
\label{sec:coloring}

In this section we prove \Cref{lem:coloring_intro}, restated below.

\chibind*

We shall use the following algorithmic version of a result of Gy\'arf\'as, which we prove below.

\begin{lemma}[Algorithmic version of Thm.~2.4  of \cite{gyarfas1987problems}]
\label{lem:chi_binding}
Suppose $G$ is a graph on $m$ vertices whose longest induced path has $\ell$ vertices, with $\ell \geq 1$, and clique number $\omega$. Then there is a classical algorithm which colors $G$ using $\ell^{\omega - 1}$ colors and runtime $O(m^2 \omega)$.
\end{lemma}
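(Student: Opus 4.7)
The plan is to prove the lemma by induction on the clique number $\omega$, following Gy\'arf\'as's classical argument and verifying that the construction can be turned into an efficient algorithm.

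\textbf{Base case.} For $\omega = 1$ the graph $G$ is edgeless, and the algorithm outputs the trivial $1$-coloring in $O(m)$ time.

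\textbf{Inductive step.} Assume the statement for clique number $\omega-1$. For $\omega\ge 2$, I would first reduce to a connected $G$ by handling components independently; within a component, I would pick an arbitrary root $v$ and run BFS from $v$ in $O(m)$ time, producing layers $L_0 = \{v\}, L_1, \ldots, L_d$. A preliminary fact is that $d \leq \ell - 1$: any shortest path in $G$ is automatically induced (any chord would shortcut it), and hence has at most $\ell$ vertices by hypothesis, so there are at most $\ell$ nonempty layers.

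The heart of the argument is to find a partition of $V(G)$ into at most $\ell$ parts $V_1, \ldots, V_\ell$ such that each $G[V_j]$ has clique number at most $\omega - 1$. Given such a partition, the inductive hypothesis colors each $G[V_j]$ with at most $\ell^{\omega-2}$ colors; assigning disjoint palettes across the $\ell$ parts yields a proper coloring of $G$ using at most $\ell \cdot \ell^{\omega-2} = \ell^{\omega-1}$ colors.

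\textbf{Main obstacle.} The hard part is constructing this partition. The naive choice $V_j = L_{j-1}$ does not work because $\omega(G[L_i])$ can equal $\omega$ in general (e.g., when a single vertex in $L_{i-1}$ sees only a small part of a clique in $L_i$). This is precisely where the induced-path hypothesis enters: I would follow Gy\'arf\'as's path-extension argument, which shows that any obstruction to the partition having the desired clique-number drop would allow one to prepend (a suitable portion of) the induced shortest path from $v$ to a long induced subpath within a problematic layer, producing an induced path in $G$ with more than $\ell$ vertices and contradicting the hypothesis. Formally the induction can be carried out on the lexicographic pair $(\omega,\ell)$, reducing either the clique number or the induced-path-length parameter of each recursive subproblem.

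\textbf{Algorithmic implementation.} Turning the proof into an algorithm, each recursive call performs a BFS and associated bookkeeping in time linear in the size of its subgraph, and the recursion descends on subgraphs whose vertex sets partition $V(G)$. Tracking work over the recursion tree, using that each vertex participates in at most $O(\omega)$ levels of recursion (since the clique-number parameter strictly decreases along each branch in the worst case, and the path-length parameter absorbs the remaining cases), yields the claimed total runtime of $O(m^2 \omega)$, where the extra factor of $m$ accounts for adjacency queries and palette reassignment at each level.
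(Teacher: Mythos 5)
Your proposal has the right top-level skeleton---induction on $\omega$, a partition of the vertex set into at most $\ell$ parts each of clique number at most $\omega-1$, and disjoint palettes across parts, which indeed yields $\ell\cdot\ell^{\omega-2}=\ell^{\omega-1}$ colors---but the one step that actually carries the proof is missing. You correctly note that the BFS layers $L_i$ do \emph{not} enjoy the clique-number drop, and then you defer to ``Gy\'arf\'as's path-extension argument'' and a lexicographic induction on $(\omega,\ell)$ without constructing the partition or specifying the recursive subproblem in which $\ell$ decreases. As written, there is no mechanism that turns an $\omega$-clique inside a BFS layer into an induced path on more than $\ell$ vertices: the vertices of such a clique need not have a common neighbour in the previous layer, and concatenating the shortest path from $v$ with vertices of the layer need not stay induced (chords from the path into the layer are possible). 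So the inductive step is not established, and the runtime accounting, which presupposes this unspecified recursion, cannot be completed either.

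The paper closes exactly this gap by replacing BFS with a different traversal, ``neighbour-first search'': from the current vertex all not-yet-visited neighbours are attached as its children, and then the children are processed recursively, depth-first. The resulting spanning tree $T$ has two provable properties: (i) a vertex is adjacent in $G$ to no ancestor other than its parent, so every root-to-vertex path of $T$ is an induced path of $G$, whence the depth is at most $\ell-1$ and there are at most $\ell$ levels; and (ii) children of two distinct vertices in the same level are never adjacent in $G$, so any clique inside a level lies among the children of a single vertex and, together with that common parent, would form a clique of size $\omega+1$---hence every level has clique number at most $\omega-1$. The levels of $T$ are precisely the partition your argument needs; recursing on them with disjoint palettes gives $\ell^{\omega-1}$ colors, and since the levels partition the vertex set, the per-level costs $O(m_i^2(\omega-1))$ sum to $O(m^2\omega)$ using $\sum_i m_i^2\leq m^2$. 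To salvage your BFS-based route you would have to state and prove the actual path-extension lemma you allude to; the neighbour-first-search tree is how the paper avoids needing one.
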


\begin{proof}[Proof of \Cref{lem:coloring_intro}.]
Let $G'$ be an induced subgraph of $G(\mathcal{P}^{(n)})$. Below we show that $G(\mathcal{P}^{(n)})$, and therefore also $G'$, does not contain any induced paths with more than $2n + 1$ vertices. The claim then follows by applying \cref{lem:chi_binding}.

Suppose $P_1,P_2,\ldots,P_s$ is an induced path in $G(\mathcal{P}^{(n)})$, i.e.,
\begin{equation}
P_iP_{i+1}=-P_{i+1}P_i \text{ for }\quad 1\leq i\leq s-1 \quad \text{and} \quad [P_i, P_j]=0 \quad |i-j|\geq 2.
\label{eq:commute}
\end{equation}
Define Pauli operators
\begin{equation}
Q_r=P_1P_2\ldots P_r  \qquad 1\leq r\leq s.
\end{equation}
We now use \cref{eq:commute} to show that these operators are pairwise anticommuting. To see this note that
\begin{equation}
Q_{r+a}=\left(P_1P_2\ldots P_r\right)\left( P_{r+1}P_{r+2}\ldots P_{r+a}\right)=-\left( P_{r+1}P_{r+2}\ldots P_{r+a}\right)\left(P_1P_2\ldots P_r\right),
\end{equation}
where we used $P_r P_{r+1}=-P_{r+1}P_r$ and the fact that $[P_i, P_j]=0$ whenever $i\leq r-1$ and $j\geq r+1$. Therefore
\begin{equation}
Q_r Q_{r+a}=-Q_r\left( P_{r+1}P_{r+2}\ldots P_{r+a}\right)\left(P_1P_2\ldots P_r\right)=-Q_{r+a}Q_r,
\end{equation}
which shows that $\{Q_j\}_{j\in [s]}$ are pairwise anticommuting Pauli operators. 

It is a well known fact that the $n$ qubit Hilbert space does not contain any set of pairwise anticommuting Pauli operators with size greater than $2n+1$ (see for example Appendix G of Ref.~\cite{bonet2020nearly}). Thus $s \leq 2n+1$.
\end{proof}

\begin{figure}[thb]
	\centering
	\begin{subfigure}[t]{0.23\textwidth}
		\centering   
		\begin{tikzpicture}[every node/.style={circle,draw}]
			\node (1) at (90:1.5cm) {1};
			\node (2) at (162:1.5cm) {2};
			\node (3) at (234:1.5cm) {3};
			\node (4) at (306:1.5cm) {4};
			\node (5) at (18:1.5cm) {5};
			    
			\draw (1) -- (2);
			\draw (2) -- (3);
			\draw (3) -- (4);
			\draw (4) -- (5);
			\draw (5) -- (1);
		\end{tikzpicture}		           
		\caption{Original graph}
	\end{subfigure}
	~        
	\begin{subfigure}[t]{0.23\textwidth}
		\centering
		\begin{tikzpicture}
			[sibling distance=5em,
				every node/.style = {shape=circle, draw, align=center},
			level distance=1.25cm]			        
			\node {1}
			child { node {2}
				child { node {3} } }
			child { node {5}
				child { node {4} } };
		\end{tikzpicture}		        
		\caption{BFS spanning tree}
	\end{subfigure}
	~ 
	\begin{subfigure}[t]{0.23\textwidth}
		\centering
		\begin{tikzpicture}
			[every node/.style={shape=circle, draw, align=center},
			level distance=1cm]
			  
			\node {1}
			child { node {2} 
				child { node {3}
					child { node {4}
						child { node {5} }
					}
				}
			};
		\end{tikzpicture}		        
		\caption{DFS spanning tree}
	\end{subfigure}
	~ 
	\begin{subfigure}[t]{0.23\textwidth}
		\centering
		\begin{tikzpicture}
			[every node/.style={shape=circle, draw, align=center},
			level distance=1cm]			  
			\node {1}
			child { node {2} 
				child { node {3}
					child { node {4} }
				}
			}
			child { node {5}
			};
		\end{tikzpicture}		        
		\caption{NFS spanning tree}
	\end{subfigure}	    
	\caption{An example showing the spanning trees generated by breadth-first search (BFS), depth-first search (DFS), and our algorithm neighbour-first search (NFS) for the cycle on 5 vertices.}
\end{figure}
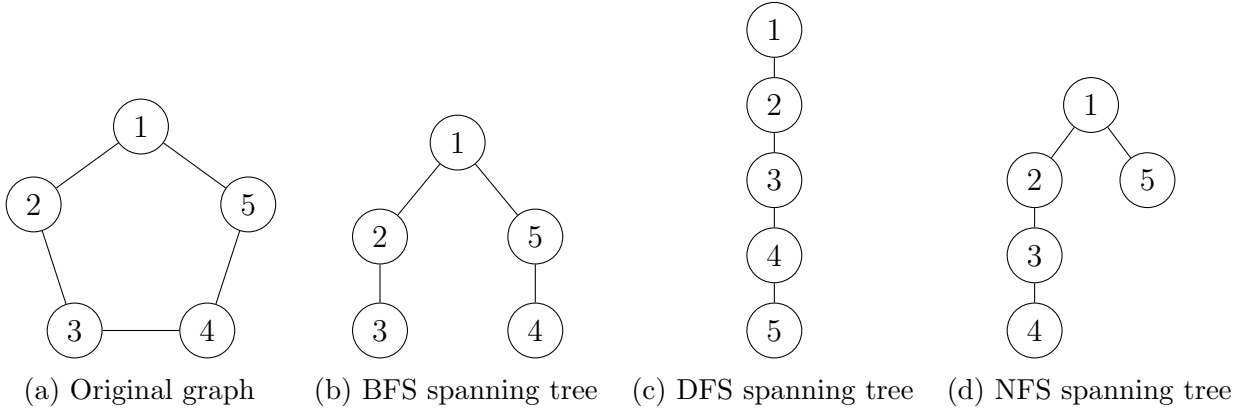

The algorithm of \Cref{lem:chi_binding} relies on a non-standard graph traversal algorithm, which can be interpreted as a combination of depth-first-search and breadth-first-search. The graph search algorithm begins with an arbitrary seed vertex $v$ in $G$, and generates a spanning tree of $G$ with root $v$. We call it \emph{neighbour-first search}.

\begin{algorithm}[H] \caption{Neighbour-first search (NFS)} \label{alg:neighbour_first_search}
\textbf{Input:} Connected graph $G$, seed vertex $v$.

\vspace{0.2cm}
\textbf{Output:} Spanning tree $T$ of $G$ with root $v$.

\medskip
NFS($G$, $v$):
\begin{enumerate}
    \item If $T$ is empty, initialize $T = \{v\}$.
    \item For each neighbour $w$ of $v$ which is not yet in $T$, add $w$ to $T$ as a child of $v$.
    \item For each child $w$ of $v$:
    \begin{enumerate}
        \item Do NFS($G$, $w$).
    \end{enumerate}
\end{enumerate}
\end{algorithm}

The spanning tree $T$ output by \Cref{alg:neighbour_first_search} is associated with a partition of the vertex set of $G$ into levels, which are the vertices at a fixed distance from the root of $T$. (The number of levels is the depth of the $T$ plus one.)

\begin{lemma}
Let graph $G=(V,E)$ have $m$ vertices, clique number $\omega$, and longest induced path with $\ell$ vertices.  \Cref{alg:neighbour_first_search} has runtime $O(|V|+|E|)=O(m^2)$ and outputs a spanning tree $T$ of $G$ with the following properties:
\begin{itemize}
    \item The depth of $T$ is no larger than $\ell-1$.
    \item The vertices in any level of $T$ induce a subgraph of $G$ with clique number at most $\omega - 1$.
\end{itemize}
\end{lemma}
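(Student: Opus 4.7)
The plan is to verify the three assertions separately: the $O(m^2)$ runtime, the depth bound, and the clique bound on each level.

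The runtime is a standard traversal analysis. Each vertex is added to $T$ exactly once (when its parent executes step 2) and is itself processed exactly once; during the processing of $v$ the algorithm scans $v$'s adjacency list to identify unvisited neighbours. Maintaining a boolean flag per vertex indicating membership in $T$ makes each such check constant-time, so the total work is $O(|V|+|E|) = O(m^2)$.

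For the depth bound, the goal is to show that every root-to-leaf path in $T$ is an induced path in $G$. Let $v_0, v_1, \ldots, v_k$ be such a path; each $\{v_i, v_{i+1}\}$ is a tree edge and hence an edge of $G$. Suppose toward a contradiction that $v_i$ and $v_j$ are adjacent in $G$ for some $j \geq i+2$. At the instant $v_i$ executes step 2 of NFS, the vertex $v_j$ has not yet been added to $T$: it lies in the subtree rooted at $v_{i+1}$, which only begins to be populated once $v_{i+1}$ is recursively processed, which occurs strictly after $v_i$'s step 2. Hence step 2 would adopt $v_j$ as a child of $v_i$, placing $v_j$ at level $i+1$ and contradicting its true level $j \geq i+2$. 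So every root-to-leaf path in $T$ has at most $\ell$ vertices, giving depth at most $\ell - 1$.

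For the clique bound, the key structural fact I would establish is that any two vertices $u, w$ lying at the same level $d$ of $T$ that are adjacent in $G$ must share a common parent in $T$. Suppose without loss of generality that $u$ is processed before $w$. Since $\{u, w\}$ connects two vertices at the same level it is not a tree edge, so $w$ cannot be adopted as a child of $u$; this forces $w$ to already be in $T$ at the moment $u$ executes step 2. Thus $w$'s parent $p_w$ was itself processed before $u$ was. By the standard nested-interval property of depth-first traversals, either $p_w$ is a proper ancestor of $u$ in $T$, or all of $p_w$'s subtree (in particular $w$) finishes processing before $u$ begins; the latter contradicts $u$ being processed before $w$. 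So $p_w$ is a proper ancestor of $u$, and since the unique ancestor of $u$ at level $d-1$ is $p_u$, we get $p_w = p_u$. Given this, any clique $K$ in the subgraph induced by a fixed level has all its members sharing a single parent $p$ in $T$, and $p$ is adjacent in $G$ to every element of $K$, so $K \cup \{p\}$ is a clique of size $|K|+1$ in $G$, yielding $|K| \leq \omega - 1$.

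The main obstacle I anticipate is formalizing the DFS-style ordering arguments rigorously (phrases like ``$w$ is in $T$ at the moment $u$ executes step 2'' or ``$p_w$'s subtree finishes before $u$ begins''). I would make these precise by attaching explicit start/finish timestamps to each recursive invocation of NFS and invoking the familiar nested-interval invariant: for any two vertices $x, y$, exactly one of $[\mathrm{start}(x),\mathrm{finish}(x)] \subseteq [\mathrm{start}(y),\mathrm{finish}(y)]$, the reverse, or the two intervals being disjoint holds, with containment equivalent to ancestry in $T$. Once this bookkeeping is in place, the arguments above are essentially one-line deductions.
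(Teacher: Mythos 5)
Your proposal is correct and follows essentially the same route as the paper: the runtime is the standard traversal bound, the depth bound comes from showing every root-to-leaf path of $T$ is an induced path of $G$, and the clique bound comes from showing adjacent vertices in a common level must share a parent (equivalently, children of distinct same-level vertices are never adjacent), so a level clique plus the common parent would be a clique of size $\omega+1$. The only difference is cosmetic: you justify the shared-parent fact via start/finish timestamps and the nested-interval property of the recursive calls (whose call tree is exactly $T$), whereas the paper argues it via the observation that the children of an earlier-processed same-level vertex sit in a region of unvisited vertices that gets absorbed into that vertex's subtree; your bookkeeping is, if anything, the more explicit of the two.
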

\begin{proof}
No vertex can share an edge with any ancestors in $T$ other than its parent, since if it shared an edge with an ancestor higher than its parent then it would have appeared at a higher level as a neighbour of the ancestor. This means a path from the root down $T$ forms an induced path, and the depth of $T$ cannot be longer than the longest induced path in $G$.

Consider two vertices $w_1$ and $w_2$ in the same level $t$ of the spanning tree $T$. Then the children of $w_1$ cannot share any edges in $G$ with the children of $w_2$. This is because the children of $w_1$ constitute a connected component of the subgraph of $G$ induced by all vertices that are not in the first $t$ levels of $T$.
Armed with this observation, consider a clique of size $\omega$ within a level of $T$. When combined with the common parent, this would form a clique of size $\omega+1$ in $G$, a contradiction. Thus the clique number of any level of $T$ is at most $\omega-1$.

Finally, similar to breadth-first search or depth-first search, since each edge is examined at most twice, the time complexity is  $O(|V|+|E|)=O(m^2)$.
\end{proof}

\begin{proof}[Proof of \Cref{lem:chi_binding}.]
We can prove the theorem by induction on $\omega$. Let $\mathcal{A}_\omega$ denote the coloring algorithm which applies to graphs of clique number $\omega$. When $\omega=1$, there are no edges and there is an algorithm $\mathcal{A}_1$ which can color the graph using a single color in $O(m^2)$ time. For the inductive step, assume there is a coloring algorithm $\mathcal{A}_{\omega-1}$ using $\ell^{\omega-2}$ colors and runtime $O(m^2 \omega)$ for any graph of clique number $\omega-1$ and longest induced path $\ell$.

The coloring algorithm $\mathcal{A}_\omega$ for graphs of clique number $\omega$ is as follows. First apply the neighbour-first search algorithm to find spanning tree $T$ of $G$. Then for each level of $T$, apply $\mathcal{A}_{\omega-1}$. For each level, we use a disjoint set of colors. Since there are at most $\ell$ levels in the $T$, the number of colors used by $\mathcal{A}_\omega$ is at most $\ell^{\omega-1}$ by the induction hypothesis.

It remains to analyze the runtime of $\mathcal{A}_\omega$. Say the neighbour-first search step has runtime at most $Cm^2$ in the worst case for some constant $C$. We will show that the runtime of $\mathcal{A}_\omega$ is at most $Cm^2\omega$. From the induction hypothesis, the applications of $\mathcal{A}_{\omega-1}$ have total runtime $\sum_i C m_i^2 (\omega-1) \leq C m^2 (\omega-1)$, where $m_i$ is the number of vertices in the $i^{\text{th}}$ layer. Here we used $\sum_i m_i^2 \leq \big(\sum_i m_i\big)^2 = m^2$. Thus the runtime of $\mathcal{A}_\omega$ is $C m^2 + C m^2 (\omega-1) \leq C m^2 \omega$.
\end{proof}

\section{Learning fermionic obervables} \label{sec:majoranas}

In this section we consider shadow tomography for local fermionic observables.  

A system of $n$ fermionic modes is associated  with a set of $2n$ Majorana fermion operators, which are mutually anticommuting Hermitian observables $\{c_a\}_{a\in [2n]}$ that act on a Hilbert space of dimension $2^n$. We can represent them by a set of Pauli operators satisfying
\begin{equation}
c_1,c_2,\ldots, c_{2n} \in \mathcal{P}^{(n)} \qquad c_ac_b+c_bc_a=2\delta_{ab} \mathbbm{1}.
\label{eq:majoranas2}
\end{equation}
There are a variety of specific choices (fermion-to-qubit mappings) that satisfy the above, including the Jordan-Wigner mapping \cite{jordan1928paulische}, the Bravyi-Kitaev mapping \cite{bravyi2002fermionic}, and the ternary tree mapping \cite{vlasov2019clifford,jiang2020optimal}. The latter two mappings have the desirable property that each Majorana fermion operator is represented by a Pauli operator of low weight $O(\log n)$.

For concreteness, we now review the ternary tree mapping, which is simple enough that it can be understood by looking at \Cref{fig:ternary_tree}. We place a qubit at each non-leaf vertex of a complete rooted ternary tree. The total number of qubits is $n=(3^\ell-1)/2$ where $\ell$ is the depth of the tree. The three edges that connect the vertex for a given qubit $q$ to its children are associated with the three single-qubit Pauli operators $X,Y,Z$ acting on qubit $q$. Each path from the root to a leaf defines an operator which is the tensor product of the Pauli operators which appear on all the edges along the path. One can easily check that these operators are pairwise anticommuting. Moreover, each operator defined in this way has weight exactly $\ell=\log_3(2n+1)$, and there are exactly $3^\ell=2n+1$ of them. We can take all but one of them to be the Majorana fermion operators $\{c_j\}_{j\in [2n]}$.

Below we are interested in $k$-body fermionic observables as defined in \Cref{sec:local}. We write
\begin{equation}
\Gamma(x)= i^{|x|\cdot(|x|-1)/2} c_1^{x_1}c_2^{x_2}\ldots c_{2n}^{x_{2n}} \qquad \qquad x\in \{0,1\}^{2n}
\label{eq:monomials}
\end{equation}
for the Majorana monomials, and 
\begin{equation}
\mathcal{F}^{(n)}_k=\{\Gamma(x): |x|=2k\}
\end{equation}
for the set of $k$-body Majorana operators on $n$ fermionic modes.

\begin{figure}[H]
\centering
  \includegraphics[width=3in]{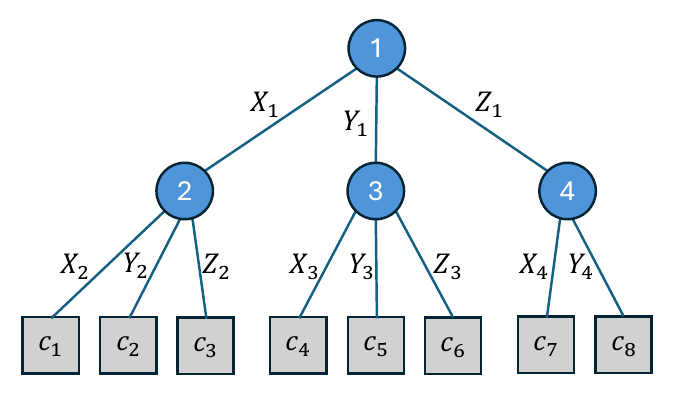}
  \caption{The ternary tree fermion-to-qubit mapping of Refs.~\cite{vlasov2019clifford,jiang2020optimal} for $n=4$ and $\ell=2$. For example, $c_1 = X_1 X_2 = X \otimes X \otimes \mathbbm{1} \otimes \mathbbm{1}$.
  \label{fig:ternary_tree}}
\end{figure}

\subsection{Single-copy lower bounds } \label{sec:lower_bound}

\begin{theorem} \label{thm:local_pauli_lowerbd}
Suppose $1\leq k\leq \log_3(2n+1)$. Any (possibly adaptive) single-copy protocol which learns $\Tr\left(P \rho\right)$ to precision $\eps$ for all $k$-local $n$-qubit Paulis $P \in \mathcal{P}^{(n)}_k$ with constant probability requires $\Omega(3^k / \eps^2)$ copies of $\rho$.
\end{theorem}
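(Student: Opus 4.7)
The plan is to exploit the commutation-index lower bound \Cref{thm:CCHL22} together with the ternary tree fermion-to-qubit construction reviewed in \Cref{sec:majoranas}. Since a lower bound on the sample complexity of shadow tomography for any subset $S' \subseteq \mathcal{P}^{(n)}_k$ is automatically a lower bound for $\mathcal{P}^{(n)}_k$ itself, it suffices to exhibit a well-chosen subset $S'$ of $k$-local Paulis for which the commutation index $\Delta(S')$ is as small as possible.

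First I would construct $S'$ explicitly as follows. Consider a complete rooted ternary tree of depth exactly $k$, whose internal vertices are $(3^k - 1)/2$ qubits; this is at most $n$ qubits precisely because $k \leq \log_3(2n+1)$. Label the three edges descending from each internal vertex by $X, Y, Z$ as in the ternary tree fermion-to-qubit mapping recalled in \Cref{fig:ternary_tree}. Each of the $3^k$ root-to-leaf paths yields a Pauli operator of weight exactly $k$, and hence lies in $\mathcal{P}^{(n)}_k$ (extended trivially by identities on the remaining $n - (3^k-1)/2$ qubits). As noted in the discussion of the ternary tree mapping, any two of these root-to-leaf Paulis anticommute, because their paths share a common prefix that ends at an internal vertex where the two paths diverge along edges labeled by distinct single-qubit Paulis.

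Next, I would apply the anticommuting uncertainty inequality \Cref{lem:anticommuting_bound} to $S'$: since the $|S'| = 3^k$ Paulis pairwise anticommute, for every state $\rho$
\begin{equation}
\sum_{P \in S'} \Tr(P\rho)^2 \leq 1,
\end{equation}
and therefore
\begin{equation}
\Delta(S') = \frac{1}{|S'|}\max_{\rho}\sum_{P \in S'}\Tr(P\rho)^2 \leq \frac{1}{3^k}.
\end{equation}

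Finally, invoking \Cref{thm:CCHL22} with the set $S'$ shows that even the restricted task of learning all expectation values $\Tr(P\rho)$ for $P \in S'$ to precision $\eps$ with single-copy (possibly adaptive) measurements requires
\begin{equation}
\Omega\!\left(\frac{1}{\eps^2 \Delta(S')}\right) \;=\; \Omega\!\left(\frac{3^k}{\eps^2}\right)
\end{equation}
copies of $\rho$. Any protocol that accomplishes the full task for $\mathcal{P}^{(n)}_k$ also solves this easier task (since $S' \subseteq \mathcal{P}^{(n)}_k$), giving the claimed lower bound. The only nontrivial ingredient is the existence of $3^k$ pairwise anticommuting weight-$k$ Paulis on $n$ qubits, and this is exactly where the hypothesis $k \leq \log_3(2n+1)$ enters; beyond this threshold no such family exists by the $2n+1$ upper bound on the size of any pairwise anticommuting set used in the proof of \Cref{lem:coloring_intro}, which is the main conceptual obstruction to pushing the argument further.
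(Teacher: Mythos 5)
Your proposal is correct and follows essentially the same route as the paper: exhibit $3^k$ pairwise anticommuting weight-$k$ Paulis via a depth-$k$ ternary tree (possible exactly when $k \leq \log_3(2n+1)$), bound the commutation index by $1/3^k$ using \Cref{lem:anticommuting_bound}, and invoke \Cref{thm:CCHL22} plus the trivial restriction from $\mathcal{P}^{(n)}_k$ to the subset. The only difference is that you spell out the explicit depth-$k$ tree construction and the anticommutation check, which the paper leaves to its review of the ternary tree mapping.
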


\begin{proof}
The ternary tree mapping \cite{vlasov2019clifford,jiang2020optimal} which we reviewed in the previous section shows that, for each $1\leq k\leq  \log_3(2n+1)$ there is a subset $S\subseteq \mathcal{P}^{(n)}_k$ of $3^k$ pairwise anticommuting operators within the set of $k$-local Paulis $\mathcal{P}^{(n)}_k$. See \Cref{fig:ternary_tree}. On the other hand, \Cref{lem:anticommuting_bound} shows that the commutation index of any set of $m$ pairwise anticommuting observables is at most $1/m$. Applying \Cref{thm:CCHL22}, we conclude that  $\Omega(3^k\epsilon^{-2})$ copies of $\rho$ are needed to learn the expected values of all observables from the set $S$. Therefore at least this many samples are needed to learn all expected values of observables in $\mathcal{P}^{(n)}_k$.
\end{proof}

The lower bound in \Cref{thm:local_pauli_lowerbd} matches the sample complexity of known single-copy protocols such as classical shadows, up to a factor of $k\log n$~\cite{huang2020predicting}, see \Cref{table}.

For local fermionic observables, one can obtain a much stronger single-copy sample complexity lower bound which scales polynomially in the system size $n$. This is telling us that local fermionic observables are much harder to learn than local Pauli observables. The following Theorem implies that there is no sample efficient single-copy protocol for $k$-body fermionic observables, as stated in \Cref{thm:notfermions}.

\begin{theorem} \label{thm:local_majorana_lowerbd}
Any (possibly adaptive) single-copy protocol which learns $\Tr\left(\Gamma \rho\right)$ to precision $\eps$ for all $k$-body Majorana operators $\Gamma$ on $n$ fermionic modes with constant probability requires number of copies scaling as $\Omega(n^k / \eps^2)$, for any fixed $k\geq 1$.
\end{theorem}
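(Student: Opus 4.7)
The plan is to apply the commutation-index lower bound of Theorem~\ref{thm:CCHL22}. Since $|\mathcal{F}^{(n)}_k|=\binom{2n}{2k}=\Theta_k(n^{2k})$, it suffices to establish the fermionic uncertainty-type inequality
\begin{equation}
\max_\rho \sum_{|z|=2k}\Tr(\Gamma(z)\rho)^2 = O_k(n^k),
\end{equation}
which gives $\Delta(\mathcal{F}^{(n)}_k)=O_k(1/n^k)$ and hence the claimed $\Omega(n^k/\eps^2)$ sample-complexity lower bound via Theorem~\ref{thm:CCHL22}.

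For the base case $k=1$ I would rely on the standard fact that the real antisymmetric $2n\times 2n$ covariance matrix $M_{ab}=-i\Tr(c_ac_b\rho)$ has operator norm at most one. Indeed, an orthogonal Bogoliubov transformation (a rotation preserving the Majorana anticommutation relations) brings $M$ into block-diagonal form with $2\times 2$ skew blocks whose off-diagonal entries are expectations of rotated Hermitian bilinears $ic'_{2j-1}c'_{2j}$, each of operator norm one. Hence $\tfrac12\|M\|_F^2\leq n$, which is precisely the desired sum in the $k=1$ case.

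For general $k$ my plan is to introduce the PSD ``Majorana $k$-RDM'' matrix $R^{(k)}$ of dimension $\binom{2n}{k}$, with entries $R^{(k)}_{X,Y}=\Tr(C_X^\dagger C_Y\rho)$ for ordered $k$-subsets $X,Y\subseteq [2n]$, where $C_X=c_{x_1}\cdots c_{x_k}$. Positivity comes from $v^\dagger R^{(k)} v=\Tr(A^\dagger A\rho)\geq 0$ with $A=\sum_Y v_Y C_Y$, and $\Tr(R^{(k)})=\binom{2n}{k}$ because $C_X^\dagger C_X=I$. Crucially, for disjoint $X,Y$ the entry $R^{(k)}_{X,Y}$ equals a phase times $\Tr(\Gamma(X\cup Y)\rho)$, and each $2k$-subset $z$ arises from $\binom{2k}{k}$ disjoint decompositions into a pair of $k$-subsets. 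This yields the chain
\begin{equation}
\binom{2k}{k}\sum_{|z|=2k}\Tr(\Gamma(z)\rho)^2\leq \|R^{(k)}\|_F^2\leq \|R^{(k)}\|_\op\cdot\Tr(R^{(k)})=\|R^{(k)}\|_\op\binom{2n}{k}.
\end{equation}

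The main obstacle is to bound $\|R^{(k)}\|_\op$ by a constant depending only on $k$, which I would pursue by showing $\|A\|_\op^2\leq C_k\|v\|^2$ for any grade-$k$ Clifford element $A=\sum_Y v_Y C_Y$. Expanding $A^\dagger A$ via the Majorana anticommutation relations yields $\|v\|^2 I$ plus Clifford terms of grades $2,4,\ldots,2k$, whose norms can be controlled iteratively. A cleaner alternative is to first verify the target inequality for fermionic Gaussian states using Wick's theorem --- where the sum $\sum_{|z|=2k}\mathrm{Pf}(M|_z)^2$ reduces, via the Cauchy--Binet-type identity $\det(I+M)=\prod_j(1+\sigma_j^2)$ for antisymmetric $M$, to the elementary symmetric polynomial $e_k(\sigma_1^2,\ldots,\sigma_n^2)\leq \binom{n}{k}$ thanks to $\|M\|_\op\leq 1$ --- and then extend to arbitrary states by convexity of $\rho\mapsto \sum_{|z|=2k}\Tr(\Gamma(z)\rho)^2$ together with a purification/embedding argument that relates any pure fermionic state to a Gaussian state on a larger system. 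Once a constant bound on $\|R^{(k)}\|_\op$ is in place we obtain $\sum_{|z|=2k}\Tr(\Gamma(z)\rho)^2=O_k\!\left(\binom{2n}{k}\!/\!\binom{2k}{k}\right)=O_k(n^k)$, and plugging into Theorem~\ref{thm:CCHL22} completes the proof.
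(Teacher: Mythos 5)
Your outer strategy---bounding the commutation index $\Delta(\mathcal{F}^{(n)}_k)=O_k(n^{-k})$ and then invoking \Cref{thm:CCHL22}---is the same as the paper's, and your $k=1$ argument via the covariance matrix is correct. The gap is in the central inequality $\max_\rho\sum_{|z|=2k}\Tr(\Gamma(z)\rho)^2=O_k(n^k)$ for $k\geq 2$. Your first route needs $\|R^{(k)}\|_{\op}=O_k(1)$, equivalently $\|A\|_{\op}\leq C_k\|v\|_2$ for every grade-$k$ element $A=\sum_Y v_Y C_Y$, and this is false already for $k=2$: take $A=n^{-1/2}\sum_{j=1}^{n} i\,c_{2j-1}c_{2j}$, which has $\|v\|_2=1$ but $\|A\|_{\op}=\sqrt{n}$ since the commuting factors $i c_{2j-1}c_{2j}$ can simultaneously take eigenvalue $+1$. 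Taking $\rho$ to be that simultaneous eigenstate shows $\|R^{(2)}\|_{\op}$ can be $\Theta(n)$, so the iterative norm control you sketch cannot succeed. Feeding $\|R^{(k)}\|_{\op}=\Theta(n)$ into your chain of inequalities only gives $\sum_{|z|=2k}\Tr(\Gamma(z)\rho)^2=O_k(n^{k+1})$, hence a lower bound of $\Omega(n^{k-1}/\eps^2)$, which falls short of the claim.

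Your second route correctly disposes of Gaussian states (the reduction of $\sum_{|z|=2k}\mathrm{Pf}(M|_z)^2$ to $e_k(\sigma_1^2,\dots,\sigma_n^2)\leq\binom{n}{k}$ is sound), but the extension to arbitrary states is unsupported: convexity only reduces the maximization to pure states, not to Gaussian ones, and a generic pure fermionic state is not obtainable as a restriction of a Gaussian state on more modes---Gaussian states obey Wick relations among their $2k$-point functions that general states violate. This extension is exactly the hard part of the theorem: the $O_k(n^k)$ bound you need is deduced in the paper from $\vartheta(G(\mathcal{F}^{(n)}_k))=\Theta(n^k)$ (\Cref{conjecture}, due to \cite{linz2024systems}, resolving a conjecture of \cite{hastings2022optimizing}), combined with \Cref{lem:commutation_graph} which gives $\Delta(S)\leq\vartheta(G(S))/|S|$, before applying \Cref{thm:CCHL22}. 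So the paper does not prove this inequality from first principles either; as written, your proposal has a genuine gap at its key step, and repairing it would require either citing the $\vartheta$-function result or finding a genuinely new proof of the uncertainty bound for non-Gaussian states.
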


Establishing \Cref{thm:local_majorana_lowerbd} requires a short detour into \emph{Johnson association schemes}. For a subset $L \subseteq \{0,\dots,q-1\}$, define the \emph{generalized Johnson graph} $G(m,q,L)$ to have vertices corresponding to the subsets of $x \subseteq \{1,\dots,m\}$ of size $|x| = q$, and an edge between any $x$ and $y$ such that $|x \cap y| \notin L$. Notice that the commutation graph of the $k$-body Majorana observables on $n$ fermionic modes is precisely the generalized Johnson graph with $L$ consisting of the \emph{even} integers: $G(\mathcal{F}^{(n)}_k) = G(2n,2k,\{0,2,\dots,2k-2\})$.

In Ref. \cite{linz2024systems}, they show that the Lovasz $\vartheta$-function of the generalized Johnson graph scales like $\vartheta(G(m,q,L)) = \Theta(m^{|L|})$ as $m$ grows for fixed $q$. Seting $L$ equal to the even integers gives the following conclusion for the Lovasz $\vartheta$-function of the commutation graph of degree-$2k$ Majoranas.

\begin{theorem} \label{conjecture}
\emph{(\cite[Theorem 1.2]{linz2024systems})}
For any fixed $k$, $\vartheta(G(\mathcal{F}^{(n)}_k)) = \Theta(n^k)$
as $n$ becomes large.
\end{theorem}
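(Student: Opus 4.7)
The proof strategy is to exploit the rich automorphism structure of $G(\mathcal{F}^{(n)}_k)$ and reduce the semidefinite program defining $\vartheta$ to a small linear program over the Bose--Mesner algebra of the Johnson association scheme $J(2n,2k)$. As observed right before the statement, $G(\mathcal{F}^{(n)}_k) = G(2n, 2k, L)$ with $L = \{0,2,\dots,2k-2\}$, so two vertices $x, y \in \binom{[2n]}{2k}$ are adjacent precisely when $|x \cap y|$ is odd. This graph is invariant under the natural action of $S_{2n}$, which is transitive on ordered pairs of vertices at each fixed Hamming distance. Consequently the SDPs \eqref{eq:theta_primal} and \eqref{eq:theta_dual} admit optimal solutions that are $S_{2n}$-invariant (by averaging any optimum over the group), hence lie in the Bose--Mesner algebra of $J(2n, 2k)$.

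I would then parameterize such invariant solutions in the basis of primitive idempotents $E_0, E_1, \dots, E_{2k}$ of the scheme. Writing the candidate primal solution as $X = \sum_{j=0}^{2k} \mu_j E_j$ with $\mu_j \geq 0$ (which enforces $X \succeq 0$ since the $E_j$ are orthogonal projectors), the normalization $\Tr X = 1$ and the sparsity requirements $X_{xy} = 0$ whenever $|x \cap y|$ is even and nonzero translate, via the dual eigenvalues $Q_{ij}$ of the scheme, into $O(k)$ linear equations in the $2k+1$ unknowns $\mu_j$. The objective $\Tr(JX) = |V(G)| \mu_0$. This reduces $\vartheta(G(\mathcal{F}^{(n)}_k))$ to a tractable $(2k+1)$-variable linear program whose coefficients are rational functions of $n$.

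The asymptotic analysis uses the well-known fact that the Johnson scheme eigenvalues are polynomials in $n$ of explicit degrees, with $P_{ij}$ and $Q_{ij}$ scaling like $n^{\min(i,j)}$ up to combinatorial factors depending on $k$. Tracing the leading order as $n \to \infty$ with $k$ fixed, the LP optimum is dominated by the contribution of the degree-$k$ idempotent $E_k$, yielding a primal feasible point of value $\Theta(n^k)$. For the matching $\Omega(n^k)$ lower bound one can more elementarily exhibit a large independent set without any SDP machinery: fix any partition of $[2n]$ into $n$ disjoint pairs; the $\binom{n}{k}$ many $2k$-subsets that are exact unions of $k$ such pairs have all pairwise intersections of even size and hence form an independent set in $G(\mathcal{F}^{(n)}_k)$, giving $\vartheta \geq \alpha = \Omega(n^k)$ directly.

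The main obstacle is the matching upper bound $\vartheta = O(n^k)$. This requires producing an explicit dual feasible matrix $A = \sum_j \beta_j E_j$ in the Bose--Mesner algebra with $A_{xx} = 1$ and $A_{xy} = 0$ whenever $|x \cap y|$ is odd, such that $\lambda A \succeq J$ with $\lambda = O(n^k)$. The non-edge vanishing conditions become linear equations on the $\beta_j$ involving the eigenvalues $P_{ij}$; the remaining freedom must then be chosen so that the eigenvalues $\lambda \beta_j - |V(G)| \delta_{j0}$ of $\lambda A - J$ are nonnegative across all $j$. Getting the sharp order $n^k$ requires a careful balancing using the closed-form Johnson eigenvalues (expressible in terms of dual Hahn/Eberlein polynomials), and identifying which block is tight; this asymptotic eigenvalue bookkeeping is where the bulk of the technical work lies.
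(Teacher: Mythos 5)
There is a genuine gap, and it concerns exactly the half of the statement that matters. The paper does not prove this theorem at all: it is imported wholesale from \cite[Theorem 1.2]{linz2024systems}, which determines $\vartheta(G(m,q,L))=\Theta(m^{|L|})$ for generalized Johnson graphs. Your proposal correctly identifies the right setting (symmetrize over $S_{2n}$, work in the Bose--Mesner algebra of $J(2n,2k)$, reduce to a constant-size program in the idempotent basis), and your lower bound is complete and correct: the $\binom{n}{k}$ unions of $k$ pairs from a fixed perfect pairing of $[2n]$ pairwise intersect in even sets, hence form an independent set, so $\vartheta\geq\alpha=\Omega(n^k)$. But the upper bound $\vartheta(G(\mathcal{F}^{(n)}_k))=O(n^k)$ is left as a plan --- you explicitly defer the construction of the dual-feasible matrix and the ``asymptotic eigenvalue bookkeeping'' that would certify $\lambda=O(n^k)$. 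That is precisely the nontrivial content of the cited theorem (it is what resolved Conjecture 4.13 of \cite{hastings2022optimizing}), and it is the only direction the paper actually uses downstream: \Cref{thm:local_majorana_lowerbd} needs $\Delta(\mathcal{F}^{(n)}_k)\leq\vartheta/\binom{2n}{2k}=O(n^{-k})$, i.e.\ the upper bound, while the $\Omega(n^k)$ direction you proved only reproduces the paper's side remark that a large commuting family gives $\Delta=\Omega(n^{-k})$. So as it stands the proposal is a reasonable outline plus the easy half, not a proof.

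One smaller but substantive slip: you have the zero-pattern constraints of the two SDPs interchanged. In \eqref{eq:theta_primal} the constraint is $X_{xy}=0$ on \emph{edges}, i.e.\ when $|x\cap y|$ is odd, and in \eqref{eq:theta_dual} it is $A_{xy}=0$ on \emph{non-edges}, i.e.\ when $|x\cap y|$ is even and $x\neq y$; you stated the opposite in both places. Carried through consistently, your program computes $\vartheta$ of the complement graph, which is a different quantity. The symmetrization step itself (invariant PSD matrices are exactly nonnegative combinations of the primitive idempotents $E_j$, so both programs collapse to $(2k+1)$-variable problems with coefficients given by the $P_{ij},Q_{ij}$ eigenvalues) is fine once the constraints are put on the correct intersection parities, but the asymptotic analysis of that reduced program --- identifying the dual coefficients $\beta_j$ and verifying $\lambda A\succeq J$ with $\lambda=O(n^k)$ --- is the work you would need to carry out (or simply cite \cite{linz2024systems} for, as the paper does) before this can be considered a proof.
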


This resolves Conjecture 4.13 of \cite{hastings2022optimizing}, up to the $k$-dependence of the constant factor in $\vartheta(G(\mathcal{F}^{(n)}_k))$.

\begin{proof}[Proof of \Cref{thm:local_majorana_lowerbd}]
\Cref{conjecture} and \Cref{lem:commutation_graph} give
\begin{equation}
\Delta(\mathcal{F}^{(n)}_k) = O(n^k) / \tbinom{2n}{2k} = O(n^{-k}),
\end{equation}
since $|\mathcal{F}^{(n)}_k| = \binom{2n}{2k}$. \Cref{thm:CCHL22} then implies the sample complexity lower bound $\Omega(n^k/\eps^2)$.
\end{proof}

The $\Omega(n^k/\eps^2)$ single-copy sample complexity lower bound in \Cref{thm:local_majorana_lowerbd} matches what is achieved by the single-copy protocols in \cite{bonet2020nearly, jiang2020optimal, wan2022matchgate, huggins2022nearly}, up to a factor of $k\log(n)$, see \Cref{table}. It should also be noted that a matching lower bound of $\Delta(\mathcal{F}^{(n)}_k) = \Omega(n^{-k})$ can be shown by finding a large set of mutually commuting $k$-body fermionic observables.

\subsection{Learning 1-body fermionic observables} \label{sec:1-RDM}

In the case of $1$-body observables there is a simple and practical algorithm for coloring induced subgraphs of $G(\mathcal{F}^{(n)}_1)$ with bounded clique number.

\begin{lemma} \label{lem:quadratic_majorana}
Let $G'$ be any induced subgraph of the commutation graph $G(\mathcal{F}^{(n)}_1)$ of $1$-body fermionic observables, and let $\omega$ be the size of the largest clique in $G'$. There is a classical algorithm with runtime $O(n^2\omega)$ that computes a coloring of $G'$ with at most $\omega + 1$ colors.
\end{lemma}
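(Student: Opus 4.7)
The plan is to show that the commutation graph $G(\mathcal{F}^{(n)}_1)$ of $1$-body fermionic observables is isomorphic to the line graph $L(K_{2n})$ of the complete graph on $2n$ vertices, and then deduce the coloring statement from Vizing's theorem on edge colorings. The vertices of $G(\mathcal{F}^{(n)}_1)$ can be labelled by unordered pairs $\{a,b\} \subseteq [2n]$ corresponding to the operator $c_ac_b$ (up to a phase). A short calculation with the Majorana anticommutation relations \cref{eq:majoranas2} shows that $c_ac_b$ and $c_{a'}c_{b'}$ anticommute if and only if the sets $\{a,b\}$ and $\{a',b'\}$ share exactly one element, which is precisely the definition of the edge set of $L(K_{2n})$ (equivalently, the Johnson graph $J(2n,2)$).

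Given this identification, any induced subgraph $G' \subseteq G(\mathcal{F}^{(n)}_1)$ corresponds to a subset $E$ of the edges of $K_{2n}$, and if $H$ is the simple graph on $[2n]$ with edge set $E$, then $G' = L(H)$. Now observe the standard relation between cliques in a line graph and the underlying graph: a vertex $v$ of $H$ of degree $d$ gives rise to a clique of size $d$ in $L(H)$ (the $d$ incident edges pairwise share $v$), so $\Delta(H) \leq \omega$. (If $\omega \leq 2$ then the graph has maximum degree $\leq 2$ and the coloring problem is trivial; we therefore focus on the generic case $\omega \geq 3$.)

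By Vizing's theorem, $H$ admits a proper edge coloring with at most $\Delta(H)+1 \leq \omega+1$ colors. Since a proper edge coloring of $H$ is, by definition, a proper vertex coloring of $L(H) = G'$, this yields the desired coloring of $G'$ with at most $\omega+1$ colors. Vizing's theorem has a constructive proof via augmenting fans and Kempe chains that runs in time $O(|V(H)|\cdot|E(H)|)$; since $|V(H)|\leq 2n$ and $|E(H)| = |V(G')| \leq |V(H)|\cdot\Delta(H)/2 = O(n\omega)$, the total runtime is $O(n^2\omega)$ as claimed.

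The only step requiring care is the identification with a line graph together with the clique bound $\Delta(H) \leq \omega$; once these are in place, the result is an immediate consequence of Vizing's theorem, which is conveniently tight in the right regime. No novel graph-theoretic input is needed, and the algorithmic runtime falls out directly from the standard constructive proof of Vizing's theorem.
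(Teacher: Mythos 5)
Your proposal is correct and follows essentially the same route as the paper: identify $G'$ as the line graph of a graph $H$ on the $2n$ Majorana modes, bound $\Delta(H)\leq\omega$ via the clique formed by edges at a common vertex, and edge-color $H$ with $\Delta(H)+1\leq\omega+1$ colors in $O(|V(H)|\cdot|E(H)|)=O(n^2\omega)$ time. The paper invokes the Misra--Gries algorithm, which is exactly the constructive version of Vizing's theorem you appeal to, so the two arguments coincide.
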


\begin{proof}

Let $G'=G(S)$ be the subgraph of $G(\mathcal{F}^{(n)}_1)$ induced by some subset $S\subseteq \mathcal{F}^{(n)}_1$ of $1$-body fermionic observables. Let $\omega$ be the maximum size of a clique in $G'$. 

Consider an auxiliary graph $H(S)$ defined as follows. This graph $H(S)$ has $2n$ vertices labeled by the Majorana fermion operators $\{c_1,c_2, \dots, c_{2n}\}$. For each observable $i c_a c_b \in S$, we include an edge $\{c_a, c_b\}$ in $H(S)$. Two elements of $S$ commute if and only if they do not share any Majorana fermion operators. For example, $i c_1 c_2$ anticommutes with $i c_2 c_3$ but commutes with $i c_3 c_4$. Thus a commuting set of $1$-body fermionic observables corresponds to a matching in $H(S)$, and partitioning $S$ into commuting sets corresponds to an edge coloring of $H(S)$.

Now observe that our graph of interest $G'$ is the \emph{line graph} of $H(S)$. An edge coloring of $H(S)$ gives a vertex coloring of $G'$. The edge coloring algorithm of Misra and Gries \cite{misra1992constructive} computes an edge-coloring of a graph $H$ using no more than $\deg(H) + 1$ colors, where $\deg(H)$ is the maximum degree of any vertex in $H$. But the edges connecting to a single vertex in our graph $H(S)$ form a clique in $G'$, so the degree of $H(S)$ is at most $\omega$. The runtime of the edge coloring algorithm is asymptotically upper bounded by the number of vertices times the number of edges, which in our case is $O(n\cdot n\omega)$. 
\end{proof}

It is also possible to directly vertex color the given graph $G'$ using Brooks' theorem, which states that the chromatic number of a graph is at most its maximum degree $+1$, since a high degree vertex also yields a large clique. This argument yields a slightly looser bound of $2\omega$.

\subsection{Learning \texorpdfstring{$k$}{k}-body fermionic observables}
\label{sec:kbody}
We now prove \Cref{lem:chibindfermion_intro}, restated below.

\chibindfermion*

Recall the definition of Majorana monomials from \cref{eq:monomials}. In the following we shall use the commutation relations of these operators which we now derive.
Using \Cref{eq:majoranas} we get
\begin{equation}
c_j \Gamma(y)=(-1)^{|y|+y_j}\Gamma(y)c_j \qquad y\in \{0,1\}^{2n}\quad j\in [2n].
\end{equation}
Applying the above for all indices $j$ in the support of $x\in \{0,1\}^{2n}$ gives
\begin{equation}
\Gamma(x)\Gamma(y)= (-1)^{|x||y|+y\cdot x} \Gamma(y)\Gamma(x) \qquad \qquad x,y\in \{0,1\}^{2n}.
\label{eq:comm}
\end{equation}

\begin{claim}
Suppose $x,y\in \{0,1\}^{2n}$ are such that $|x|, |y|$ are either both even, or both odd. If  $x_j=y_j=0$ for some $j\in [2n]$ then
\begin{equation}
[c_j \Gamma(x), c_j \Gamma(y)]=0 \quad \text{ if and only if } \quad [\Gamma(x), \Gamma(y)]=0.
\end{equation}
\label{claim:1}
\end{claim}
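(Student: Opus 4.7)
The plan is to reduce the claimed equivalence to a direct computation using the anticommutation $c_j c_i = -c_i c_j$ for $i \neq j$, combined with the fact that $c_j^2 = \mathbbm{1}$.

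First I would establish the key auxiliary identity: whenever $x_j = 0$,
\begin{equation}
c_j \Gamma(x) = (-1)^{|x|} \Gamma(x) c_j.
\end{equation}
This follows because $\Gamma(x)$ is, up to an overall phase, a product of $|x|$ Majorana operators $c_i$ with $i \neq j$ (since $x_j=0$), and each such $c_i$ anticommutes with $c_j$. Passing $c_j$ through $\Gamma(x)$ therefore contributes $|x|$ sign flips. An analogous identity holds for $\Gamma(y)$ since $y_j = 0$.

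Next I would use this identity twice to compute the two products in the commutator of $c_j\Gamma(x)$ and $c_j\Gamma(y)$. Specifically,
\begin{align}
\bigl(c_j\Gamma(x)\bigr)\bigl(c_j\Gamma(y)\bigr) &= (-1)^{|x|} \Gamma(x) c_j^2 \Gamma(y) = (-1)^{|x|} \Gamma(x)\Gamma(y), \\
\bigl(c_j\Gamma(y)\bigr)\bigl(c_j\Gamma(x)\bigr) &= (-1)^{|y|} \Gamma(y)\Gamma(x),
\end{align}
where we used $c_j^2 = \mathbbm{1}$. The hypothesis that $|x|$ and $|y|$ have the same parity then gives $(-1)^{|x|} = (-1)^{|y|}$, and so
\begin{equation}
[c_j\Gamma(x),\,c_j\Gamma(y)] = (-1)^{|x|}\bigl(\Gamma(x)\Gamma(y) - \Gamma(y)\Gamma(x)\bigr) = (-1)^{|x|}[\Gamma(x),\Gamma(y)].
\end{equation}
Since $(-1)^{|x|} \neq 0$, the two commutators vanish simultaneously, which is precisely the asserted equivalence.

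There is essentially no obstacle here — the parity assumption is exactly what is needed to make the two sign factors match, and the condition $x_j = y_j = 0$ is exactly what allows $c_j$ to pass cleanly through each $\Gamma$ without changing the support. One could alternatively phrase the argument in terms of \eqref{eq:comm} applied to $\Gamma(x+e_j)$ and $\Gamma(y+e_j)$ (tracking the change in $|x|\cdot|y| + x\cdot y$ modulo $2$ when each weight increases by $1$ but the overlap is unchanged since $x_j=y_j=0$), but the direct manipulation above is cleaner and avoids bookkeeping of the Jordan–Wigner-type phase in the definition of $\Gamma$.
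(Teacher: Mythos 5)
Your proof is correct and is essentially the paper's argument: the paper simply says the claim ``follows directly from'' the commutation relation $\Gamma(x)\Gamma(y)=(-1)^{|x||y|+x\cdot y}\Gamma(y)\Gamma(x)$, and your sign computation (pass $c_j$ through $\Gamma(x)$ and $\Gamma(y)$, use $c_j^2=\mathbbm{1}$, and invoke the equal-parity hypothesis to match the two sign factors) is just a clean way of carrying out that same elementary bookkeeping, with the $\Gamma(x\oplus e_j)$ phrasing you mention at the end being exactly the paper's intended route.
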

\begin{proof}
Follows directly from \cref{eq:comm}.
\end{proof}

In our proof, it will be helpful to consider Majorana monomials of both odd and even degree. Write
\begin{equation}
    \mathcal{M}^{(n)}_r=\{\Gamma(x): |x|=r, \; x\in \{0,1\}^{2n}\}
\end{equation}
for the set of degree-$r$ Majorana monomials, so that $\mathcal{M}^{(n)}_{2k} = \mathcal{F}^{(n)}_k$ are the $k$-body fermionic observables of interest.

\begin{proof}
The proof is by induction in $r$.  Our inductive hypothesis is that, for any induced subgraph $H$ of the commutation graph $G(\mathcal{M}^{(n)}_r)$ of degree-$r$ Majorana monomials with largest clique of size at most $\omega$, we can sample from a fractional coloring of $H$ with $f_r(\omega)$ colors using a classical algorithm with runtime $t_r(n)$ such that $t_r(n)=\mathrm{poly}(n)$ for any constant $r=O(1)$. Here $f_r(\omega)$ is a polynomial that we determine below. Ultimately we are interested in the even values of $r$ and we have $p_k(\omega)= f_{2k}(\omega)$ where $p_k$ is the polynomial in the statement of \Cref{lem:chibindfermion_intro}.

The base case is $r=2$. We saw in \Cref{lem:quadratic_majorana} that if $S\subseteq \mathcal{M}^{(n)}_2$ and its commutation graph $G(S)$ has no cliques larger than $\omega$, then $G(S)$ can be colored with $\omega + 1$ colors using a classical algorithm with runtime $O(n^2\omega)=O(\mathrm{poly}(n))$ since $\omega\leq |\mathcal{M}_2^{(n)}|=O(n^2)$. Thus $f_2(\omega) = \omega + 1$ and we can efficiently sample from the coloring by selecting a color uniformly at random.

In the following two claims we handle the induction step separately for the odd and even values of $r$.

\begin{claim}\label{claim:odd}
Suppose $r\geq 3$ is odd. Then
\begin{equation}
f_r(\omega)=r\omega f_{r-1}(\omega).
\end{equation}
\end{claim}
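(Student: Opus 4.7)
The plan is to build a fractional coloring of $G(S)$ for any $S \subseteq \mathcal{M}^{(n)}_r$ with clique number at most $\omega$, by reducing to the inductive hypothesis at degree $r-1$ via \Cref{claim:1}, combined with a ``localization'' step that exploits the odd parity of $r$ to control the number of Majorana indices we must randomize over. First I would greedily compute a maximal clique $C = \{\Gamma(x^{(1)}), \ldots, \Gamma(x^{(m)})\}$ in $G(S)$, which has $m \leq \omega$ vertices, and let $J = \bigcup_{i=1}^m \supp(x^{(i)}) \subseteq [2n]$. Since each $x^{(i)}$ has support of size $r$, this gives $|J| \leq r\omega$.

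The crucial structural observation is the following. Since $r$ is odd, \cref{eq:comm} implies that any two monomials $\Gamma(y), \Gamma(z) \in \mathcal{M}^{(n)}_r$ commute if and only if $|\supp(y) \cap \supp(z)|$ is odd, and in particular nonempty. So for any $\Gamma(y) \in S$: either $\Gamma(y) \in C$, in which case $\supp(y) \subseteq J$ trivially; or by maximality of $C$, the vertex $\Gamma(y)$ must commute with some $\Gamma(x^{(i)}) \in C$, forcing $\supp(y) \cap \supp(x^{(i)}) \neq \emptyset$. In either case $\supp(y) \cap J \neq \emptyset$.

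The sampling procedure is then: draw $j \in J$ uniformly at random; form $S_j = \{\Gamma(x) \in S : x_j = 1\}$ and the associated set $S'_j = \{\Gamma(x - e_j) : \Gamma(x) \in S_j\} \subseteq \mathcal{M}^{(n)}_{r-1}$ of degree-$(r-1)$ monomials, each of which has $j$-th coordinate $0$ and common parity $r-1$. \Cref{claim:1} then tells us that the map $\Gamma(x) \mapsto \Gamma(x - e_j)$ is a commutation-graph isomorphism between $G(S_j)$ (an induced subgraph of $G(S)$) and $G(S'_j)$, so $G(S'_j)$ has clique number at most $\omega$. By induction I sample an independent set $I' \subseteq S'_j$ from a fractional coloring of $G(S'_j)$ of size $f_{r-1}(\omega)$, and lift it back to $I \subseteq S_j \subseteq S$, which is independent in $G(S)$.

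For the size bound, for each fixed $\Gamma(y) \in S$ we would have
\begin{equation}
\Pr[\Gamma(y) \in I] \;=\; \sum_{j \in \supp(y) \cap J} \frac{1}{|J|} \cdot \Pr[\Gamma(y - e_j) \in I' \mid j \text{ chosen}] \;\geq\; \frac{1}{|J|} \cdot \frac{1}{f_{r-1}(\omega)} \;\geq\; \frac{1}{r\omega\, f_{r-1}(\omega)},
\end{equation}
using the key observation to ensure the sum is nonempty and $|J| \leq r\omega$. The sampling runtime on top of the recursive call is dominated by the greedy maximal-clique computation on $|S| \leq \binom{2n}{r}$ vertices, which is $\mathrm{poly}(n)$ for constant $r$, preserving $t_r(n) = \mathrm{poly}(n)$. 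The main conceptual hurdle is exactly the odd-parity observation: for even $r$, commuting degree-$r$ monomials can have disjoint supports, so a maximal clique's support union will not in general hit every vertex of $S$, which is why \Cref{claim:odd} is restricted to odd $r$ and the even case must be handled by a separate, more delicate argument.
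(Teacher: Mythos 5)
Your proposal is correct and follows essentially the same route as the paper's proof: take a maximal pairwise-anticommuting set (clique) of size at most $\omega$, use the odd-parity consequence of \cref{eq:comm} to show every vertex's support meets the at-most-$r\omega$ indices covered by that clique, pick such an index uniformly at random, and strip it via \Cref{claim:1} to reduce to a degree-$(r-1)$ instance handled by the inductive hypothesis. The only (inessential) difference is that the paper partitions the vertex set into disjoint classes $S_i$ indexed by the smallest covered index, whereas you allow the sets $S_j$ to overlap and lower-bound the inclusion probability by a single term, which yields the same bound $f_r(\omega)=r\omega f_{r-1}(\omega)$.
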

\begin{proof}
Let $r\geq 3$ be odd, let $G'$ be an induced subgraph of $G(\mathcal{M}^{(n)}_r)$, and suppose the largest clique in $G'$ has size at most $\omega$. Let $V\subseteq \mathcal{M}^{(n)}_r$ be the vertex set of $G'$. Let $\Gamma(x^1),\Gamma(x^2), \ldots, \Gamma(x^L)\in V$ be a maximal set of pairwise anticommuting operators in $V$. We can construct such a set by starting at any vertex of $G'$ and greedily adding vertices until this is no longer possible. By definition, this set is a clique in $G'$ and therefore $L\leq \omega$. 

Let $I\subseteq [2n]$ be the set of all indices of Majoranas that appear in these operators. Since each has weight $r$, we have
\begin{equation}
|I|\leq r\omega.
\end{equation} 
For convenience let us relabel the Majorana fermion operators so that $I=\{1,2,\ldots, T\}$ where $T\leq r\omega$. Then define
\begin{equation}
S_i=\{\Gamma(z)\in V: z_i=1, \text{ and } z_j=0 \text{ for all } 1\leq j\leq i-1\}.
\label{eq:si}
\end{equation}

We now show that $V$ can be partitioned as
\begin{equation}
V=S_1\sqcup S_2\sqcup \ldots \sqcup S_T.
\label{eq:partition}
\end{equation}
By definition, the sets on the RHS are disjoint and each contained in $V$ so all we need to show is that for any $\Gamma(y)\in V$ there is some $i\in T$ such that $\Gamma(y)\in S_i$. So let $\Gamma(y)\in V$ be given. Since the set $\Gamma(x^1),\Gamma(x^2), \ldots, \Gamma(x^L)\in V$ is a maximal set of pairwise anticommuting operators, we must have 
\begin{equation}
[\Gamma(y), \Gamma(x^j)]=0 \quad \text{ for some $j\in [L]$}.
\label{eq:cxj}
\end{equation}
Since $|y|=|x|=r$ are both odd we see from \cref{eq:comm} that this implies $y\cdot x^j\neq 0$. Therefore $y_i=1$ for some index $i\in \{1,2,\ldots, T\}$. Let $\ell\in [T]$ be the smallest index such that $y_\ell=1$. Then $\Gamma(y)\in S_\ell$ and we have shown $V$ can be partitioned as in \cref{eq:partition}.

Now for each $1\leq i\leq T$ consider the commutation graph $G(S_i)$. Each operator $\Gamma(z)\in S_i$ has $z_i=1$. From \Cref{claim:1}, the commutation graph of $S_i$ is therefore unchanged if we flip $z_i\leftarrow 0$ for all $\Gamma(z)\in S_i$. Define
\begin{equation}
S'_{i}=\{\Gamma(z\oplus \hat{e}_i): \Gamma(z)\in S_i\}.
\end{equation}
We have shown that the commutation graph $G(S_i)$ coincides with the commutation graph $G(S'_i)$, where the set $S_i'\subseteq \mathcal{M}^{(n)}_{r-1}$ only contains degree-$(r-1)$ Majorana monomials. Moreover, $G(V)$ does not contain any clique larger than $\omega$, so neither does its induced subgraph $G(S_i)$. Therefore $G(S'_i)$ does not contain any clique of size greater than $\omega$.

By our inductive hypothesis, for each $1\leq i\leq T$, we can sample efficiently from a fractional coloring of $G(S'_i)=G(S_i)$ with size at most $f_{r-1}(\omega)$. Now let us define a fractional coloring of $V$ in which we choose an index $i\in [T]$ uniformly at random and then sample an independent set in $S_i$ according to the fractional coloring of $G(S_i)$. Note that any independent set in $G(S_i)$ is also an independent set in $G(V)$, so this defines a valid fractional coloring. Moreover, the probability of any vertex $u\in G(V)$ being sampled is equal to the probability that we choose $i$ such that $u\in S_i$ (this probability is $1/T$) times the probability that the sampled independent set of $S_i$ contains $u$ (this is at least $1/f_{r-1}(\omega)$ by our inductive hypothesis). This procedure samples a fractional coloring of size 
\begin{equation}
T\cdot f_{r-1}(\omega)\leq r\omega \cdot f_{r-1}(\omega) 
\end{equation}
as claimed. To sample from the fractional coloring, we need to first construct a maximal set of pairwise anticommuting operators $\Gamma(x^{1}), \ldots, \Gamma(x^L)$, from which we can define the set $I$ and the partition \cref{eq:partition}. As noted above, this step can be performed by starting at an arbitrary vertex $\Gamma(x^1)$ of $G'$ and then growing the set one operator at a time until this is not longer possible. This step has $\mathrm{poly}(n)$ runtime because the graph has at most $|\mathcal{M}^{(n)}_r|=\binom{2n}{r}=\mathrm{poly}(n)$ vertices. The next step is to choose an index $1\leq i\leq T$ at random and sample an independent set of $S_i$ uniformly at random using a fractional coloring of $G(S_i)$ which by our inductive hypothesis can be done in $\mathrm{poly}(n)$ time.
\end{proof}

\begin{claim} \label{claim:even}
Suppose $r\geq 4$ is even. Then 
\begin{equation}
f_r(\omega)=\left(f_{r-1}(\omega)\right)^{r}.
\end{equation}
\end{claim}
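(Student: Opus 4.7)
The plan is to prove Claim \ref{claim:even} by constructing a product of $r$ fractional colorings on $V$, each of size $f_{r-1}(\omega)$, whose intersection yields a fractional coloring of $G'$ of size $(f_{r-1}(\omega))^r$. For each $\Gamma(z) \in V$ with sorted support $a_1(z) < \cdots < a_r(z)$ and each $j \in [r]$, consider the shadow map $\Psi_j(\Gamma(z)) = \Gamma(z - \hat{e}_{a_j(z)}) \in \mathcal{M}^{(n)}_{r-1}$. I will define $r$ graphs $H_1, \ldots, H_r$ on $V$ so that (i) $E(G') \subseteq \bigcup_{j=1}^r E(H_j)$ and (ii) each $H_j$ admits an efficiently sampleable fractional coloring of size $f_{r-1}(\omega)$. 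Sampling independent sets $I_j$ independently from each $H_j$'s coloring and returning $I_1 \cap \cdots \cap I_r$ is then an independent set in $\bigcup_j H_j \supseteq G'$, delivering the desired bound.

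For property (i), I would prove the combinatorial lemma: for any anticommuting pair $\Gamma(y), \Gamma(z) \in V$, there is $j \in [r]$ such that the shadows $\Psi_j(\Gamma(y))$ and $\Psi_j(\Gamma(z))$ anticommute in $G(\mathcal{M}^{(n)}_{r-1})$. Let $I = \text{supp}(y) \cap \text{supp}(z)$, $A_j = \mathbbm{1}[a_j(y) \in \text{supp}(z)]$, and $B_j = \mathbbm{1}[a_j(z) \in \text{supp}(y)]$. A direct calculation using \cref{eq:comm} shows that the shadow overlap equals $|I| - A_j - B_j + \mathbbm{1}[a_j(y) = a_j(z)]$, and since $|I|$ is odd (anticommuting even-weight operators have odd overlap) and odd-weight Majorana monomials anticommute iff their overlap is even, the shadows anticommute iff $a_j(y) = a_j(z)$ or $A_j + B_j = 1$. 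If no such $j$ existed, then $A_j = B_j$ and $a_j(y) \neq a_j(z)$ for all $j$, so the assignment $a_{\sigma(j)}(z) = a_j(y)$ defines a fixed-point-free permutation $\sigma$ of $\{j : A_j = 1\}$. Taking $j^{*} = \min\{j : A_j = 1\}$ and comparing $\sigma(j^{*}) > j^{*}$ against $\sigma^{-1}(j^{*}) > j^{*}$ using the sorted orderings of $\text{supp}(y)$ and $\text{supp}(z)$ forces $a_{j^{*}}(y) > a_{j^{*}}(z)$ and simultaneously $a_{j^{*}}(z) > a_{j^{*}}(y)$, a contradiction.

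The hard part will be property (ii). A direct pullback from a fractional coloring of the shadow image $G(\Psi_j(V))$ does not work because the clique number of $\Psi_j(V)$ in $G(\mathcal{M}^{(n)}_{r-1})$ can be arbitrarily large relative to $\omega$: any family of pairwise disjoint $r$-subsets in $V$ is a $V$-independent set (originals commute) whose shadows pairwise anticommute (disjoint odd-weight monomials). To circumvent this, I plan to exploit the disjoint partition $V = \bigsqcup_{i \in [2n]} V_i^{(j)}$ with $V_i^{(j)} = \{\Gamma(z) \in V : a_j(z) = i\}$. Within each part, \Cref{claim:1} provides a commutation-preserving bijection to an induced subgraph of $G(\mathcal{M}^{(n)}_{r-1})$ with clique number at most $\omega$, so by the inductive hypothesis each part admits an efficiently sampleable fractional coloring of size $f_{r-1}(\omega)$, and different parts may share a common palette since they are vertex-disjoint. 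The remaining subtlety is to define $H_j$ and its combined fractional coloring so that both the within-part ``match'' contributions and the between-part ``mismatch with exactly one of $a_j(y), a_j(z)$ in $I$'' contributions to the key lemma are separated; I expect this will proceed through a case analysis of the four shadow-overlap subcases together with a coordinated assignment of palette colors across parts that correctly respects the between-part edges.

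Efficient sampling follows from the inductive hypothesis applied to each of the at most $2n$ parts, giving $\mathrm{poly}(n)$ runtime per $H_j$, hence $r \cdot \mathrm{poly}(n) = \mathrm{poly}(n)$ total for constant $r$, matching the efficiency claim.
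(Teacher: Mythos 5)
Your covering lemma (property (i)) is correct: if two even-weight monomials anticommute their supports overlap in an odd number of indices, and your sorted-position argument does show that some position $j$ has either $a_j(y)=a_j(z)$ or exactly one of $a_j(y)\in\supp(z)$, $a_j(z)\in\supp(y)$, so the $j$-th shadows anticommute. The genuine gap is property (ii), which you yourself flag but do not resolve: you never define the graphs $H_j$, and no natural candidate works. Taking $H_j$ to be ``shadows at position $j$ anticommute'' covers $E(G')$ but, as you note, has unbounded clique number (pairwise disjoint supports give commuting originals whose odd-weight shadows pairwise anticommute), so the inductive hypothesis gives no handle on $\chi_f(H_j)$. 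Taking $H_j$ to be only the within-part edges of the partition $V=\bigsqcup_i V_i^{(j)}$ does give $\chi_f(H_j)\le f_{r-1}(\omega)$, but then the union $\bigcup_j H_j$ fails to cover $E(G')$: e.g.\ for $r=4$, $\supp(y)=\{1,2,3,4\}$ and $\supp(z)=\{4,5,6,7\}$ anticommute yet $a_j(y)\ne a_j(z)$ for every $j$, so this edge is between-part for all $j$. Hence some between-part edges must be kept in $H_j$, and the ``coordinated assignment of palette colors across parts'' you invoke to respect them is exactly the missing construction. It is not clear it exists in the form you describe: a fractional coloring is a distribution over independent sets, not a palette-based coloring, and the per-part distributions obtained from the inductive hypothesis (applied to induced subgraphs of $G(\mathcal{M}^{(n)}_{r-1})$ via \Cref{claim:1}) come with no mechanism for aligning their samples across different parts so that between-part edges are never violated.

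The difficulty evaporates if you index the decomposition by the Majorana mode itself rather than by sorted position, which is the paper's route. Set $W_i=\{\Gamma(x)\in V: x_i=1\}$ for $i\in[2n]$; by \Cref{claim:1} each $G(W_i)$ is isomorphic to an induced subgraph of $G(\mathcal{M}^{(n)}_{r-1})$ with clique number at most $\omega$, so the inductive hypothesis gives an efficiently sampleable fractional coloring $q_i$ of size $f_{r-1}(\omega)$. Since anticommuting even-weight monomials have odd, hence nonempty, overlap, \emph{every} edge of $G'$ lies inside some single $W_i$ --- no between-part edges ever arise. Sampling $I_i\sim q_i$ independently and keeping $\Omega=\{\Gamma(x)\in V:\Gamma(x)\in I_i \text{ for all } i \text{ with } x_i=1\}$ yields an independent set of $G'$ (disjoint even-weight pairs commute outright; overlapping pairs are handled inside the shared $W_i$), and because each vertex lies in exactly $r$ of the $2n$ sets $W_i$, it survives with probability at least $f_{r-1}(\omega)^{-r}$, giving the claimed bound. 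Your shadow-map construction could only be salvaged by supplying the coordination step, and the natural repair is essentially to switch to this mode-indexed decomposition.
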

\begin{proof}
Let $r\geq 4$ be even, let $G'$ be an induced subgraph of $G(\mathcal{M}^{(n)}_r)$, and let $\omega$ be the size of the largest clique in $G'$. Let $V\subseteq \mathcal{M}^{(n)}_r$ be the vertex set of $G'$. For each $1\leq i\leq 2n$, define 
\begin{equation}
W_i=\{\Gamma(x)\in V: x_i=1\}.
\end{equation}
Note any clique in $G(W_i)$ has size at most $\omega$. From \Cref{claim:1}, the commutation graph of $W_i$ is unchanged if we flip $z_i\leftarrow 0$ for all $\Gamma(z)\in W_i$. Define
\begin{equation}
W'_{i}=\{\Gamma(z\oplus \hat{e}_i): \Gamma(z)\in W_i\}.
\end{equation}

Then $G(W_i)=G(W_i')$ and any clique in $G(W'_i)$ has size at most $\omega$. Moreover, $W_i'$ is a set of degree-$(r-1)$ Majorana monomials, and by our inductive hypothesis we can efficiently sample a coloring of $G(W'_i)$ with size at most $f_{r-1}(\omega)$.  Let $q_i$ be the corresponding fractional coloring of $W_i$, for each $1\leq i\leq 2n$.

Now let us randomly sample a set $\Omega \subseteq V$ as follows. First, select independent sets $I_1\sim q_1, I_2\sim q_2,\ldots ,I_{2n}\sim q_{2n}$ according to the fractional colorings described above. Then let
\begin{equation}
\Omega=\{\Gamma(x)\in V: \Gamma(x)\in I_{j} \text{ for all $j\in [2n]$ such that } x_j=1\}
\end{equation}
Since $|x|=r$ for all $\Gamma(x)\in V$, we have
\begin{equation}
\mathrm{Pr}(\Gamma(x)\in \Omega)\geq  \left(\frac{1}{f_{r-1}(\omega)}\right)^r \qquad \Gamma(x)\in V
\label{eq:colors}
\end{equation}
Now let us show that $\Omega$ is an independent set in $V$; this implies that the above procedure samples from a fractional coloring of V with $(f_{r-1}(\omega))^r$ colors. So suppose $\Gamma(x), \Gamma(y)\in \Omega$. We will show that $\Gamma(x), \Gamma(y)$ commute; equivalently, there is no edge between the corresponding vertices in $G'$. First suppose $x\cap y=\emptyset$. In this case, since $|x|=|y|=r$ are both even, it follows directly that $[\Gamma(x), \Gamma(y)]=0$. If on the other hand $x_j=y_j=1$ for some $j\in [2n]$, Then $\Gamma(x), \Gamma(y) \in W_j\cap \Omega$. But $W_j\cap \Omega \subseteq I_j$ is an independent set in the commutation graph of $W_j$, and therefore $[\Gamma(x),\Gamma(y)]=0$.

The algorithm we have described above only involves identifying the subsets of vertices $W_i$ for $1\leq i\leq 2n$ (which can be done in linear time in the number of vertices of $G'$, which is upper bounded polynomially in $n$), and then using $O(n)$ calls to the subroutine for sampling fractional colorings of commutation graphs of degree-$(r-1)$ Majorana monomials with clique number at most $\omega$. Since this subroutine has $\mathrm{poly}(n)$ runtime by our inductive hypothesis, so does the algorithm described above. 
\end{proof}

Putting together Claims \ref{claim:odd}, \ref{claim:even}, and \Cref{lem:quadratic_majorana} we see that the sizes $f_r(\omega)$ of the fractional colorings are polynomial functions of $\omega$ with degree that depends only on $r$. For even values of $r$ the polynomials $p_k(\omega)=f_{2k}(\omega)$ satisfy the recurrence
\begin{equation}
p_1(\omega) = \omega + 1 \quad \text{ and } \quad p_{k}(\omega) = ((2k - 1) \omega p_{k-1}(\omega))^{2k} \quad \quad k \geq 2.
\end{equation}
For the $2$-body and $3$-body fermionic observables we get
\begin{equation}
p_2(\omega) = O(\omega^8) \quad , \quad p_3(\omega) = O(\omega^{54})
\end{equation}
In general, we have the upper bound
\begin{equation}
p_{k}(\omega) \leq (2k\omega)^{(2k)^{k+1}}.\qedhere
\end{equation}
\end{proof}

\bibliographystyle{alpha}
\bibliography{refs,ryan_ref}

\appendix

\section{Proof of \texorpdfstring{{\Cref{lem:commutation_graph}}}{{Lemma \ref*{lem:commutation_graph}}}} \label{app:commutation_graph}

\commutationgraph*
\begin{proof}
Denote $S = \{P_1,\dots,P_m\}$. Given $\rho$, let
\begin{equation}
a_j = \Tr\left(P_j \rho\right).
\end{equation}
We aim to show $\sum_j a_j^2 \leq \vartheta(G(S))$.

Consider the observable
\begin{equation}
Q = \sum_j a_j P_j.
\end{equation}
We have
\begin{equation}
Q^2 = \sum_{j,l} a_j a_l P_j P_l = \frac{1}{2} \sum_{j,l} a_j a_l \{P_j,P_l\}.
\end{equation}
Note $P_j^2 = \mathbbm{1}$ since they are Hermitian unitaries.

Now take the trace with $\rho$. We get
\begin{equation}
\Tr\left(Q^2 \rho\right) = \sum_{j,l} a_j a_l B_{jl} \leq \lambda_{\max}(B) \sum_j a_j^2,
\end{equation}
where we defined the matrix
\begin{equation}
B_{jl} = \frac{1}{2} \Tr\left(\{P_j,P_l\} \rho\right).
\end{equation}

By positivity of the state $\rho$, we have $\mathrm{Tr}\big(\big(Q - \Tr\left(Q \rho\right) \mathbbm{1}\big)^2 \rho\big) \geq 0$ and therefore
\begin{align}
\Tr\left(Q \rho\right)^2 &\leq \Tr\left(Q^2 \rho\right) \\
\implies \Big(\sum_j a_j^2\Big)^2 &\leq \lambda_{\max}(B) \sum_j a_j^2 \\
\implies \sum_j a_j^2 &\leq \lambda_{\max}(B).
\end{align}

$B$ satisfies $B_{jj} = 1 \ \forall j$ and $B_{jl} = 0$ for all edges $(j,l)$. The latter holds since $(j,l)$ is an edge precisely when $\{P_j,P_l\} = 0$. Positivity of the state $\rho$ implies that $B$ is positive semidefinite, since for any vector $v \in \mathbb{R}^m$
\begin{align}
v^T B v = \Tr\left(\big(\sum_j v_j P_j\big)^2 \rho\right) \geq 0.
\end{align}
Let's now take the supremum of the right-hand-side over all such $B$ to get
\begin{align}
\sum_j a_j^2 \leq \tilde{\vartheta}(G(S)),
\end{align}
where
\begin{align}
\tilde{\vartheta}(G) = \max \ \{  &\lambda_{\max}(B) , \ B \in \mathbb{R}^{m \times m} \nonumber\\
&\text{s.t.} , \ B_{jj} = 1 \ \forall j , \ B_{jl} = 0 \ \forall (j,l) \in E , \ B \succeq 0  \}.
\label{eq:alternativeL}
\end{align}
\Cref{lem:theta_tilde} completes the proof.
\end{proof}

\begin{lemma} \label{lem:theta_tilde}
\emph{(\cite{knuth1993sandwich})}
The function $\tilde{\vartheta}(G)$ from \Cref{eq:alternativeL} satisfies $\tilde{\vartheta}(G) \leq \vartheta(G)$.
\end{lemma}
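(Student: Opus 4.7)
The plan is to prove $\tilde{\vartheta}(G) \leq \vartheta(G)$ by taking an arbitrary $B$ feasible for $\tilde{\vartheta}$, and constructing from it a feasible $X$ for the primal SDP of $\vartheta$ (\cref{eq:theta_primal}) whose objective value equals $\lambda_{\max}(B)$. Since the primal of $\vartheta(G)$ is a maximization, this shows $\vartheta(G) \geq \lambda_{\max}(B)$ for every feasible $B$, and taking the supremum over such $B$ yields the claim.

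The key idea is a diagonal conjugation by the top eigenvector. Let $B$ be feasible for $\tilde\vartheta$, set $\lambda = \lambda_{\max}(B)$, and let $v \in \mathbb{R}^m$ be a unit eigenvector with $Bv = \lambda v$. Define
\begin{equation}
D = \mathrm{diag}(v_1,\ldots,v_m), \qquad X = DBD.
\end{equation}
I would then verify the four conditions in turn:
\begin{enumerate}
    \item Positivity: $X = DBD \succeq 0$ since $D$ is real symmetric and $B \succeq 0$.
    \item Normalization: $\Tr(X) = \sum_j v_j^2 B_{jj} = \sum_j v_j^2 = 1$, using $B_{jj}=1$ and $\|v\|=1$.
    \item Support: for $(j,l)\in E$, $X_{jl} = v_j B_{jl} v_l = 0$ because $B_{jl}=0$.
    \item Objective: $\Tr(JX) = \sum_{j,l} v_j B_{jl} v_l = v^T B v = \lambda$.
\end{enumerate}
Hence $X$ is feasible for \cref{eq:theta_primal} with objective $\lambda = \lambda_{\max}(B)$, proving $\vartheta(G) \geq \lambda_{\max}(B)$.

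There is not really a serious obstacle here; the only thing to notice is that the natural candidate $X = B/\Tr(B) = B/m$ gives objective $\mathbf{1}^T B \mathbf{1}/m$ rather than $\lambda_{\max}(B)$, so a ``straight'' rescaling is too lossy. Replacing the all-ones vector implicit in $\Tr(JX)$ by the top eigenvector via the diagonal conjugation $B \mapsto DBD$ is precisely the adjustment that simultaneously (a) preserves positivity, (b) preserves the zero pattern on edges (because the scaling is entrywise), (c) normalizes the trace to $1$ (because $\|v\|=1$), and (d) recovers $v^T B v = \lambda_{\max}(B)$ as the objective. Once this construction is in hand, the verification is mechanical.
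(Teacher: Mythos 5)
Your proof is correct, and it takes a genuinely different route from the paper. You work directly with the primal SDP defining $\vartheta(G)$ in \cref{eq:theta_primal}: given a feasible $B$ for $\tilde{\vartheta}$ with top unit eigenvector $v$, the diagonal conjugation $X = DBD$ with $D=\mathrm{diag}(v)$ is primal feasible (positivity, trace one, and the edge zero-pattern are all preserved entrywise) and has objective $\Tr(JX)=v^TBv=\lambda_{\max}(B)$, so $\lambda_{\max}(B)\leq\vartheta(G)$ for every feasible $B$. The paper instead argues through the dual \cref{eq:theta_dual}: it takes an optimal dual pair $(\lambda,A)$, forms $U=(\vec 1,\sqrt{\lambda A-J})$ and the rank-one matrices $Y^{(j)}_{ab}=V_{aj}U_{jb}$ built from a square root $B=V^TV$, and deduces $\lambda_{\max}(B)\leq\lambda$ from orthonormality of the $Y^{(j)}/\sqrt{\lambda}$. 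Your argument is more elementary --- no matrix square roots and no orthonormal-system bookkeeping --- and it also sidesteps the paper's implicit appeal to strong duality (the identification of the dual optimum with $\vartheta(G)$, which does hold for this SDP but is not justified in the paper); weak duality is not even needed since you never touch the dual. The paper's dual-based construction is essentially Knuth's, and has the side benefit of being the kind of argument that extends to proving the full equality $\tilde{\vartheta}=\vartheta$, but for the one-sided inequality actually claimed, your route is shorter and self-contained.
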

\begin{proof}
We will use the dual description \Cref{eq:theta_dual}. Let $(\lambda, A)$ achieve the optimal dual value $\lambda = \vartheta(G)$. Define the $m \times (m+1)$ matrix
\begin{equation}
U = (\vec{1}, \sqrt{\lambda A - {J}}),
\end{equation}
where we padded with the all-ones column vector $\vec{1}$ on the left. (Recall ${J}$ denotes the all-ones matrix.)
Let $B$ be any matrix feasible for $\tilde{\vartheta}(G)$. Decompose
\begin{equation}
B = Q^T D Q = V^T V \quad , \quad V = \sqrt{D} Q,
\end{equation}
where $Q$ is orthogonal and $D$ is diagonal with $D_{11} = \lambda_{\max}(B)$. (The entries of $D$ are the eigenvalues of $B$.)
Now consider the collection of $m$ matrices $\{Y^{(j)}\}$ of size $m \times (m+1)$ given by
\begin{equation}
Y^{(j)}_{ab} = V_{aj} U_{jb}.
\end{equation}
We have
\begin{equation}
\Tr\left((Y^{(j)})^T Y^{(l)}\right) = \Big(\sum_a V_{aj} V_{al}\Big) \Big(\sum_b U_{jb} U_{lb}\Big) = \lambda B_{jl} A_{jl}.
\end{equation}
If $j \neq l$, this is zero, since if $(j,l)$ is an edge in $G$ then $B_{jl} = 0$, and if not then $A_{jl} = 0$. If $j = l$, we get $\Tr\left((Y^{(j)})^T Y^{(j)}\right) = \lambda$. Thus $\{Y^{(j)} / \sqrt{\lambda}\}$ are orthonormal when viewed as vectors of dimension $m(m+1)$, and
\begin{equation}
1 \geq \sum_j \big(Y^{(j)}_{11} / \sqrt{\lambda}\big)^2 = \frac{D_{11}}{\lambda} \sum_j Q_{1j}^2 = \frac{D_{11}}{\lambda} 
\implies \ D_{11} \leq \lambda.\qedhere
\end{equation}
\end{proof}

It is in fact true that $\tilde{\vartheta}(G) = \vartheta(G)$, but we only need $\tilde{\vartheta}(G) \leq \vartheta(G)$ for our purposes.

\section{Learning Greens functions}

A quantity of fundamental interest in the chemistry and physics of fermionic system is the Greens function. The $k$-body Greens function is similar to the $k$-RDM except that some of the operators have been evolved forward in time. Accordingly, the Greens function can be used to characterize the response of a fermionic system to external perturbations.
\begin{definition}
The 1-body Greens function of a state $\rho$ with respect to Hamiltonian $H$ is defined as
\begin{equation}
G_{ab}(t) = \Tr\left(i c_a(t) c_b(0) \rho\right),
\end{equation}
where
\begin{equation}
c_a(t) = e^{iHt} c_a e^{-iHt}.
\end{equation}
\end{definition}

Examples of dynamical properties one can compute from the 1-body Greens function but not the 1-RDM include electrical conductivity, magnetic and electric susceptibility, and dynamic structure factor. The time-dependent part of the Greens function is often essential for characterizing interesting phases of matter and the presence of certain quasiparticles. Many impurity model schemes for converging finite quantum simulations of condensed fermionic systems towards their thermodynamic limit also require the time-dependent part of the Greens function. Such methods include dynamical mean-field theory (DMFT) \cite{DMFTReview}
and self-energy embedding theory \cite{SEET}. Using quantum computers as impurity model solvers in this context has been explored in papers such as \cite{Bauer2016,Voorhis2023}.

In this work we do not give a particularly efficient method for computing Greens functions at non-zero times (in the limit of zero time, the one-body Greens function is the 1-RDM). However, we are able to show that one can compute time-derivatives of the Greens function at $t=0$ for sparse Hamiltonians. One can then use these time derivatives to reconstruct the Greens function using a Taylor expansion. Prior work developing methods for DMFT has used this same approach to reconstructing and embedding Greens functions \cite{Booth2018}.

The value $G_{ab}(0)$ of the Greens function at time zero is simply the 1-RDM, which was tackled in \Cref{lem:quadratic_majorana}.
The $q^{\text{th}}$ derivative at time zero is given by
\begin{equation}
G_{ab}^{(q)}(0) = \Tr\left(i \mathcal{L}_H^q(c_a) c_b \rho\right),
\end{equation}
where $\mathcal{L}_H(X) = i[H,X] = i(HX - XH)$ denotes the \emph{Lie bracket}, or \emph{commutator}. $\mathcal{L}_H^q$ denotes the $t$-fold commutator; for example $\mathcal{L}_H^2(X) = -[H,[H,X]]$.

We will require the Hamiltonian $H$ to be sparse, according to the following definition.

\begin{definition}
A Hamiltonian $H$ is \emph{$s$-sparse} if each Majorana mode appears in at most $s$ terms.
\end{definition}

One should view $\{G_{ab}^{(q)}(0)\}_{a,b}$ as a $n \times n$ matrix for each $q$, and we would like to learn each entry to precision $\eps$. For a given $t$, the naive strategy of measuring one-at-a-time requires $O(n^2 / \eps^2)$ copies of $\rho$. Let Hamiltonian $H$ be $k$-body and $s$-sparse. In this section, we give a quantum algorithm which exploits entangled measurements on $\rho \otimes \rho$ to achieve a sample complexity of $\tilde{O}(\log{n} / \eps^4)$, depending only logarithmically on the system size $n$.

\begin{theorem} \label{thm:greens_fn}
Suppose $\rho$ is an unknown state on $n$ fermion modes, and $H$ a $k$-body and $s$-sparse Hamiltonian. There is an algorithm using entangled measurements on two copies $\rho \otimes \rho$ at a time which can estimate all $\{G_{ab}^{(q)}(0)\}_{a,b}$ to precision $\eps$ with high probability using $\tilde{O}((2skq)^{5q} \log{n} / \eps^4)$ total copies of $\rho$. Moreover, the algorithm runs in time $\poly{n}$.
\end{theorem}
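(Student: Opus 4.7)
The plan is to express the $q$th time-derivative of the Greens function as a short linear combination of expectation values of Majorana monomials of bounded degree, and then learn all required monomial expectations jointly using the two-copy shadow tomography framework from earlier in the paper.

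First, I would expand the iterated commutator $i \mathcal{L}_H^q(c_a) c_b$ explicitly. Writing $H = \sum_P h_P P$ as a linear combination of Majorana monomials of degree at most $2k$ with $\|h\|_\infty \leq 1$ (WLOG after rescaling), each application of $\mathcal{L}_H$ replaces a Majorana monomial $M$ of degree $d$ by the sum $\sum_{P:\{P,M\}=0} \pm 2 h_P (PM)$, where $PM$ is again a Majorana monomial up to a phase. By $s$-sparsity, at most $s \cdot d$ terms of $H$ share any Majorana mode with $M$, and only these can fail to commute with $M$. Iterating this bound, and using that the degree grows by at most $2k$ per step, yields
\begin{equation*}
i \mathcal{L}_H^q(c_a) c_b = \sum_{M \in T_{ab}} \alpha^{(ab)}_M M,
\end{equation*}
where each $M$ has degree at most $2kq + 2$, $|T_{ab}| \leq (2skq)^q$, and (by a careful induction tracking the per-step coefficient-magnification factor of $2$) the $\ell_1$-norm $B := \max_{a,b} \sum_M |\alpha^{(ab)}_M|$ is at most $(2skq)^q$ up to constants. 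All coefficients $\alpha^{(ab)}_M$ can be computed classically in time $\poly(n, (2skq)^q)$.

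Second, I would collect the union $S^* := \bigcup_{a,b} T_{ab}$ of all Majorana monomials appearing in any expansion; by construction $S^*$ consists of monomials of degree at most $2kq + 2$ and $|S^*| \leq 4n^2 (2skq)^q$. Applying the sample-efficient two-copy protocol of \Cref{thm:2copy} to $S^*$ with target precision $\eta = \epsilon / B$ produces estimates $\widehat{\tau}_M$ of each $\Tr(M \rho)$ to within $\eta$ simultaneously with high probability, using $O(\log|S^*| / \eta^4) = \tilde O((2skq)^{5q} \log n / \epsilon^4)$ two-copy measurements. The plug-in estimator $\widehat{G}_{ab}^{(q)}(0) := \sum_{M \in T_{ab}} \alpha^{(ab)}_M \widehat{\tau}_M$ then satisfies $|\widehat{G}^{(q)}_{ab}(0) - G^{(q)}_{ab}(0)| \leq B \eta = \epsilon$ for all $a, b$ simultaneously by the triangle inequality. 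To obtain the $\poly(n)$ runtime claim (since \Cref{thm:2copy}'s mimicking-state subroutine is exponential-time), I would instead invoke \Cref{thm:majorana_informal} applied to the $k' = kq + 1$ body fermionic observables in $S^*$; for constant $k, q, s$ its polynomial overhead is absorbed into the $(2skq)^{5q}$ prefactor, and the classical postprocessing is polynomial in $|S^*| = \poly(n)$.

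The main obstacle will be the $\ell_1$-norm bookkeeping needed to justify the $(2skq)^q$ bound on $B$: a naive inductive bound that treats each of the $q$ iterations as independently multiplying the coefficient norm by $2sd_{\max}$ suffices up to polylogarithmic slack in the exponent, but a sharper constant requires controlling coalescence between different commutator paths that land on the same monomial. A secondary technical point is that \Cref{thm:majorana_informal} is stated for a single locality $k$, so one either applies it at $k' = kq + 1$ (paying the polynomial overhead, which is still absorbed into the stated exponent) or equivalently runs the Bell-sampling and sign-learning subroutines of \Cref{thm:2copy} directly and argues that the sign-learning step is affordable here because $|S^*|$ is polynomial.
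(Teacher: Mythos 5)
Your first half (expanding $i\,\mathcal{L}_H^q(c_a)c_b$ into Majorana monomials, counting terms via $s$-sparsity and degree growth, and reducing via the triangle inequality to learning all monomials in $S^*=\bigcup_{a,b}T_{ab}$ to precision $\eps$ divided by the expansion size) matches the paper's setup. The gap is in the learning step: neither of your two options delivers the theorem as stated, which requires \emph{both} the $\tilde{O}((2skq)^{5q}\log n/\eps^4)$ sample bound \emph{and} $\poly(n)$ runtime. Invoking \Cref{thm:2copy} gives the right sample count but its mimicking-state computation is a search over $n$-qubit density matrices; its cost is governed by the Hilbert-space dimension $2^n$ (brute force, or matrix multiplicative weights as in \Cref{thm:mmw_informal}, which is $\poly(2^n)$), not by $|S^*|$, so your claim that the sign-learning step ``is affordable because $|S^*|$ is polynomial'' is simply false. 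Your alternative, applying \Cref{thm:majorana_informal} at locality $k'=kq+1$ with precision $\eta=\eps/B$, does give $\poly(n)$ time, but its sample complexity is $\log n\cdot p_{k'}(4/\eta^2)/\eta^2$ with $p_{k'}(\omega)\le(2k'\omega)^{(2k')^{k'+1}}$: the overhead is a polynomial in $1/\eps$ of degree growing like $(2kq)^{kq}$, not a prefactor, so it cannot be ``absorbed into $(2skq)^{5q}$'' and the claimed $\eps^{-4}$ scaling is lost. (There is also a smaller mismatch: $S^*$ contains monomials of several different even degrees, while \Cref{lem:chibindfermion_intro} is stated per fixed degree.)

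The missing idea is the one the paper's proof actually supplies: a problem-specific chi-binding result for the commutation graph of the set $\mathcal{S}=\{\Gamma c_b\}$ (\Cref{lem:greens_fn_coloring}), proved by bounding the degree of any induced subgraph with clique number $\omega$ by $O(s^q(2k)^{q+2}q^2(q!)\,\omega)$ --- via a case analysis on whether a neighbour's $\Gamma$ overlaps the fixed vertex $\Gamma_0c_{b_0}$, together with a co-degree argument and the clique bound --- followed by greedy coloring. Feeding this into the generic two-copy template (\Cref{lem:twocopycolor} together with the clique bound of \Cref{lem:clique_bound_informal} applied at precision $\eps/|S^{(q)}_{H,a}|$) yields simultaneously $\poly(n)$ runtime and a sample complexity whose $\eps$-dependence stays at $\eps^{-4}$, with only the $(2skq)^{O(q)}$ combinatorial factors multiplying it. Without some such structural coloring statement for this particular family of observables, the generic protocols you cite cannot reach the stated bound.
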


\begin{proof}[Proof of \Cref{thm:greens_fn}]
Let's examine the operators $\{i \mathcal{L}_H^q(c_a) c_b\}_{a,b}$ more closely by expanding
\begin{align}
\mathcal{L}_H^q(c_a) &= \sum_{\Gamma \in S^{(q)}_{H,a}} h^{(q)}_{a,\Gamma} \Gamma \\
i \mathcal{L}_H^q(c_a) c_b &= \sum_{\Gamma \in S^{(q)}_{H,a}} i h^{(q)}_{a,\Gamma} \Gamma c_b \label{eq:greens_expansion}
\end{align}
$S^{(q)}_{H,a}$ denotes the Majorana monomials on which $\mathcal{L}_H^q(c_a)$ has support. Note that all $\Gamma \in S^{(q)}_{H,a}$ have odd degree. Let's assume the original Hamiltonian $H$ was normalized so that the coefficients in the Majorana basis have absolute value at most 1; this implies all $|h^{(q)}_{a,\Gamma}| \leq 1$.

The following lemma makes a crucial observation that the number of terms from $H$ which survive in the expansion of $\mathcal{L}_H^q(c_a)$ is bounded independent of the system size $n$. For example, the terms which survive in $[H,c_a]$ are those which act on the fermion associated to Majorana mode $c_a$, of which there are at most $s$.

\begin{lemma} \label{lem:num_terms}
$|S^{(q)}_{H,a}| \leq s^q (2k)^{q-1} (q-1)!$.
\end{lemma}
\begin{proof}[Proof of \Cref{lem:num_terms}]
The proof is a short combinatorial calculation. The degree of the operators in $S^{(q)}_{H,a}$ are upper bounded by $(2k-2)q + 1$.
This is because we increase the degree by $(2k-2)$ each time we take the Lie bracket, and initially the degree is 1.
Using $s$-sparsity of $H$, we can write a recursion upper bounding $|S^{(q)}_{H,a}|$:
\begin{equation}
|S^{(q)}_{H,a}| \leq |S^{(q-1)}_{H,a}| \cdot s \cdot ((2k-2)(q-1) + 1).
\end{equation}
Using $|S^{(0)}_{H,a}| = 1$, we get
\begin{align}
|S^{(q)}_{H,a}| &\leq s^q ((2k-2)(q-1) + 1) ((2k-2)(q-2) + 1) \dots (2k-1) \\
&\leq s^q (2k)^{q-1} (q-1)!
\end{align}
\end{proof}

The goal is to estimate $\Tr\left(i \mathcal{L}_H^q(c_a) c_b \rho\right)$ to precision $\eps$ for all $(a,b)$. By a triangle inequality on \Cref{eq:greens_expansion}, it is sufficient to estimate $\Tr\left(\Gamma c_b \rho\right)$ to precision $\eps / |S^{(q)}_{H,a}|$ for every $\Gamma \in S^{(q)}_{H,a}$.

Thus we focus on learning the set of Majorana operators
\begin{equation}
\mathcal{S} = \{\Gamma c_b : \Gamma \in S^{(q)}_{H,a} , j = 1,\dots,m\}.
\end{equation}
The number of Majorana operators we are required to learn could be as large as $|S^{(q)}_{H,a}| m$. Thus we do \emph{not} want to measure these one at a time; rather we would like to parallelize the learning of these operators by using entangled measurements. To this end, we establish the following chi-binding result.

\begin{lemma} \label{lem:greens_fn_coloring}
Let $G'$ be any induced subgraph of the commutation graph $G(\mathcal{S})$, and let $\omega$ be the size of the maximal clique in $G'$. Then we can efficiently find a coloring of $G'$ with at most $O(s^q (2k)^{q+2} q^2 (q!) \omega)$ colors.
\end{lemma}
\begin{proof}[Proof of \Cref{lem:greens_fn_coloring}]
We will show that the degree of $G'$ is bounded by $O(s^q (2k)^{q+2} q^2 (q!) \omega)$. Then a greedy coloring algorithm is sufficient to establish the result.

We begin with an initial lemma.

\medskip
\begin{lemma} \label{lem:b_bound}
Fix $\Gamma \in S^{(q)}_{H,a}$. The number of indices $b$ for which $\Gamma c_b$ is a vertex in $G'$ is upper bounded by $2 \omega$.
\end{lemma}

\begin{proof}[Proof of \Cref{lem:b_bound}]
This argument follows the same idea as the degree bound in the proof of \Cref{lem:quadratic_majorana}. Let $B = \{b : \Gamma c_b \in G'\}$. We seek to bound $|B| = O(\omega)$. Let's split $B$ into two subsets:
\begin{align}
B \setminus \Gamma &= \{b : \Gamma c_b \in G', c_b \notin \Gamma\} \\
B \cap \Gamma &= \{b : \Gamma c_b \in G', c_b \in \Gamma\}
\end{align}
(The notation $c_b \in \Gamma$ indicates that $c_b$ appears as a factor in $\Gamma$.) The operators $\{\Gamma c_b\}_{b \in B \setminus \Gamma}$ form a mutually anticommuting set, and likewise for the operators $\{\Gamma c_b\}_{b \in B \cap \Gamma}$. Using the bound on clique size $\omega$, we get $|B \setminus \Gamma| = \omega$ and $|B \cap \Gamma| = \omega$ completing the proof.
\end{proof}
\medskip

Now fix a single operator $\Gamma_0 c_{b_0}$ in $G'$. We will upper bound the degree of $\Gamma_0 c_{b_0}$ in $G'$. Suppose vertex $\Gamma c_b$ forms an edge with $\Gamma_0 c_{b_0}$ in $G'$. This means $\Gamma c_b$ anticommutes with $\Gamma_0 c_{b_0}$. In order for $\Gamma c_b$ and $\Gamma_0 c_{b_0}$ to anticommute, they must overlap on an odd number of Majorana modes.

\medskip
{\bf Case 1.} $\Gamma$ overlaps with $\Gamma_0 c_{b_0}$ on at least one Majorana mode. By a similar combinatorial argument as the proof of \Cref{lem:num_terms}, the number of $\Gamma \in \bigcup_a S^{(q)}_{H,a}$ overlapping with any given Majorana mode is upper bounded by
\begin{equation}
s^q ((2k-2)q + 2) ((2k-2)(q-1) + 2) \dots (2k) \ \leq \ s^q (2k)^q q!.
\end{equation}
Multiplying by the number of single Majorana factors in $\Gamma_0 c_{b_0}$, we get that there are at most
\begin{equation}
s^q (2k)^q q! \cdot ((2k-2)q + 2) \ \leq \ s^q (2k)^{q+1} q (q!)
\end{equation}
$\Gamma \in S^{(q)}_{H,a}$ which overlap with $\Gamma_0 c_{b_0}$ on at least one mode. With $\Gamma$ fixed, there are at most $2 \omega$ choices for $b$ such that $\Gamma c_b \in G'$ by \Cref{lem:b_bound}. Thus there are overall at most
\begin{equation}
2 s^q (2k)^{q+1} q (q!) \omega
\end{equation}
vertices $\Gamma c_b$ forming an edge with $\Gamma_0 c_{b_0}$ such that $\Gamma$ overlaps with $\Gamma_0 c_{b_0}$ on at least one mode.

\medskip
{\bf Case 2.} $\Gamma$ is disjoint from $\Gamma_0 c_{b_0}$. Since $\Gamma c_b$ and $\Gamma_0 c_{b_0}$ anticommute, necessarily $c_b$ must appear as a factor in $\Gamma_0 c_{b_0}$; we can write $c_b \in \Gamma_0 c_{b_0}$.

Fix $b$ such that $c_b \in \Gamma_0 c_{b_0}$, and consider a new graph $G^{(b)}$ whose vertices are the $\Gamma$ such that $\Gamma c_b$ forms an edge with $\Gamma_0 c_{b_0}$ and $\Gamma$ is disjoint from $\Gamma_0 c_{b_0}$. Include an edge between $\Gamma$ and $\Gamma'$ in $G^{(b)}$ if the $\Gamma$ and $\Gamma'$ anticommute. Note that the operators $\Gamma$ and $\Gamma'$ have odd degree as products of Majoranas.

The graph $G^{(b)}$ is very dense, and is close to the complete graph. In fact, the \emph{co-degree} of $G^{(b)}$ is bounded; that is, the degree of the \emph{complement} of $G^{(b)}$. This is because disjoint odd-degree Majoranas anticommute, so in order for $\Gamma$ and $\Gamma'$ \emph{not} to form an edge, the operators $\Gamma$ and $\Gamma'$ must overlap. To examine the co-degree of $G^{(b)}$, we can perform a similar combinatorial calculation as in Case 1 to get an upper bound
\begin{equation}
\codeg(G^{(b)}) \leq s^q (2k)^{q+1} q (q!).
\end{equation}

The co-degree bound means that we can partition $G^{(b)}$ into cliques using at most $\codeg(G^{(b)})+1$ cliques; this corresponds to a greedy coloring of the complement of $G^{(b)}$. To each clique we can apply the assumption on the clique number $\omega$ to get a bound on the overall size of $G^{(b)}$:
\begin{equation}
|G^{(b)}| \leq (\codeg(G^{(b)}) + 1) \cdot \omega \leq O(s^q (2k)^{q+1} q (q!) \omega).
\end{equation}
With at most $(2k-2)q + 2$ choices for $b$, we conclude that there are at most
\begin{equation}
O(s^q (2k)^{q+2} q^2 (q!) \omega)
\end{equation}
vertices $\Gamma c_b$ forming an edge with $\Gamma_0 c_{b_0}$ such that $\Gamma$ is disjoint from $\Gamma_0 c_{b_0}$.
\end{proof}

Combining \Cref{lem:greens_fn_coloring} with \Cref{lem:twocopycolor} completes the proof of \Cref{thm:greens_fn}. At last invoking \Cref{lem:num_terms}, we can bound the final sample complexity as follows:
\begin{align}
N &= \tilde{O}(|S^{(q)}_{H,a}|^4 \log{m} / \eps^4) + O(s^q (2k)^{q+2} q^2 (q!) \cdot |S^{(q)}_{H,a}|^2 / \eps^2) \cdot \tilde{O}(|S^{(q)}_{H,a}|^2 \log{m} / \eps^2) \\
&= \tilde{O}(s^q (2k)^{q+2} q^2 (q!) \cdot |S^{(q)}_{H,a}|^4 \cdot \log{m} / \eps^4) \\
&= \tilde{O}(s^{5q} (2k)^{5q-2} q^3 (q-1)!^5 \log{m} / \eps^4) \\
&= \tilde{O}((2skq)^{5q} \log{m} / \eps^4).
\end{align}
\end{proof}

\end{document}